\newcommand{\E}{\mathbb{E}}
\renewcommand{\vec}{\mathrm{vec}}
\newcommand{\mK}{\mathsf K}
\newcommand{\one}{\mathbbm{1}}
\newcommand{\Pb}{\mathbb P}
\newcommand{\Sb}{\mathbb S}
\newcommand{\R}{\mathbb R}
\newcommand{\mD}{\mathsf D}
\newcommand{\mG}{\mathsf G}
\newcommand{\Q}{\mathcal Q}
\newcommand{\nbf}{\noindent\textbf}
\newcommand{\nit}{\noindent\textit}
\newtheorem{theorem}{Theorem}
\newtheorem{proposition}{Proposition}
\newtheorem{lemma}{Lemma}
\newtheorem{assumption}{Assumption}
\theoremstyle{definition}
\theoremstyle{remark}
\newtheorem{remark}{Remark}
\newcommand{\removelatexerror}{\let\@latex@error\@gobble}
\setlist[itemize]{leftmargin=*}
\title{\bf Distributed Reinforcement Learning for Decentralized Linear Quadratic Control: A Derivative-Free Policy Optimization Approach%
}
\date{}
\author{Yingying Li, Yujie Tang, Runyu Zhang, and Na Li\thanks{The authors are with the John A. Paulson School of Engineering and Applied Sciences, Harvard University. Emails: \texttt{yingyingli@g.harvard.edu}, \texttt{yujietang@seas.harvard.edu}, \texttt{runyuzhang@fas.harvard.edu}, \texttt{nali@seas.harvard.edu}.
\newline\indent The first two authors contribute equally.}%
}
\begin{document}

\maketitle

\begin{abstract}
This paper considers a distributed reinforcement learning problem for decentralized linear quadratic control with partial state observations and local costs. We propose a Zero-Order Distributed Policy Optimization algorithm (ZODPO) that learns linear local controllers in a distributed fashion, leveraging the ideas of policy gradient, zero-order optimization and consensus algorithms. In ZODPO, each agent estimates the global cost by consensus, and then conducts local policy gradient in parallel based on  zero-order gradient estimation. ZODPO only requires limited communication and storage even in large-scale systems. Further, we investigate the nonasymptotic performance of ZODPO and show that the sample complexity to approach a stationary point is polynomial with the  error tolerance's inverse and the problem dimensions, demonstrating the scalability of ZODPO. We also show that the controllers generated throughout ZODPO are stabilizing controllers with high probability. Lastly, we numerically test ZODPO on  multi-zone HVAC systems.

\vspace{5pt}
\noindent{\bf Keywords: } Distributed reinforcement learning, linear quadratic regulator, zero-order optimization
\end{abstract}

\section{Introduction}
Reinforcement learning (RL) has emerged as a promising tool for controller design for dynamical systems, especially when the system model is unknown or complex, and has wide applications in, e.g., robotics \cite{riedmiller2009reinforcement}, games \cite{silver2017mastering},
 manufacturing \cite{wang2005application}, autonomous driving \cite{shah2018airsim}.
However, theoretical performance guarantees of RL are still under-developed across a wide range of problems, limiting the application of RL to real-world systems. Recently, there have been exciting theoretical results on learning-based control for (centralized) linear quadratic (LQ) control problems \cite{dean2017sample,fazel2018global,ouyang2017learning}.
LQ control is one of the most well-studied optimal control problems, which considers optimal state feedback control for a linear dynamical system such that a quadratic cost on the states and control inputs is minimized over a finite or infinite horizon
\cite{lewis2012optimal}.

Encouraged by the recent success of learning-based centralized LQ control, this paper aims to extend the results and develop scalable learning algorithms for decentralized LQ control. In decentralized control, the global system is controlled by a group of individual agents with limited communication, each of which observes only a partial state of the global system \cite{bakule2008decentralized}. Decentralized LQ control has many applications, including transportation \cite{bazzan2009opportunities}, power grids \cite{pipattanasomporn2009multi}, robotics \cite{cao1997cooperative}, smart buildings \cite{zhang2016decentralized}, etc. It is worth mentioning that partial observations and limited communication place major challenges on finding optimal decentralized controllers, even when the global system model is known \cite{witsenhausen1968counterexample,rotkowitz2005characterization}. 

{Specifically, we consider the following decentralized LQ control setting.  Suppose a linear dynamical system, with a global state $x(t)\in\mathbb{R}^n$ and a global control action $u(t)$, is controlled by a group of agents. The global control action is composed of local control actions: $
u(t)=\left[u_1(t)^\top, \ldots,u_N(t)^\top\right]^\top$, where $u_i(t)$ is the control input of agent $i$. At time $t$, each agent $i$ directly observes a partial state $x_{\mathcal{I}_i}(t)$ and a quadratic local cost $c_i(t)$ that could depend on the global state and action. The dynamical system model is assumed to be unknown, and the agents can only communicate with their neighbors via a communication network.  The goal is to design a cooperative distributed learning scheme to find local control policies for the agents to minimize the global cost that is averaged both among all agents and across an infinite horizon. The local control policies are limited to those that only use local observations.}


\subsection{Our contributions}

{We propose a Zero-Order Distributed Policy Optimization algorithm (ZODPO) for the decentralized LQ control problem defined above. ZODPO only requires limited communication over the network and limited storage of local policies, thus being applicable for large-scale systems.
Roughly, in ZODPO, each agent updates its local control policy using estimate of the partial gradient of the global objective with respect to its local policy. The partial gradient estimation leverages zero-order optimization techniques, which only requires cost estimation of a perturbed version of the current policies. To ensure distributed learning/estimation, we design an approximate sampling method to generate policy perturbations under limited communication among agents; we also develop a consensus-based algorithm to estimate the infinite-horizon global cost by conducting  the spatial averaging (of all agents) and the temporal averaging (of infinite horizon) at the same time. 
 
Theoretically, we provide non-asymptotic performance guarantees of ZODPO.
For technical purposes, we consider   static linear  policies, i.e. $u_i(t)=K_i x_{\mathcal I_i}(t)$ for a matrix $K_i$ for  agent $i$, though ZODPO can incorporate more general policies. 

Specifically, we show that, to approach some stationary point with error tolerance $\epsilon$, the required number of samples is $O(n_K^3\max(n,N)\epsilon^{-4})$, where $n_K$ is the dimension of the policy parameter, $N$ is the number of agents and $n$ is the dimension of the state. The polynomial dependence on the problem dimensions indicates the scalability of ZODPO. To the best of our knowledge, this is the first sample complexity result for distributed learning algorithms for the decentralized LQ control considered in this paper. In addition, we  prove that all the policies generated and implemented by ZODPO are stabilizing with high probability, guaranteeing the safety during the learning process.

To establish the sample complexity, compared with the centralized LQR learning setting, we need to bound additional error terms caused by the approximate perturbation sampling and the cost estimation via spatial-temporal averaging. We also point out that existing literature on zero-order-based LQR learning assumes bounded process noises or no noises to guarantee stability \cite{fazel2018global,malik2018derivative}, while this paper allows unbounded Gaussian process noises. To guarantee stability under the unbounded noises, we introduce a truncation step to the gradient estimation to ensure that the estimated gradients are bounded. 
We also explicitly bound the effects of the truncation step when proving the sample complexity.

Numerically, we test  ZODPO  on multi-zone HVAC systems to demonstrate the optimality and safety of the controllers generated by ZODPO.}

\subsection{Related work}\label{subsec:related_work}
There have been numerous studies on related topics including learning-based control, decentralized control, multi-agent reinforcement learning, etc., which are briefly reviewed below.

\begin{enumerate}[label=\alph*),leftmargin=0pt,itemindent=24pt,itemsep=3pt,listparindent=12pt]
\item {\it Learning-based LQ control:} Controller design without (accurate) model information has been studied in the fields of adaptive control \cite{aastrom2013adaptive} and extremum-seeking control \cite{ariyur2003real} for a long time, but most papers focus on stability and asymptotic performance. Recently, much progress has been made on algorithm design and nonasymptotic analysis for learning-based centralized (single-agent) LQ control with full observability, e.g., model-free schemes \cite{fazel2018global,malik2018derivative,yang2019provably}, identification-based controller design \cite{dean2017sample,mania2019certainty}, Thompson sampling \cite{ouyang2017learning}, etc.; and with partial observability \cite{oymak2019non,mania2019certainty}. 
As for learning-based decentralized (multi-agent) LQ control, most studies either adopt a centralized learning scheme \cite{bu2019lqr} or still focus on asymptotic analysis \cite{abouheaf2014multi,zhang2016data,zhang2019online}. Though,   \cite{gagrani2018thompson} proposes a distributed learning algorithm with a
nonasymptotic guarantee, the  algorithm requires  agents to store and
update the model of the whole system, which is prohibitive for large-scale systems.

Our algorithm design and analysis are related to policy gradient for centralized LQ control \cite{fazel2018global,bu2019lqr,malik2018derivative}. Though policy gradient can reach the global optimum in the centralized setting because of the gradient dominance property \cite{fazel2018global}, it does not necessarily hold for decentralized LQ control \cite{feng2019exponential}, and thus we only focus on reaching stationary points as most other papers did in nonconvex optimization \cite{ghadimi2013stochastic}.

\item {\it Decentralized control:}
Even with model information, decentralized control is very challenging. For example, the optimal controller for general decentralized LQ problems may be nonlinear \cite{witsenhausen1968counterexample}, and the computation of such optimal controllers mostly remains unsolved. Even for the special  cases with linear optimal controllers, e.g., the quadratic invariance cases, one usually needs to optimize over an infinite dimensional space \cite{rotkowitz2005characterization}. For tractability, many papers, including this one, consider finite dimensional linear policy spaces and study suboptimal controller design  \cite{maartensson2009gradient,al2011suboptimal}.

\item {\it{Multi-agent reinforcement learning:}}
There are various settings for multi-agent reinforcement learning (MARL),
and our problem is similar to the cooperative setting with partial observability, also known as Dec-POMDP \cite{bernstein2002complexity}. Several MARL algorithms have been developed for Dec-POMDP, including centralized learning decentralized execution approaches, e.g. \cite{omidshafiei2017deep}, and decentralized learning decentralized execution approaches, e.g. \cite{peshkin2000learning,foerster2018counterfactual}. Our proposed algorithm   can be viewed as a decentralized learning decentralized execution approach. 

In addition, most 
cooperative MARL papers for Dec-POMDP assume global cost (reward) signals for agents \cite{peshkin2000learning,omidshafiei2017deep,foerster2018counterfactual}.
However, this paper considers that agents only receive local costs and aim to minimize the averaged costs of all agents. In this sense, our setting is similar to that in \cite{zhang2018fully}, but \cite{zhang2018fully} assumes  global state and global action signals.




\item {\it Policy gradient approaches:}
Policy gradient and its variants are popular  in both RL 
and MARL. 
Various gradient estimation schemes have been proposed, e.g., REINFORCE \cite{williams1992simple}, policy gradient theorem \cite{sutton2000policy}, deterministic policy gradient theorem \cite{silver2014deterministic}, zero-order gradient estimation \cite{fazel2018global}, etc. This paper adopts the zero-order gradient estimation, which has been employed for learning centralized LQ control \cite{fazel2018global,malik2018derivative}.

\item {\it Zero-order optimization:}
It aims to solve optimization without gradients by, e.g., estimating gradients based on function values \cite{flaxman2005online,shamir2013complexity}. This paper adopts the gradient estimator in \cite{flaxman2005online}. {However, due to the distributed setting and communication constraints, we cannot sample policy perturbations exactly as in \cite{flaxman2005online},  since the global objective value is not revealed directly but has to be learned/estimated. Besides, we have to ensure the controllers' stability during the learning procedure, which is an additional requirement not considered in the optimization literature \cite{flaxman2005online}.}


\end{enumerate}

\nit{Notations: }
Let $\|\cdot \|$ denote the $\ell_2$ norm. Let $\|M\|_F$ and $\text{tr}(M)$ denote the Frobenious norm and the trace of a matrix $M$. $M_1\preceq M_2$ means $M_2-M_1$ is positive semidefinite. For a matrix $M$, $\vec(M)$ denotes the vectorization and notice that $\|\vec(M)\|=\|M\|_F$. We will  frequently use the notation:
$$
\vec\big((M_i)_{i=1}^N\big)
=\begin{bmatrix}
\vec(M_1) \\ \vdots \\ \vec(M_N)
\end{bmatrix},
$$
where $M_1,\ldots,M_N$ are arbitrary matrices. We use $\one$ to denote the vector with all one entries, and $I_p \in\R^{p \times p}$ to denote the identity matrix. The unit sphere $\{x\in\mathbb{R}^p: \|x\|=1\}$ is denoted by $\mathbb{S}_p$, and $\mathrm{Uni}(\mathbb{S}_p)$ denotes the uniform distribution on $\mathbb{S}_p$. For any $x\in\mathbb{R}^p$ and any subset $S\subset\mathbb{R}^p$, we denote $x+S\coloneqq \{x+y:y\in S\}$.
The indicator function of a random event $S$ will be denoted by $\mathsf{1}_{S}$ such that $\mathsf{1}_S=1$ when the event $S$ occurs and $\mathsf{1}_S=0$ when $S$ does not occur. 


\section{Problem Formulation}\label{sec:formulation}


Suppose there are $N$ agents jointly controlling a discrete-time linear system of the form
\begin{equation}\label{eq:LTI_system}
    x(t+1) = Ax(t)+Bu(t)+w(t), \quad
    t=0,1, 2, \ldots,
\end{equation}
where $x(t) \in \R^n$ denotes the state vector, $u(t)\in \R^m$ denotes the joint control input, and $w(t)\in \R^n$ denotes the random disturbance at time $t$. We assume $w(0),w(1),\ldots$ are i.i.d. from the Gaussian distribution $\mathcal{N}(0,\Sigma_w)$  for some positive definite matrix $\Sigma_w$. Each agent $i$ is associated with a local control input $u_i(t) \in \R^{m_i}$, which constitutes the global control input $
u(t)=\left[u_1(t)^\top, \ldots,u_N(t)^\top\right]^\top \in \R^n$.

We consider the case where each agent $i$ only observes a partial state, denoted by  $x_{\mathcal I_i}(t)\in \R^{n_i}$, at each time $t$, where $\mathcal I_i$ is a fixed subset of $\{1, \ldots, n\}$ and $x_{\mathcal I_i}(t)$ denotes the subvector of $x(t)$ with indices in $\mathcal I_i$.\footnote{$\mathcal I_i$ and $\mathcal I_{i'}$ may overlap. Our results can be extended to more general observations, e.g., $y_i(t)=C_ix(t)$.}
The admissible local control policies  are limited to the ones that only use the historical local observations. As a starting point, this paper only considers static linear policies that use the current observation, i.e., $u_i(t)=K_i x_{\mathcal{I}_i}(t)$.\footnote{The framework and algorithm can be extended to more general policy classes, but analysis is left as future work.}
For notational simplicity, we define
\begin{equation}
\mK \coloneqq \vec\big((K_i)_{i=1}^N\big)
\in\mathbb{R}^{n_K},
\qquad n_K\coloneqq\sum\nolimits_{i=1}^N n_i m_i.
\end{equation}
It is straightforward to see that the global control policy is also a static linear policy on the current state. We use $\mathcal M(\mK)$ to denote the global control gain, i.e., $u(t)=\mathcal M(\mK) x(t)$. Note that $\mathcal M(\mK)$ is often sparse in network control applications. Figure~\ref{fig:setting_diagram} gives an illustrative example of the our control setup.

\begin{figure}[ht]
\begin{subfigure}{\linewidth}
\centering
\includegraphics[width=.54\textwidth]{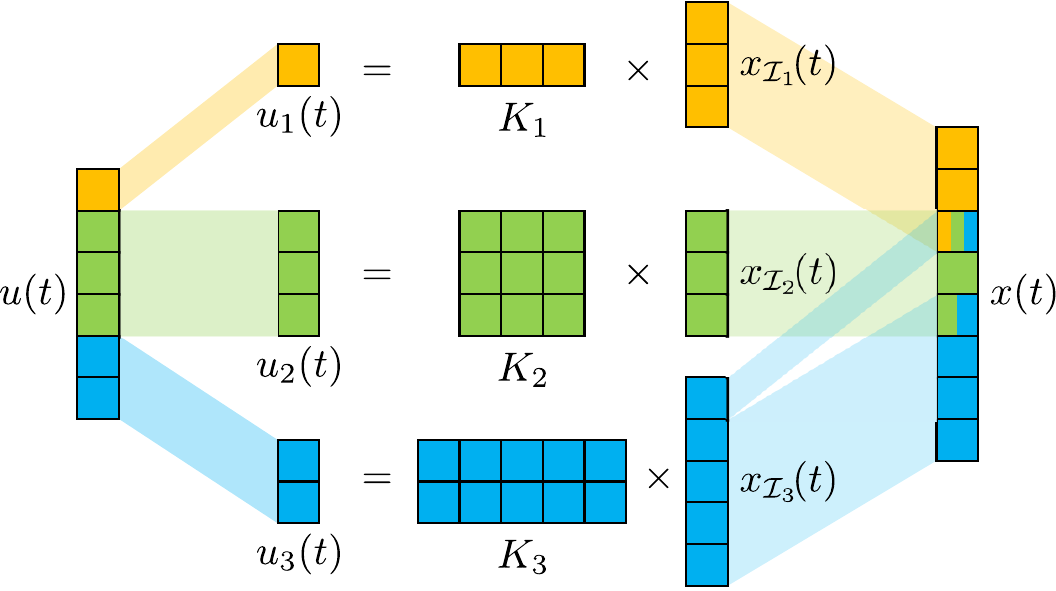}
\end{subfigure}
\begin{subfigure}{\linewidth}
\vspace{8pt}
\centering
\includegraphics[width=.38475\textwidth]{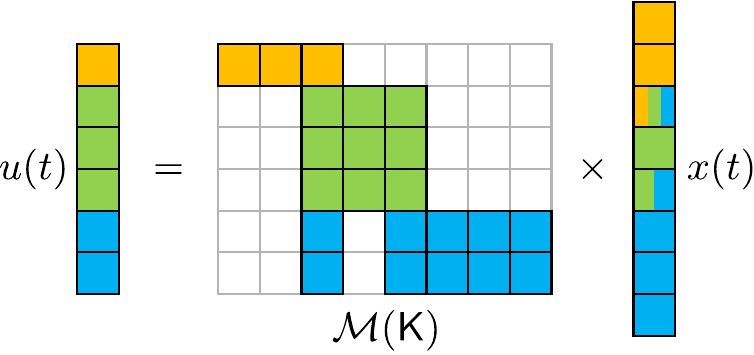}
\end{subfigure}
\caption{An illustrative diagram for $N=3$ agents, where $x(t)\in\mathbb{R}^8$, $u(t)\in\mathbb{R}^6$, and $\mathcal I_1=\{1,2,3\}, \mathcal I_2=\{3,4,5\}, \mathcal I_3=\{3,5,6,7,8\}$. The top figure illustrates the local control inputs,  local controllers, and local observations; and the bottom figure provides a global viewpoint  of the resulting controller $\mathcal M(\mK)$.}
\label{fig:setting_diagram}
\end{figure}

At each time step $t$, agent $i$ receives a quadratic local stage cost $c_i(t)$ given by
$$
c_i(t)=x(t)^\top Q_i x(t)+u(t)^\top R_i u(t),
$$
which is allowed to depend on the global state $x(t)$ and control $u(t)$.
The goal is to find a control policy that minimizes the infinite-horizon average cost among all agents, that is,
\begin{equation}\label{equ: min J(K)}
\begin{aligned}
\min_{\mK} \quad &
J(\mK)\coloneqq
\lim_{T\to \infty}
 \frac{1}{T}
\sum_{t=1}^T\E\!\left[ \frac{1}{N}\sum_{i=1}^N c_i(t) \right] \\
\text{s.t.} \quad
& x(t+1)=Ax(t)+Bu(t)+w(t), \\
&\ \ \ \ 
u_i(t)=K_ix_{\mathcal I_i}(t),
\qquad i=1,\ldots,N.
\end{aligned}
\end{equation}
When the model parameters are known, 
the problem \eqref{equ: min J(K)} can be viewed as a decentralized LQ control problem, which is known to be a challenging problem in general. Various heuristic or approximate methods have been proposed (see Section \ref{subsec:related_work}), but most of them require accurate model information that may be hard to obtain in practice. Motivated by the recent progress in learning based control and also the fact that the models are not well-studied or known for many systems, this paper studies learning-based decentralized control for \eqref{equ: min J(K)}, where each agent $i$ learns the local controller $K_i$ by utilizing the partial states $x_{\mathcal I_i}(t)$ and local costs $c_i(t)$ observed along the system's trajectories.

In many real-world applications of decentralized control, limited communication among agents is available via a communication network. 
Here, we consider a connected and  undirected communication network $\mathcal G\!=\!(\{1, \dots, N\}, \mathcal E)$, where each node represents an agent and  $\mathcal E$ denotes the set of edges. At each time $t$, agent $i$ and $j$ can directly communicate a small number of scalars to each other if and only if $(i,j)\in\mathcal{E}$. Further, we introduce a doubly-stochastic and nonnegative communication matrix $W\!=\![W_{ij}]\!\in\!\R^{N\times N}$ associated with the communication network $\mathcal G$, with $W_{ij}\!=\!0$ if $(i,j)\!\notin\!\mathcal E$ for $i\!\neq\!j$ and $W_{ii}\!>\!0$ for all $i$. The construction of the matrix $W$ has been extensively discussed in  literature (see, for example, \cite{xiao2004fast}). We denote
\begin{equation}
\rho_W\coloneqq\left\|W-\frac{1}{N}\one \one^\top\right\|.
\end{equation}
This quantity captures the convergence rate of the consensus via $W$ and is known to be within $[0,1)$ \cite{xiao2004fast,qu2017harnessing}.

Finally, we introduce the technical assumptions that will be imposed throughout the paper.
\begin{assumption}\label{ass:LQR_basic}
The dynamical system $(A, B)$ is controllable. The cost matrices $Q_i, R_i$ are positive semidefinite for each $i$, and the global cost matrices $\frac{1}{N}\sum_{i=1}^N Q_i$ and $\frac{1}{N}\sum_{i=1}^N R_i$ are positive definite.
\end{assumption}
\begin{assumption}\label{ass: exist K stabilzing}
There exists a control policy $\mK\in\mathbb{R}^{n_K}$ such that the resulting global dynamics
$x(t+1)=(A+B\mathcal M(\mK))x(t)$ is asymptotically stable.
\end{assumption}
Both assumptions are common in LQ control literature. Without Assumption \ref{ass: exist K stabilzing}, the problem \eqref{equ: min J(K)} does not admit a reasonable solution even if all system parameters are known, let alone learning-based control.\footnote{{If Assumption \ref{ass: exist K stabilzing} does not hold but the system is stabilizable, then one has to consider more general controller structures, e.g. linear dynamic controllers, nonlinear controllers, which is beyond the scope of this paper.}}  For ease of exposition, we denote $\mathcal K_{\mathrm{st}}$ as the set of stabilizing controller, i.e.,
$$
\mathcal{K}_{\mathrm{st}}
\coloneqq
\{\mathsf{K}\in\mathbb{R}^{n_K}:
A+B\mathcal M(\mK)\text{ is asymptotically stable}\}.
$$

\section{Algorithm Design}

\subsection{Review: Zero-Order Policy Gradient for Centralized LQR} 

To find a policy $\mK$ that minimizes $J(\mK)$, one common approach is the policy gradient method, that is,
$$  \mK(s+1)=\mK(s)-\eta  \hat{\mathsf{g}}(s),
\quad s=1,2,\ldots,
\qquad
\mK(1)=\mK_0,$$
where $\hat{\mathsf{g}}(s)$ is an estimator of the gradient $\nabla J(\mK(s))$, $\eta>0$ is a stepsize, and $\mK_0$ is some known stabilizing controller. In \cite{fazel2018global} and \cite{malik2018derivative}, the authors have proposed to employ gradient estimators from zero-order optimization. One example is:
\begin{equation}\label{eq:zero_order_grad_est}
\mathsf{G}^r(\mathsf{K},\mathsf{D} ):=\frac{n_K}{r}J(\mK+r\mathsf{D} )\mathsf{D} 
\end{equation}
for  $\mK\in\mathcal{K}_{\mathrm{st}}$ and $r>0$ such that $\mK+r\mathbb{S}_{n_K}\subseteq \mathcal{K}_{\mathrm{st}}$, where $\mathsf D \in \R^{n_K}$ is randomly sampled from $\text{Uni}(\Sb_{n_K})$.
The parameter $r$ is sometimes called the smoothing radius, and it can be shown that the bias $\|\mathbb{E}_{\mathsf{D} }[\mathsf{G}^r(\mathsf{K},\mathsf{D})]-\nabla J(\mathsf{K})\|$ can be controlled by $r$ under certain smoothness conditions on $J(\mathsf{K})$ \cite{malik2018derivative}.
The policy gradient based on the estimator  \eqref{eq:zero_order_grad_est} is given by
\begin{equation}\label{eq:policy_grad_zero_order}
\begin{aligned}
\mK(s+1)=\ &
\mK(s)-\eta\,
\mathsf{G}^r(\mathsf{K}(s),\mathsf{D}(s)) \\
=\ &
\mK(s)-\eta\cdot
\frac{n_K}{r}J(\mathsf{K}(s)+r\mathsf{D}(s))\mathsf{D}(s),
\end{aligned}
\end{equation}
where $\{\mathsf{D}(s)\}_{s=1}^{T_G}$ are i.i.d. random vectors  from $\mathrm{Uni}(\mathbb{S}_{n_K})$.

\subsection{Our Algorithm: Zero-Order Distributed Policy Optimization}

\begin{figure}[t]
\removelatexerror
\begin{algorithm2e}[H]
	\caption{Zero-Order Distributed Policy Optimization (ZODPO)}
	\label{alg:main}
	\SetAlgoNoLine
	\DontPrintSemicolon
	\LinesNumbered
	\KwIn{smoothing radius $r$, step size $\eta$, $\bar J>0$,  $T_G,T_J$,  initial controller $\mK_0\in\mathcal{K}_{\mathrm{st}}$.}
	\SetKwFunction{FCost}{GlobalCostEst}
	\SetKwFunction{FSample}{SampleUSphere}
	
	Initialize $\mathsf{K}(1)= \mathsf{K}_0$.\;

	\For{$s= 1,2,\ldots,T_G$}{
	    \vspace{4pt}
	    \tcp{Step 1: Sampling from the unit sphere}
	    \vspace{2pt}
		Each agent $i$ generates $D_i(s)\in\mathbb{R}^{m_i\times n_i}$ by the subroutine \FSample.\;
		
		\vspace{4pt}
		\tcp{Step 2: Local estimation of the global objective}
		\vspace{2pt}
		Run \FCost{$(K_i(s)+rD_i(s))_{i=1}^N,T_J$}, and let agent $i$'s returned value be denoted by $\tilde{J}_i(s)$.\;

		\vspace{6pt}
			\tcp{Step 3: Local estimation of  partial gradients}
			\vspace{2pt}
				Each agent $i$ estimates the partial gradient $\mfrac{\partial J}{\partial K_i}(\mK(s))$ by 
				$$
				\begin{aligned}
				\hat J_i(s) =\ & \min\!\left\{\tilde{J}_i(s),\bar J\right\}, \\
				\hat{G}^r_i(s)=\ &
				\frac{n_K}{r}	\hat J_i(s)D_i(s)
				.
				\end{aligned}
				$$
				
		\tcp{Step 4: Distributed policy gradient on local controllers}
		\vspace{2pt}
		Each agent $i$ updates $K_i(s+1)$ by
		\begin{equation*}
		K_i(s+1)=K_i(s)-\eta \hat{G}^r_i(s).
		\end{equation*}
	}
\end{algorithm2e}
\vspace{-12pt}
\end{figure}

Now, let us consider the decentralized LQ control formulated in Section \ref{sec:formulation}. Notice that Iteration \eqref{eq:policy_grad_zero_order} can be equivalently written in an almost decoupled way for each agent $i$:
\begin{equation}\label{eq:policy_grad_zero_order_2}
K_i(s+1)=K_i(s)
-\eta\cdot\frac{n_K}{r} J(\mathsf{K}(s)+r\mathsf{D}(s))D_i(s),
\end{equation}
where $\mathsf{D}(s)\sim \operatorname{Uni}(\mathbb{S}_{n_K})$,  $K_i(s), D_i(s)\in \R^{n_i \times m_i } $ and
\begin{equation}\label{equ: vec K, vec D}
\mathsf{K}(s)=\operatorname{vec}((K_i(s))_{i=1}^N), \quad \mathsf{D}(s)=\operatorname{vec}((D_i(s))_{i=1}^N).
\end{equation}
The formulation \eqref{eq:policy_grad_zero_order_2} suggests that, if each agent $i$ can sample $D_i(s)$ properly and obtain the value of the global objective $J(\mathsf{K}(s)+r\mathsf{D}(s))$, then the policy gradient \eqref{eq:policy_grad_zero_order} can be implemented in a decentralized fashion by letting each agent $i$   update its own policy $K_i$ in parallel according to \eqref{eq:policy_grad_zero_order_2}.
This key observation leads us to the ZODPO algorithm (Algorithm \ref{alg:main}).

\begin{figure}[t]
\removelatexerror

\setlength{\interspacetitleruled}{0pt}%
\setlength{\algotitleheightrule}{0pt}%
\begin{algorithm2e}[H]
\SetAlgoNoLine
\DontPrintSemicolon
\LinesNumbered
\SetKwFunction{FSample}{SampleUSphere}
\SetKwProg{Fn}{Subroutine}{:}{}
\Fn{\FSample}{
	Each agent $i$ samples $V_i\in\mathbb{R}^{n_i\times m_i}$ with i.i.d. entries from $\mathcal{N}(0,1)$, and lets $q_i(0)=\|V_i\|_F^2$.\;
	\For{$t= 1,2,\ldots,T_S$}{
		
		Agent $i$ sends $q_i(t-1)$ to its neighbors and updates\vspace{-5pt}
		\begin{equation}\label{eq:sample_usphere_consensus}
		q_i(t)
		=\sum_{j=1}^N W_{ij} q_j(t-1).
		\end{equation}
		\vspace{-5pt}
	}
		\Return{${D}_i:=V_i/\!\sqrt{N q_i(T_S)}$ to agent $i$ for all $i$.}
}
	\vspace{2pt}
\end{algorithm2e}

\setlength{\interspacetitleruled}{0pt}%
\setlength{\algotitleheightrule}{0pt}%
\begin{algorithm2e}[H]
\SetAlgoNoLine
\DontPrintSemicolon
\LinesNumbered
\SetKwFunction{FCost}{GlobalCostEst}
\SetKwProg{Fn}{Subroutine}{:}{}
\Fn{\FCost{$(K_i)_{i=1}^N,T_J$}}{
Reset the system's state  to $x(0)=0$.\;
Each agent $i$ implements $K_i$, and set $\mu_i(0)\leftarrow 0$.\;
\For{$t= 1,2,\ldots,T_J$}{
Each agent $i$ sends $\mu_i(t\!-\!1)$ to its neighbors, observes $c_i(t)$ and updates $\mu_i(t)$ by
\begin{equation}\label{eq:consensus_cost}
\mu_i(t)
=\frac{t-1}{t}\sum_{j=1}^N W_{ij} \mu_j(t-1)
+\frac{1}{t} c_i(t).
\end{equation}
\vspace{-5pt}
}
\Return $\mu_i(T_J)$ to agent $i$ for each $i=1,\ldots,N$.
}
\end{algorithm2e}
\vspace{-12pt}
\end{figure}

Roughly speaking, ZODPO conducts distributed policy gradient iterations with four main steps:

\begin{itemize}[leftmargin=10pt,itemsep=2pt,parsep=2pt,listparindent=12pt]

\item In \textit{Step 1}, each agent $i$ runs the subroutine {\tt SampleUSphere} to generate a random matrix $D_i(s)\in\mathbb{R}^{m_i\times n_i}$ so that the concatenated $\mathsf{D}(s)$ approximately follows the uniform distribution on $\mathbb{S}_{n_K}$. In the subroutine {\tt SampleUSphere}, each agent $i$ samples a Gaussian random matrix $V_i$ independently, and then employs a simple consensus procedure \eqref{eq:sample_usphere_consensus} to compute the averaged squared norm $\frac{1}{N}\sum_{i}\|V_i\|_F^2$. Our analysis shows that the outputs of the subroutine approximately follow the desired distribution for sufficiently large $T_S$ (see Lemma~\ref{lemma:sampling_error} in Section~\ref{subsec:sampling_error}).


\item In \textit{Step 2}, each agent $i$ estimates the  global objective  $J(\mK(s)+r\mD(s))$ by implementing the local policy $K_i(s)+rD_i(s)$ and executing the subroutine {\tt GlobalCostEst}. The subroutine {\tt GlobalCostEst} allows the agents to form local estimates of the global objective value from observed local stage costs and communication with neighbors. Specifically, given the input controller $\mK$ of {\tt GlobalCostEst}, the quantity $\mu_i(t)$ records agent $i$'s estimation of $J(\mK)$ at time step $t$, and is updated based on its neighbors' estimates $\mu_j(t-1)$ and its local stage cost  $c_i(t)$.  The updating rule \eqref{eq:consensus_cost} can be viewed as a combination of a consensus procedure via the communication matrix $W$ and an online computation of the average $\frac{1}{t}\sum_{\tau=1}^{t} c_i(\tau)$. Our  theoretical analysis  justifies that $\mu_i(T_J)\approx J(\mK)$ for sufficiently large $T_J$ (see Lemma~\ref{lem: bias and var of mui}).

Note that the consensus \eqref{eq:sample_usphere_consensus} in the subroutine {\tt SampleUSphere} can be carried out simultaneously with the consensus \eqref{eq:consensus_cost} in the subroutine {\tt GlobalCostEst} as the linear system evolves, in which case $T_S=T_J$. We present the two subroutines separately for clarity.

\item In {\it Step 3}, each agent $i$ forms its partial gradient estimation $\hat{G}_i^r(s)$ associated with its local controller. The partial gradient estimation $\hat{G}_i^r(s)$ is based on \eqref{eq:policy_grad_zero_order_2}, but uses local estimation of the global objective instead of its exact value. We also introduce a truncation step $\hat J_i(s) =\min\!\left\{\mu_i(T_J),\bar J\right\}$ for some sufficiently large $\bar J$, which guarantees the boundedness of the gradient estimator in Step 2 to help ensure the stability of our iterating policy $\mK(s+1)$ and simplify the  analysis.

\item In \textit{Step 4}, each agent $i$ updates its local policy $K_i$ by  \eqref{eq:policy_grad_zero_order_2}.
\end{itemize}


We point out that, per communication round, each agent $i$ only shares a scalar $\mu_i(t)$ for global cost estimation in {\tt GlobalCostEst} and a scalar $q_i(t)$ for jointly sampling in {\tt SampleUSphere},   demonstrating the applicability  in the limited-communication scenarios. Besides, each agent $i$ only stores and updates the local policy $K_i$, indicating that only small storage is used even in large-scale systems.

\begin{remark}
 ZODPO conducts large enough ($T_J$ and $T_S$) subroutine iterations for each policy gradient update (see Theorem \ref{theorem:main} in Section \ref{sec:main}). In practice, one may prefer fewer subroutine iterations, e.g. actor-critic algorithms. However, the design and  analysis of  actor-critic algorithms for our problem are non-trivial since we have to ensure stability/safety during the learning. Currently, ZODPO requires large enough subroutine iterations for good estimated gradients, so that the policy gradient updates do not drive the policy outside the stabilizing region. To overcome this challenge, we consider employing a safe policy and switching to the safe policy whenever the states are too large and resuming the learning when the states are small. In this way, we can use fewer subroutine iterations and ensure safety/stability even with poorer estimated gradients. The theoretical analysis for this method is left as future work.

\end{remark}

\section{{Theoretical Analysis}}\label{sec:main}
In this section, we first discuss some properties of $J(\mK)$, and then provide the  nonasymptotic performance guarantees of ZODPO, followed by some discussions.

As indicated by \cite{feng2019exponential,bu2019lqr}, the objective function $J(\mK)$ of decentralized LQ control can be nonconvex. Nevertheless, $J(\mathsf{K})$ satisfies some smoothness properties.

\begin{lemma}[Properties of $J(\mK)$]\label{lemma:J_smoothness}
The function $J(\mK)$ has the following properties:
\begin{enumerate}[leftmargin=13pt]
\item $J(\mathsf{K})$ is continuously differentiable over $\mathsf{K}\in\mathcal{K}_{\mathrm{st}}$. In addition, any nonempty sublevel set
$
\mathcal Q_\alpha:=\{\mathsf{K}\in\mathcal{K}_{\mathrm{st}}:
J(\mathsf{K})\leq\alpha\}
$
is compact.
\item Given a nonempty sublevel set $\mathcal{Q}_{\alpha_1}$ and an arbitrary $\alpha_2>\alpha_1$, there exist constants $\xi>0$ and $\phi>0$ such that, for any $\mathsf{K}\in \mathcal Q_{\alpha_1}$ and  $\mathsf{K}'$ with $\|\mathsf{K}'-\mathsf{K}\|\leq\xi$, we have
$
\mK'\in \mathcal Q_{\alpha_2}$ and $
\|\nabla J(\mathsf{K}')-\nabla J(\mathsf{K})\|\leq\phi
\|\mathsf{K}'-\mathsf{K}\|.$
\end{enumerate}
\end{lemma}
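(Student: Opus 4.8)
The plan is to first put $J(\mK)$ into closed algebraic form so that both smoothness and the sublevel-set geometry become accessible. Write $A_{\mK} := A + B\mathcal M(\mK)$ and $\tilde Q_{\mK} := Q + \mathcal M(\mK)^\top R\,\mathcal M(\mK)$, where $Q := \frac1N\sum_i Q_i$ and $R := \frac1N\sum_i R_i$ are positive definite by Assumption~\ref{ass:LQR_basic}; then the agent-averaged stage cost is $x(t)^\top \tilde Q_{\mK}x(t)$. For $\mK\in\mathcal K_{\mathrm{st}}$ the closed loop $x(t+1)=A_{\mK}x(t)+w(t)$ is stable, so $\E[x(t)x(t)^\top]$ converges to the unique positive semidefinite solution $\Sigma_{\mK}$ of $\Sigma = A_{\mK}\Sigma A_{\mK}^\top + \Sigma_w$, and taking the Ces\`aro average gives $J(\mK)=\tr(\tilde Q_{\mK}\Sigma_{\mK})$. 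Vectorizing the Lyapunov equation yields $\vec(\Sigma_{\mK})=(I - A_{\mK}\otimes A_{\mK})^{-1}\vec(\Sigma_w)$, and $\rho(A_{\mK})<1$ gives $\rho(A_{\mK}\otimes A_{\mK})=\rho(A_{\mK})^2<1$, so the inverse exists. Since $\mK\mapsto A_{\mK}$ is affine and matrix inversion is analytic on its domain, $\Sigma_{\mK}$ is a real-analytic function of $\mK$ on $\mathcal K_{\mathrm{st}}$; hence $J$ is $C^\infty$, which settles continuous differentiability and, as a bonus, supplies the continuous Hessian needed in Part~2.

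For compactness of $\mathcal Q_\alpha$ I would argue boundedness and closedness separately. Boundedness: from $\Sigma_{\mK}=\sum_{t\geq0}A_{\mK}^t\Sigma_w(A_{\mK}^\top)^t\succeq\Sigma_w$ I get $J(\mK)\geq\tr(\mathcal M(\mK)^\top R\,\mathcal M(\mK)\Sigma_w)\geq\lambda_{\min}(\Sigma_w)\lambda_{\min}(R)\|\mathcal M(\mK)\|_F^2$; because $\mathcal M$ places the $K_i$ into disjoint row blocks, $\|\mathcal M(\mK)\|_F=\|\mK\|$, so $J(\mK)\leq\alpha$ forces $\|\mK\|^2\leq\alpha/(\lambda_{\min}(\Sigma_w)\lambda_{\min}(R))$. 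Closedness is the crux, and rests on a coercivity bound toward the stability boundary: using $\|A_{\mK}^t\|_F\geq\rho(A_{\mK})^t$ I obtain $J(\mK)\geq\lambda_{\min}(Q)\tr(\Sigma_{\mK})\geq\lambda_{\min}(Q)\lambda_{\min}(\Sigma_w)\sum_{t\geq0}\rho(A_{\mK})^{2t}=\lambda_{\min}(Q)\lambda_{\min}(\Sigma_w)/(1-\rho(A_{\mK})^2)$. Consequently every $\mK\in\mathcal Q_\alpha$ satisfies $\rho(A_{\mK})^2\leq 1-\lambda_{\min}(Q)\lambda_{\min}(\Sigma_w)/\alpha$, a bound strictly below $1$. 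For any convergent sequence $\mK^{(k)}\in\mathcal Q_\alpha$ with limit $\mK^\star$, continuity of the spectral radius then gives $\rho(A_{\mK^\star})<1$, so $\mK^\star\in\mathcal K_{\mathrm{st}}$, and continuity of $J$ gives $J(\mK^\star)\leq\alpha$; hence $\mK^\star\in\mathcal Q_\alpha$ and $\mathcal Q_\alpha$ is compact.

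For Part~2 I would exploit that $\mathcal Q_{\alpha_1}$ is compact and sits inside the open set $U:=\{\mK\in\mathcal K_{\mathrm{st}}:J(\mK)<\alpha_2\}$, so $\xi_0:=\mathrm{dist}(\mathcal Q_{\alpha_1},\R^{n_K}\setminus U)>0$. Fixing any $\xi\in(0,\xi_0)$, every $\mK\in\mathcal Q_{\alpha_1}$ and $\mK'$ with $\|\mK'-\mK\|\leq\xi$ satisfy $\mK'\in U\subseteq\mathcal Q_{\alpha_2}$, giving the first claim. The closed $\xi$-neighborhood $\mathcal C:=\{\mK':\mathrm{dist}(\mK',\mathcal Q_{\alpha_1})\leq\xi\}$ is compact and contained in $\mathcal Q_{\alpha_2}\subseteq\mathcal K_{\mathrm{st}}$, so the continuous map $\nabla^2 J$ attains a finite maximum $\phi:=\max_{\mK'\in\mathcal C}\|\nabla^2 J(\mK')\|$; since the segment from $\mK$ to $\mK'$ lies in $\mathcal C$, integrating the Hessian along it yields $\|\nabla J(\mK')-\nabla J(\mK)\|\leq\phi\|\mK'-\mK\|$.

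The hard part will be the closedness of the sublevel set, i.e.\ ruling out limit points on the stability boundary; this is exactly where the coercivity estimate $J(\mK)\to\infty$ as $\rho(A_{\mK})\to1$ is indispensable, and it is the one step that goes beyond routine manipulation. Establishing differentiability via the Lyapunov solution, the coercivity-based boundedness, and the Hessian-integration Lipschitz bound are all comparatively standard once the closed form $J(\mK)=\tr(\tilde Q_{\mK}\Sigma_{\mK})$ is in hand.
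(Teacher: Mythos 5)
Your proposal is correct, and it is worth noting that the paper itself does not prove this lemma at all: it simply cites \cite[Lemma~7.3 \& Corollary~3.7.1]{bu2019lqr} and \cite[Lemmas~1~\&~2]{malik2018derivative}. What you have written is a self-contained reconstruction of essentially the same standard machinery those references use. Your chain of estimates is sound at every step that matters: the representation $J(\mK)=\tr(\tilde Q_{\mK}\Sigma_{\mK})$ with $\vec(\Sigma_{\mK})=(I-A_{\mK}\otimes A_{\mK})^{-1}\vec(\Sigma_w)$ gives real-analyticity (hence the continuous Hessian you need later); the lower bound $J(\mK)\geq\lambda_{\min}(\Sigma_w)\lambda_{\min}(R)\|\mK\|^2$ uses correctly that the $K_i$ occupy disjoint entries of $\mathcal M(\mK)$, so $\|\mathcal M(\mK)\|_F=\|\mK\|$; and the coercivity bound $J(\mK)\geq\lambda_{\min}(Q)\lambda_{\min}(\Sigma_w)/(1-\rho(A_{\mK})^2)$ — via $\|A_{\mK}^t\|_F\geq\rho(A_{\mK})^t$ — is exactly the ingredient that keeps limit points of $\mathcal{Q}_\alpha$ uniformly away from the stability boundary, which you rightly identify as the crux. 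Both hypotheses of Assumption~\ref{ass:LQR_basic} (positive definiteness of $Q$ and of $R$) get used, one for each bound, which is a good sanity check. The Part~2 argument (positive distance from the compact set $\mathcal{Q}_{\alpha_1}$ to the complement of the open set $\{J<\alpha_2\}$, then a Hessian bound on the compact $\xi$-fattening, then integration along the segment) is the same compactness-plus-$C^2$ route as in \cite{malik2018derivative}. Two cosmetic remarks only: you should handle the degenerate case $\{J<\alpha_2\}=\R^{n_K}$ (the distance is then $+\infty$ and any finite $\xi$ works), and the segment-in-$\mathcal{C}$ observation deserves one explicit line, namely that every point $\mK+t(\mK'-\mK)$ is within $\xi$ of $\mK\in\mathcal{Q}_{\alpha_1}$. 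What your approach buys over the paper's is self-containedness and explicit constants ($\xi$, $\phi$ expressible through $\lambda_{\min}(Q)$, $\lambda_{\min}(R)$, $\lambda_{\min}(\Sigma_w)$, $\alpha_2$); what the citation buys the paper is brevity and, in the case of \cite{bu2019lqr}, statements tailored to the decentralized parameterization that the authors can invoke wholesale.
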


This lemma is essentially  \cite[Lemma~7.3 \& Corollary~3.7.1]{bu2019lqr} and \cite[Lemmas~1~\&~2]{malik2018derivative}. Without loss of generality, we let
$$
\begin{aligned}
\mathcal{Q}^0=\ &
\{\mathsf{K}\in\mathcal{K}_{\mathrm{st}}:
J(\mathsf{K})\leq 10J(\mK_0)\}, \\
\mathcal{Q}^1=\ &
\{\mathsf{K}\in\mathcal{K}_{\mathrm{st}}:
J(\mathsf{K})\leq 20J(\mK_0)\}.
\end{aligned}
$$
Lemma~\ref{lemma:J_smoothness} then guarantees that there exist $\xi_0>0$ and $\phi_0>0$ such that for any $\mK\in\mathcal{Q}^0$ and any $\mK'$ with $\|\mK'-\mK\|\leq\xi_0$, we have $\mK'\in\mathcal{Q}^1$ and $\|\nabla J(\mK')-\nabla J(\mK)\|\leq\phi_0$. The constants $\xi_0$ and $\phi_0$ depend on $A$, $B$, $\Sigma_w$,  $J(\mathsf{K}_0)$ and $Q_i$, $R_i$ for all $i$.

With the definitions of $\xi_0, \phi_0$ above, we are ready for the performance guarantee of our ZODPO.

\begin{theorem}[Main result]\label{theorem:main}
Let $\mK_0 \in \mathcal K_{\mathrm{st}}$ be an arbitrary initial controller. Let $\epsilon>0$ be sufficiently small, and suppose
$$
\begin{aligned}
r\leq\ &
\frac{\sqrt{\epsilon}}{40\phi_0}, 
\qquad
\bar J\geq 50J(\mK_0),
\qquad 
\eta \leq
\min
\left\{\frac{14\xi_0 r}{15\bar J n_K},
\frac{3\epsilon r^2}{320\phi_0(40J(\mathsf{K}_0))^2\cdot  n_K^2}\right\}, \\
T_J
\geq\ &
10^3
J(\mathsf{K}_0)\frac{n_K}{ r\sqrt{\epsilon}}\max\left\{
n\beta_0^2,
\frac{N}{1\!-\!\rho_W}\right\},
\qquad
T_S\geq \frac{\log\frac{8N^2}{\phi_0\eta}}{-2\log\rho_W}, \\
T_G =\ &
c\cdot\frac{40J(\mathsf{K}_0)}{\eta\epsilon}, \quad \frac{1}{16}\leq c\leq 16,
\end{aligned}
$$
where  $\beta_0$ is a constant determined by  $A, B, \Sigma_w, \mK_0$ and $Q_i, R_i$ for all $i$. Then, the following two statements hold.
\begin{enumerate}[leftmargin=13pt]
\item The controllers $\{\mK(s)\}_{s=1}^{T_G}$ generated by Algorithm \ref{alg:main} are all stabilizing  with probability at least $0.9- 0.05c$.
\item The controllers $\{\mK(s)\}_{s=1}^{T_G}$  enjoy the  bound below with probability at least $0.875- 0.05(c+c^{-1})$:
\begin{equation}\label{equ: main_thm_convergence}
\frac{1}{T_G} \sum_{s=1}^{T_G}\left\|\nabla J(\mK(s))\right\|^2 \leq \epsilon.
\end{equation}
Further, if we select $\hat{\mathsf{K}}$  uniformly randomly  from $\{\mathsf{K}(s)\}_{s=1}^{T_G}$, then  with probability at least $0.875- 0.05(c+c^{-1})$,
\begin{equation}\label{equ: main_thm_convergence_2}
\Big\|\nabla J(\hat{\mathsf{K}})\Big\|^2 \leq \epsilon.
\end{equation}
\end{enumerate}
\end{theorem}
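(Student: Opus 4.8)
The plan is to run a constrained nonconvex stochastic--gradient analysis: combine the descent inequality coming from the local $\phi_0$-smoothness in Lemma~\ref{lemma:J_smoothness} with a stopping-time argument that keeps the iterates inside the stabilizing sublevel set $\mathcal{Q}^0$. Write the concatenated estimator $\hat{\mathsf G}^r(s)=\vec\big((\hat G^r_i(s))_{i=1}^N\big)$ and compare it with the idealized estimator $\mathsf{G}^r(\mK(s),\mD(s))$ of \eqref{eq:zero_order_grad_est}. The first task is to split the estimation error into four sources and bound each: (i) the zero-order smoothing bias $\|\E_{\mD}[\mathsf{G}^r(\mK,\mD)]-\nabla J(\mK)\|$, which the Lipschitz-gradient bound controls at order $r\phi_0$; (ii) the perturbation-sampling error arising because $\mD(s)$ from \texttt{SampleUSphere} is only approximately $\mathrm{Uni}(\Sb_{n_K})$, controlled through Lemma~\ref{lemma:sampling_error} and the lower bound on $T_S$; (iii) the global-cost estimation error $\tilde J_i(s)-J(\mK(s)+r\mD(s))$, whose bias and variance are controlled through Lemma~\ref{lem: bias and var of mui} and the lower bound on $T_J$; and (iv) the truncation error from $\hat J_i(s)=\min\{\tilde J_i(s),\bar J\}$.

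The decisive observation for (iv) is that, on the event $\mK(s)\in\mathcal{Q}^0$, one has $J(\mK(s))\le 10J(\mK_0)$, and since $r\le\xi_0$ for $\epsilon$ small enough, $\|r\mD(s)\|\le\xi_0$ places $\mK(s)+r\mD(s)\in\mathcal{Q}^1$, so $J(\mK(s)+r\mD(s))\le 20J(\mK_0)<\bar J$. Hence the truncation is inactive except on the low-probability tail where $\tilde J_i(s)$ overshoots $J(\mK(s)+r\mD(s))$ by more than $30J(\mK_0)$; this lets me treat $\hat J_i(s)$ as essentially the untruncated estimate (so (iv) contributes only through a rare, small tail) while still using the deterministic bound $\|\hat{\mathsf G}^r(s)\|\le \tfrac{n_K}{r}\bar J\,\|\mD(s)\|$ that is indispensable for the stability step. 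On the event $\mK(s),\mK(s+1)\in\mathcal{Q}^1$ I then apply
$$
J(\mK(s+1)) \le J(\mK(s)) - \eta\,\big\langle \nabla J(\mK(s)),\hat{\mathsf G}^r(s)\big\rangle + \tfrac{\phi_0}{2}\eta^2\big\|\hat{\mathsf G}^r(s)\big\|^2,
$$
take the conditional expectation given the history $\mathcal{F}_s$, write $\E[\hat{\mathsf G}^r(s)\mid\mathcal{F}_s]=\nabla J(\mK(s))+b(s)$ with $\|b(s)\|$ bounded by the aggregate of (i)--(iv), and use the deterministic bound on $\|\hat{\mathsf G}^r(s)\|$ for the $\eta^2$ term. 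Substituting the hypotheses on $r,\eta,T_S,T_J$ forces every error contribution down to an $O(\epsilon)$ fraction and yields a per-step drift $\E[J(\mK(s+1))\mid\mathcal{F}_s]\le J(\mK(s)) - \tfrac{\eta}{2}\|\nabla J(\mK(s))\|^2 + \eta\cdot O(\epsilon)$.

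For Statement~1 I would introduce the stopping time $\tau=\min\{s:\mK(s)\notin\mathcal{Q}^0\}$ and study the stopped process. The step-size bound $\eta\le \tfrac{14\xi_0 r}{15\bar J n_K}$ guarantees $\|\mK(s+1)-\mK(s)\|=\eta\|\hat{\mathsf G}^r(s)\|\le\xi_0$, so one step cannot jump past $\mathcal{Q}^1$ and $J$ stays finite along the trajectory; the drift above then makes $J(\mK(s\wedge\tau))$ a nonnegative supermartingale. Doob's maximal inequality (equivalently the optional-stopping bound), starting from $J(\mK_0)$ and absorbing the summed $O(\eta\epsilon)$ drift into the threshold, bounds $\Pb(\tau\le T_G)$ and hence the probability of ever leaving the stabilizing region by the stated $0.1+0.05c$. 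For Statement~2, summing the drift over $s=1,\dots,T_G$ on the stability event and telescoping $J(\mK(1))-J(\mK(T_G+1))\le J(\mK_0)$ gives $\frac{1}{T_G}\sum_s\E\|\nabla J(\mK(s))\|^2\le \frac{2J(\mK_0)}{\eta T_G}+O(\epsilon)$; the choice $T_G\asymp J(\mK_0)/(\eta\epsilon)$ makes the right side $O(\epsilon)$, and Markov's inequality converts this expectation bound into the high-probability estimate \eqref{equ: main_thm_convergence}. Since $\hat{\mK}$ is drawn uniformly from $\{\mK(s)\}$, $\E\|\nabla J(\hat{\mK})\|^2=\frac{1}{T_G}\sum_s\E\|\nabla J(\mK(s))\|^2$, so the same bound yields \eqref{equ: main_thm_convergence_2}.

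The hardest part will be the circular coupling between stability and convergence: the descent inequality and all four error bounds hold only inside $\mathcal{Q}^0/\mathcal{Q}^1$, so I cannot invoke a textbook SGD guarantee directly but must carry the entire argument on the stopped process, where the drift bound presupposes the iterates are stabilizing while the stabilizing guarantee in turn relies on the drift bound. The truncation together with the unbounded Gaussian noise sharpens this: because $\tilde J_i(s)$ is heavy-tailed, I must simultaneously show that truncation is essentially never active inside $\mathcal{Q}^0$ (so it does not bias the gradient) and that the rare truncation and estimation-variance tails add only $O(\epsilon)$ to the drift --- precisely where the $T_J$ scaling with $n\beta_0^2$ and $N/(1-\rho_W)$ is spent. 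Finally, tracking the several independent failure events (escape from $\mathcal{Q}^0$, Markov-type failure of the gradient bound) and assembling them into the exact constants $0.9-0.05c$ and $0.875-0.05(c+c^{-1})$ is the delicate bookkeeping that closes the proof.
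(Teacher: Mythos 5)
Your overall architecture --- the four-way error decomposition, the descent inequality on the stopped process, the stopping time with a supermartingale-type bound for Statement~1, and telescoping plus Markov's inequality with the tower property for Statement~2 --- matches the paper's proof. But one concrete step fails as you propose it: you control the second-order term $\tfrac{\phi_0}{2}\eta^2\|\hat{\mathsf G}^r(s)\|^2$ in the descent inequality by the deterministic bound $\|\hat{\mathsf G}^r(s)\|\le \tfrac{n_K}{r}\bar J\,\|\mD(s)\|$. That bound scales with $\bar J$, and the theorem only imposes a \emph{lower} bound $\bar J\ge 50J(\mK_0)$, so $\bar J$ may be arbitrarily large. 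Since the step-size hypothesis $\eta\le \tfrac{3\epsilon r^2}{320\phi_0(40J(\mK_0))^2 n_K^2}$ involves $J(\mK_0)$, not $\bar J$, your bound on the $\eta^2$ term is of order
$$
\phi_0\eta^2\,\frac{n_K^2\bar J^2}{r^2}
\;\le\;
\eta\epsilon\cdot\frac{3\bar J^2}{320\,(40J(\mK_0))^2},
$$
which is \emph{not} $O(\eta\epsilon)$ when $\bar J\gg J(\mK_0)$; the per-step drift then cannot be brought below $\tfrac{\eta}{2}\cdot O(\epsilon)$ and the constants in both statements break under the stated hypotheses.

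The paper avoids this by using the truncation level only one-sidedly: $\bar J$ enters the almost-sure bound needed to show $\|\mK(s+1)-\mK(s)\|\le\xi_0$ (where the condition $\eta\le\tfrac{14\xi_0 r}{15\bar J n_K}$ does scale as $1/\bar J$, so that part of your plan is fine), whereas the $\eta^2$ term is handled by the conditional \emph{second moment} $\E\!\left[\|\hat{\mathsf G}^r(s)\|^2\mid\mathcal F_s\right]\le\left(30J(\mK_0)\tfrac{n_K}{r}\right)^2$, which is independent of $\bar J$: since $0\le\hat J_i\le\tilde J_i$, the truncated estimate's second moment is dominated by that of the untruncated $\tilde J_i$, which is controlled uniformly over $\mathcal Q^1$ by the variance bound in Lemma~\ref{lem: bias and var of mui}. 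Replacing your deterministic bound with this second-moment bound repairs the step, and the rest of your plan goes through. Two smaller points to tighten: (i) $J(\mK(s\wedge\tau))$ alone is \emph{not} a supermartingale because of the positive drift $\tfrac{\eta}{2}Z$; Doob's inequality must be applied to the compensated process $J(\mK(s\wedge\tau))+(T_G-s)\tfrac{\eta}{2}Z$ (your phrase about ``absorbing the summed drift into the threshold'' gestures at this, but the object satisfying the supermartingale property is the compensated one); (ii) for \eqref{equ: main_thm_convergence_2}, the identity equating $\E\big\|\nabla J(\hat{\mathsf K})\big\|^2$ with the average over $s$ must carry the indicator $\mathsf{1}_{\{\tau>T_G\}}$ inside both expectations, since off the stability event $\nabla J(\mK(s))$ need not even be defined.
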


The proof is deferred to Section~\ref{sec:proof}. In the following, we provide some discussions regarding Theorem~\ref{theorem:main}.





\begin{itemize}[leftmargin=10pt,topsep=2pt,itemsep=2pt,listparindent=12pt]


\item \textbf{Probabilistic guarantees.} Theorem \ref{theorem:main} establishes the stability and optimality of the controllers generated by ZODPO in a ``with high probability'' sense. 
The variable $c$ in the probability  bounds represents the value of $T_G$ since $T_G =
c\cdot 40J(\mathsf{K}_0)/(\eta\epsilon)$.

Statement 1 suggests that as $T_G$ increases, the probability that all the generated controllers are stabilizing will decrease. Intuitively, this is because the ZODPO can be viewed as a stochastic gradient descent, and as $T_G$ increases, the biases and variances of the gradient estimation  accumulate, resulting in a larger probability of generating destabilizing controllers.

Statement 2 indicates that as $T_G$ increases, the probability of enjoying the optimality guarantees \eqref{equ: main_thm_convergence} and \eqref{equ: main_thm_convergence_2} will first increase and then decrease. This is a result of the trade-off between a higher chance of generating destabilizing controllers and improving the policies by more policy gradient iterations as $T_G$ increases. In other words, if $T_G$ is too small, more iterations will improve the performance of the generated controllers; while for large $T_G$, the probability of generating destabilizing controllers becomes dominant. 


Finally, we mention that the probability bounds are not restrictive and can be improved by, e.g., increasing the numerical factors of $T_J$, using smaller stepsizes, or by  the repeated learning tricks described in \cite{ghadimi2013stochastic,malik2018derivative}, etc.

\item \textbf{Output controller.}
Due to the nonconvexity of $J(\mK)$, we evaluate the algorithm performance by the averaged squared norm of the gradients  of $\{\mathsf{K}(s)\}_{s=1}^{T_G}$ in \eqref{equ: main_thm_convergence}. Besides, we also consider an output controller that is uniformly randomly selected from $\{\mathsf{K}(s)\}_{s=1}^{T_G}$, and provide its performance guarantee \eqref{equ: main_thm_convergence_2}. Such approaches are common in nonconvex optimization \cite{ghadimi2013stochastic,reddi2016stochastic}. Our numerical experiments suggest that selecting $\mK(T_G)$ also yields satisfactory performance in most cases (see Section~\ref{sec:simulation}).

\item \textbf{Sample complexity.} 
The  number of samples to guarantee  \eqref{equ: main_thm_convergence}  with high probability is given by
\begin{equation}\label{eq:sample_complexity}
 T_G T_J= \Theta\left(
\frac{n_K^3}{\epsilon^4}
\max\left\{n\beta_0^2,\frac{N}{1-\rho_W}\right\}\right),
\end{equation}
where we apply 
 the equality conditions in Theorem \ref{theorem:main} and neglect the numerical constants since they are conservative and not restrictive. Some discussions are provided below.
\begin{itemize}[leftmargin=12pt]
\item The sample complexity \eqref{eq:sample_complexity}  has an explicit polynomial dependence on the error tolerance's inverse $\epsilon^{-1}$,  the number of controller parameters $n_K$ and the number of agents $N$, demonstrating the scalability of ZODPO.

\item The sample complexity depends on the maximum of the two terms: (i) term $N/(1\!-\!\rho_W)$ stems from the consensus procedure among $N$ agents, which increases with $\rho_W$ as a larger $\rho_W$ indicates a smaller consensus rate; (ii) term $n\beta_0^2$ stems from  approximating  the infinite-horizon averaged cost, which exists even for a single agent. 

\item Notice that \eqref{eq:sample_complexity} is proportional to $n_K^3$. Detailed analysis reveals that the variance of the single-point gradient estimation contributes a dependence of $n_K^2$, which also accords with the theoretical lower bound for zero-order optimization in \cite{shamir2013complexity}. The additional $n_K$ comes from the non-zero bias of the global cost estimation.


\item While there is an explicit linear asymptotic dependence on the state vector dimension $n$ in \eqref{eq:sample_complexity}, we point out that the quantities $\beta_0, J(\mK_0), \phi_0, \xi_0$ are also implicitly affected by $n$ as they are determined by $A, B, Q, R, \Sigma_w$ and $\mK_0$. Thus, the actual dependence  on $n$ is complicated and not straightforward to summarize.
\end{itemize}

\item \textbf{Optimization landscape.} Unlike centralized LQ control with full observations, reaching the global optimum is extremely challenging for general decentralized LQ control with partial observations. In some cases,   the stabilizing region $\mathcal K_{\text{st}}$  may even contain multiple connected components \cite{feng2019exponential}. However, ZODPO  only explores the  component  containing the initial controller  $\mK_0$, so $\mK_0$  affects which stationary points  ZODPO converges to.  How to initialize $\mK_0$ and explore other components effectively based on prior or domain knowledge remain challenging problems and are left as future work.



\end{itemize}

\section{{Proof of Theorem \ref{theorem:main}}}
\label{sec:proof}

This section provides the proof of Theorem \ref{theorem:main}. We  introduce necessary notations,  outline the main ideas of the proof,  remark on the differences between our proof and the proofs in related literature \cite{fazel2018global,malik2018derivative}, and then  provide proof details in subsections.

\vspace{5pt}

\nit{Notations.} In the following, we introduce some useful notations for any stabilizing controller $\mK \in \mathcal{K}_{\mathrm{st}}$.
First, we let $\tilde J_i(\mK)$ denote agent $i$'s estimation of the global objective $J(\mK)$ through the subroutine {\tt GlobalCostEst}, i.e.,
\begin{align*}
\big( \tilde J_1(\mK) ,\dots, \tilde J_N(\mK)\big)& \coloneqq \texttt{GlobalCostEst}\big((K_i)_{i=1}^N, T_J\big),
\end{align*}
and we let $\hat J_i(\mK) \coloneqq  \min \big\{ \tilde J_i(\mK), \bar J\big\}$ denote the truncation of $\tilde J_i(\mK)$. Notice that $\tilde J_i(\mK(s))$ and $\hat J_i(\mK(s))$ correspond to $\tilde J_i(s)$ and $\hat J_i(s)$ in Algorithm \ref{alg:main} respectively.
Then, for any $r>0$ and $\mD \in \mathbb{R}^{n_K}$ such that $\mathsf{K}+r\mD\in \mathcal{K}_{\mathrm{st}}$, we define
\begin{align*}
\hat{G}^r_i(\mathsf{K},\mathsf{D})
\coloneqq\ &
\frac{n_K}{r}\hat{J}_i(\mK \!+\! r\mD)D_i, \quad \forall \, 1 \leq i \leq N, \\
\hat{\mathsf{G}}^r(\mathsf{K},\mathsf{D}) \coloneqq \ & \vec\big((\hat{G}^r_i(\mathsf{K},\mathsf{D}))_{i=1}^N\big),
\end{align*}
where $D_i$ are $m_i\times n_i$ matrices such that
$
\mathsf{D}= \vec\big((D_i)_{i=1}^N\big)$.
Notice that $\hat {G}^r_i(\mK, \mD)$ denotes agent $i$'s estimate of the partial gradient $\mfrac{\partial J}{\partial K_i}(\mK)$ given the controller $\mK$ and perturbation $\mD$. In particular, $\hat {G}^r_i(\mK(s), \mD(s))$ corresponds to $\hat G^r_i(s)$ in Step~3 of Algorithm \ref{alg:main}. The vector $\hat{\mathsf{{G}}}^r(\mK, \mD)$ that concatenates all the (vectorized) partial gradient estimates of the agents then gives an estimate of the complete gradient vector $\nabla J(\mK)$.

\vspace{5pt}

\nit{Proof Outline.} Our proof mainly consists of six parts.
\begin{enumerate}[label=(\alph*),leftmargin=18pt,labelsep=4pt,labelwidth=11pt,itemsep=0pt,parsep=1pt,topsep=2pt]
    \item Bound the sampling error in Step 1 of Algorithm~\ref{alg:main}.
    \item Bound the estimation error of the global objective generated by Step~2 of Algorithm~\ref{alg:main}.
    \item Bound the estimation error of partial gradients generated by Step~3 of Algorithm~\ref{alg:main}.
    \item Characterize the improvement by one-step distributed policy update in Step~4 of Algorithm~\ref{alg:main}.
    \item Prove Statement~1 in Theorem~\ref{theorem:main}, i.e. all the generated controllers are stabilizing with high probability.
    \item Prove Statement~2 in Theorem~\ref{theorem:main}, i.e. the bound \eqref{equ: main_thm_convergence}.
\end{enumerate}
Each part is discussed in detail in the subsequent subsections.

\begin{remark}
{Our proof is inspired by the zero-order-based centralized LQR learning literature \cite{fazel2018global,malik2018derivative}. Below, we remark on the major differences between our proofs and \cite{fazel2018global,malik2018derivative}.

Firstly, unlike \cite{fazel2018global,malik2018derivative}, we cannot sample from the uniform sphere distribution exactly due to the distributed setting. Therefore, we have to bound the errors of the approximate sampling method (Lemma \ref{lemma:sampling_error}).

Secondly, \cite{fazel2018global,malik2018derivative} assume no process noises or bounded noises, while this paper assumes unbounded Gaussian noises. To ensure stability during the learning  under the unbounded noises, we introduce a truncation step in Line 5 of Algorithm \ref{alg:main}. Thus, we need to bound the truncation errors (Lemma \ref{lemma:capped_J_bias}).

Thirdly, when bounding the cost estimation errors, \cite{fazel2018global}, \cite{malik2018derivative} only need to address the temporal averaging, while we address both temporal averaging and spatial averaging (Lemma \ref{lem: bias and var of mui}).

Lastly, the objective function in the centralized LQR is gradient dominant \cite{fazel2018global}, but our objective function lacks such a nice property due to the partial observation and the decentralized control  structure. Therefore, we have to rely on general nonconvex  analysis and more conservative choices of parameters to bound the accummulated errors in  Section~\ref{subsec:proof_final_part}.}

\end{remark}

\subsection{Bounding the Sampling Inaccuracy}
\label{subsec:sampling_error}
In this part, we focus on the subroutine {\tt SampleUSphere} and bound the deviation of it outputs from the desired distribution $\mathrm{Uni}(\mathbb{S}_{n_K})$.

\begin{lemma}\label{lemma:sampling_error}
Consider the subroutine {\tt SampleUSphere},  let $$
D_i^{0} = \frac{V_i}{\sqrt{\sum_{i=1}^N \|V_i\|_F^2}}
\qquad\text{and}\qquad \mD^0=\vec\big((D^0_i)_{i=1}^N\big).
$$
Then, we have    $\mD^0 \sim\mathrm{Uni}(\mathbb{S}_{n_K})$.
Further, for
$
T_S\geq \log {2N}/(-\log\rho_W)
$,
\begin{equation}\label{eq:sampling_error_bound}
\begin{aligned}
\left\|D_i-D_i^{0}\right\|_F
\leq\ &
N\rho_W^{T_S}\cdot\|D_i^0\|_F,
\quad i=1,\ldots,N, \\
\sum_{i=1}^N\|{D}_i\|_F^2
\leq\ &
\left(1+N\rho_W^{T_S}\right)^2.
\end{aligned}
\end{equation}
\end{lemma}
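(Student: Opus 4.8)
The plan is to establish the three claims in turn: the distributional identity, the per-agent error bound, and the aggregate norm bound, the latter two resting on a consensus-error analysis.

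First, for $\mD^0\sim\mathrm{Uni}(\mathbb{S}_{n_K})$, I would stack all the sampled matrices into a single vector $v\coloneqq\vec\big((V_i)_{i=1}^N\big)\in\R^{n_K}$, noting that $n_K=\sum_i n_im_i$ is exactly the total number of scalar entries. Since these entries are i.i.d.\ $\mathcal N(0,1)$, we have $v\sim\mathcal N(0,I_{n_K})$, and the normalizer satisfies $\sqrt{\sum_i\|V_i\|_F^2}=\|v\|$, so $\mD^0=v/\|v\|$. The conclusion is then the classical rotational-invariance fact: for any orthogonal $U$, $Uv$ has the same law as $v$, hence $v/\|v\|$ has the same law as $U\,(v/\|v\|)$ for every such $U$, which characterizes the uniform law on $\mathbb{S}_{n_K}$.

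Next I would analyze the consensus recursion. Writing $q(t)=[q_1(t),\dots,q_N(t)]^\top$, iteration \eqref{eq:sample_usphere_consensus} is $q(t)=Wq(t-1)$, so $q(T_S)=W^{T_S}q(0)$, and double stochasticity preserves $\one^\top q(t)$, making $\bar q\coloneqq\frac1N\sum_j\|V_j\|_F^2$ a consensus invariant with $D_i^0=V_i/\sqrt{N\bar q}$. The vector $q(0)-\bar q\one$ is orthogonal to $\one$, and $W$ both preserves this subspace and acts there identically to $W-\frac1N\one\one^\top$; therefore $q(T_S)-\bar q\one=\big(W-\frac1N\one\one^\top\big)^{T_S}(q(0)-\bar q\one)$, giving $\|q(T_S)-\bar q\one\|\le\rho_W^{T_S}\|q(0)-\bar q\one\|\le\rho_W^{T_S}\|q(0)\|\le N\rho_W^{T_S}\bar q$, where the last step uses $\|q(0)\|\le\sum_j q_j(0)=N\bar q$ by nonnegativity. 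This yields the pointwise bound $|q_i(T_S)-\bar q|\le N\rho_W^{T_S}\bar q$; the hypothesis $T_S\ge\log(2N)/(-\log\rho_W)$ forces $N\rho_W^{T_S}\le\tfrac12$, so $q_i(T_S)\ge\bar q/2>0$ and the normalization defining $D_i$ is well-posed.

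For the per-agent bound, $D_i$ and $D_i^0$ share the same numerator $V_i$, so $\|D_i-D_i^0\|_F/\|D_i^0\|_F=\big|\sqrt{\bar q/q_i(T_S)}-1\big|$. Setting $x\coloneqq(q_i(T_S)-\bar q)/\bar q$ with $|x|\le N\rho_W^{T_S}\le\tfrac12$, rationalization gives $\big|(1+x)^{-1/2}-1\big|=|x|/\big[\sqrt{1+x}\,(1+\sqrt{1+x})\big]\le|x|$, since $\sqrt{1+x}\,(1+\sqrt{1+x})\ge1$ whenever $x\ge-\tfrac12$. Hence $\|D_i-D_i^0\|_F\le N\rho_W^{T_S}\|D_i^0\|_F$. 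The aggregate bound then follows from the triangle inequality $\|D_i\|_F\le(1+N\rho_W^{T_S})\|D_i^0\|_F$, squaring, summing, and invoking $\sum_i\|D_i^0\|_F^2=\|\mD^0\|^2=1$ from the first part.

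Most of this is routine; the one delicate point is obtaining the per-agent bound with \emph{exactly} the factor $N\rho_W^{T_S}$ rather than a constant multiple of it. Bounding the denominator $\sqrt{q_i(T_S)}\,(\sqrt{\bar q}+\sqrt{q_i(T_S)})$ by treating its two factors separately loses a spurious $\sqrt2$; it is the rationalization identity together with the sharp scalar inequality $\big|(1+x)^{-1/2}-1\big|\le|x|$ valid on $[-\tfrac12,\tfrac12]$ that makes the stated constant go through. Keeping $q_i(T_S)$ bounded away from zero, which is exactly where the threshold on $T_S$ enters, is the other thing to check so that every normalization is legitimate.
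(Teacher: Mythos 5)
Your proposal is correct and follows essentially the same route as the paper's proof: isotropy of the Gaussian for the uniform-sphere claim, the contraction of $W-\frac{1}{N}\one\one^\top$ on the mean-zero subspace combined with $\|q(0)\|\leq\one^\top q(0)$ for the consensus error, and the same rationalization identity with the sharp scalar bound $\sqrt{1+x}\,(1+\sqrt{1+x})\geq 1$ on $[-\tfrac12,\tfrac12]$ (the paper states it as $\inf_{x\in[0,1/2]}\sqrt{1-x}\,(1+\sqrt{1-x})\geq 1$) to get the exact factor $N\rho_W^{T_S}$. Your explicit well-posedness check $q_i(T_S)\geq\bar q/2>0$ is a small point the paper leaves implicit, but otherwise the two arguments coincide.
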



The proof is in Appendix \ref{append: sampling error}.

\subsection{Bounding the Global Cost Estimation Error}
\label{subsec:proof_part1}

In this part, we bound the difference between the global cost estimation $\tilde{J}_i(\mK)$ and  the true cost $J(\mK)$ for any $\mK \in \Q^1$. Besides, we bound the expected difference between $\tilde J_i(\mK)$ and the truncated estimation $\hat J_i(\mK)$. Later in Section~\ref{subsec:proof_stability}, we will show that the outputs generated by Algorithm~\ref{alg:main} are inside $\Q^0 \subseteq \Q^1$ with high probability, thus the bounds here characterize the properties of the output controllers.

\begin{lemma}[Estimation error of {\tt GlobalCostEst}]\label{lem: bias and var of mui}
There exists $\beta_0>0$ determined by $A, B, Q_i, R_i, \mK_0, \Sigma_{w}$, such that for  any $\mK \in \Q^1$
and any $1\leq i\leq N$, 
\begin{align}
\left|\mathbb{E}\big[\tilde{J}_i(\mK)\big]
\!-\!
J(\mathsf{K})
\right|
\leq\,&
\frac{J(\mathsf{K})}{T_J}
\left[
\frac{N}{1-\rho_W}
+\beta_0
\right],\label{eq:bias_mu_i} \\
\mathbb{E}
\!
\big(\tilde{J}_i(\mK) \!-\! J(\mathsf{K})\big)^2
\leq\,&
\frac{6nJ(\mathsf{K})^2}{T_J}
\beta_0^2
+\!
\frac{8J(\mathsf{K})^2}{T_J^2}
\!\left[\frac{N}{1\!-\!\rho_W}\right]^{\!2}\! \label{eq:square_diff_mu_i_J}
\end{align}
where the expectation is taken with respect to the process noises when implementing {\tt GlobalCostEst}.
\end{lemma}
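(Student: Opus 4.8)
The plan is to first turn the recursion \eqref{eq:consensus_cost} into a closed form and then split the estimation error into a temporal-averaging part and a spatial-consensus part, bounding each separately. Stacking $\mu(t)=(\mu_1(t),\dots,\mu_N(t))^\top$ and $c(t)=(c_1(t),\dots,c_N(t))^\top$, \eqref{eq:consensus_cost} reads $\mu(t)=\tfrac{t-1}{t}W\mu(t-1)+\tfrac1t c(t)$ with $\mu(0)=0$, and a one-line induction gives $\mu(T_J)=\tfrac{1}{T_J}\sum_{\tau=1}^{T_J}W^{T_J-\tau}c(\tau)$, so that $\tilde J_i(\mK)=\tfrac{1}{T_J}\sum_{\tau=1}^{T_J}[W^{T_J-\tau}c(\tau)]_i$. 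Writing $W^k=\tfrac1N\one\one^\top+E^{(k)}$ with $E^{(k)}:=(W-\tfrac1N\one\one^\top)^k$ (using double stochasticity and idempotency of $\tfrac1N\one\one^\top$), so that $\|E^{(k)}\|\le\rho_W^k$, I decompose $\tilde J_i(\mK)=A_{\mathrm t}+A_{\mathrm c}$, where $A_{\mathrm t}:=\tfrac1{T_J}\sum_\tau \bar c(\tau)$ is the temporal average of the exact spatial mean $\bar c(\tau):=\tfrac1N\sum_j c_j(\tau)$, and $A_{\mathrm c}:=\tfrac1{T_J}\sum_\tau[E^{(T_J-\tau)}c(\tau)]_i$ is the residual consensus error.

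Next I record the statistics of the costs. Since $x(0)=0$ and $u(t)=\mathcal M(\mK)x(t)$, the closed loop is $x(t+1)=Lx(t)+w(t)$ with $L:=A+B\mathcal M(\mK)$; hence $x(\tau)\sim\mathcal N(0,\Sigma_x(\tau))$ with $\Sigma_x(\tau)=\sum_{k=0}^{\tau-1}L^k\Sigma_w(L^k)^\top\preceq\Sigma_x^\infty$, and $c_j(\tau)=x(\tau)^\top\tilde Q_j x(\tau)$ with $\tilde Q_j:=Q_j+\mathcal M(\mK)^\top R_j\mathcal M(\mK)\succeq0$. Because $\mK\in\Q^1$ and $\Q^1$ is compact (Lemma~\ref{lemma:J_smoothness}), $L$ is exponentially stable \emph{uniformly} over $\Q^1$: there exist $C\ge1$ and $\gamma\in(0,1)$, depending only on $A,B,Q_i,R_i,\mK_0,\Sigma_w$, with $\|L^k\|\le C\gamma^k$; this uniform decay is what lets every constant below collapse into a single $\beta_0$. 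I then use $\E[\bar c(\tau)]=\tr(\bar{\tilde Q}\Sigma_x(\tau))$ with $\bar{\tilde Q}:=\tfrac1N\sum_j\tilde Q_j$, $J(\mK)=\tr(\bar{\tilde Q}\Sigma_x^\infty)$, and $0\le\E[c_j(\tau)]\le J_j:=\tr(\tilde Q_j\Sigma_x^\infty)$ with $\tfrac1N\sum_j J_j=J(\mK)$. For the bias \eqref{eq:bias_mu_i}, the temporal part $\E[A_{\mathrm t}]-J(\mK)=\tfrac1{T_J}\sum_\tau\tr\big(\bar{\tilde Q}(\Sigma_x(\tau)-\Sigma_x^\infty)\big)$ is handled by $\Sigma_x^\infty-\Sigma_x(\tau)=\sum_{k\ge\tau}L^k\Sigma_w(L^k)^\top$, rearranging the double sum and using $\|L^k\|\le C\gamma^k$ together with $\tr\bar{\tilde Q}\le J(\mK)/\lambda_{\min}(\Sigma_w)$ (which follows from $\Sigma_x^\infty\succeq\Sigma_w$); this yields a bound $\beta_0 J(\mK)/T_J$. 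The consensus part satisfies $|[E^{(k)}\,\E c(\tau)]_i|\le\rho_W^k\|\E c(\tau)\|_1\le\rho_W^k N J(\mK)$, and summing the geometric series gives $|\E[A_{\mathrm c}]|\le N J(\mK)/\big(T_J(1-\rho_W)\big)$; adding the two recovers \eqref{eq:bias_mu_i}.

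For the second moment \eqref{eq:square_diff_mu_i_J} I use $\E(\tilde J_i-J)^2\le 2\,\E(A_{\mathrm t}-J)^2+2\,\E[A_{\mathrm c}^2]$. In the temporal term, $\E(A_{\mathrm t}-J)^2=\var(A_{\mathrm t})+(\E A_{\mathrm t}-J)^2$, where the squared bias is $O(1/T_J^2)$ and absorbed, while $\var(A_{\mathrm t})=\tfrac1{T_J^2}\sum_{\tau,\tau'}\mathrm{Cov}(\bar c(\tau),\bar c(\tau'))$ is controlled via the Gaussian quadratic-form identity $\mathrm{Cov}\big(x(\tau)^\top\bar{\tilde Q}x(\tau),x(\tau')^\top\bar{\tilde Q}x(\tau')\big)=2\tr(\bar{\tilde Q}\Sigma_{\tau\tau'}\bar{\tilde Q}\Sigma_{\tau'\tau})$, with $\Sigma_{\tau\tau'}=\Sigma_x(\tau\wedge\tau')(L^\top)^{|\tau-\tau'|}$; the factor $\|L^{|\tau-\tau'|}\|^2\le C^2\gamma^{2|\tau-\tau'|}$ makes the double sum geometric, collapsing the $1/T_J^2$ to $1/T_J$, and a spectral bound $\tr((\bar{\tilde Q}\Sigma)^2)\le\|\bar{\tilde Q}\Sigma\|_F^2\le n\|\bar{\tilde Q}\|^2\|\Sigma\|^2$ is precisely where the dimension factor $n$ enters, giving $\var(A_{\mathrm t})\le 3nJ^2\beta_0^2/T_J$. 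For the consensus term, Cauchy--Schwarz with weights $\rho_W^{T_J-\tau}$ gives $A_{\mathrm c}^2\le\tfrac1{T_J^2(1-\rho_W)}\sum_\tau\rho_W^{T_J-\tau}\|c(\tau)\|^2$; taking expectations and using $\E[c_j(\tau)^2]=2\tr((\tilde Q_j\Sigma_x(\tau))^2)+(\E c_j(\tau))^2\le 3J_j^2$ (here I keep the tight $(\tr)^2$ bound, so no $n$ appears) together with $\sum_j J_j^2\le(\sum_j J_j)^2=N^2J^2$ yields $\E[A_{\mathrm c}^2]\le 3N^2J^2/\big(T_J^2(1-\rho_W)^2\big)$. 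Combining the two and folding the lower-order $1/T_J^2$ remainders into the stated constants gives \eqref{eq:square_diff_mu_i_J}.

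The genuinely delicate step is $\var(A_{\mathrm t})$: I must control the correlations of the cost process across time, which forces me through the covariance identity for quadratic forms of a \emph{jointly Gaussian, non-stationary} state sequence (started at $x(0)=0$, so $\Sigma_x(\tau)\neq\Sigma_x^\infty$) and to sum the resulting doubly-indexed series using only the decay $\|L^k\|\le C\gamma^k$. The conceptual crux is establishing that decay \emph{uniformly} over the compact sublevel set $\Q^1$ rather than pointwise for each $\mK$, since this is what permits a single $\beta_0$ — and hence the clean $1/T_J$ rate — valid for every controller the algorithm may visit. Everything else (the closed form, the $\rho_W^k$ contraction, the geometric sums) is routine once this uniform mixing is in hand.
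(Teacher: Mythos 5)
Your proposal is correct, and its skeleton coincides with the paper's: the same closed form $\mu(T_J)=\frac{1}{T_J}\sum_{\tau=1}^{T_J}W^{T_J-\tau}c(\tau)$ of the recursion \eqref{eq:consensus_cost}, the same split into a spatial-consensus residual (geometric in $\rho_W$, producing the $N/(1-\rho_W)$ terms) plus a temporal average of the exact mean cost (producing the $\beta_0$ terms), Gaussian quadratic-form statistics for both pieces, and a single constant $\beta_0$ obtained uniformly over the compact sublevel set $\Q^1$. Where you genuinely diverge is in how the two technical steps are implemented. For the variance of the temporal average, the paper stacks all whitened noises into one Gaussian vector $\varpi$ and writes the temporal sum as a single quadratic form $\varpi^\top\Phi_1\varpi$ (Lemma \ref{lemma:quad_form_mu_J}), bounds $\|\Phi_1\|$ and $\|\Phi_1\|_F$ via a block-Toeplitz/$\mathcal{H}_\infty$-norm estimate (Lemma \ref{lemma:Phi_trace_norm_bound}), and applies $\var(\varpi^\top\Phi_1\varpi)=2\|\Phi_1\|_F^2$ (Proposition \ref{proposition:Gaussian_quadratic_stat}); you instead stay in the time domain, invoke the cross-time identity $\mathrm{Cov}\big(x(\tau)^\top M x(\tau),\,x(\tau')^\top M x(\tau')\big)=2\tr(M\Sigma_{\tau\tau'}M\Sigma_{\tau'\tau})$ for jointly Gaussian states, and sum the doubly indexed geometric series. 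The outcomes are equivalent, and your route avoids the $\mathcal{H}_\infty$ machinery, but note that this covariance identity is a (standard) strengthening of the cited proposition, not the proposition itself, so it needs its own proof or citation. For uniformity of the decay, the paper works with the transformed matrix $\Sigma_{\mK,\infty}^{-1/2}A_{\mK}\Sigma_{\mK,\infty}^{1/2}$, a Lyapunov construction, and continuity of the resulting $\varphi(\mK)$ (Lemma \ref{lemma:bound_iteratedA}), whereas you assert $\|A_{\mK}^k\|\le C\gamma^k$ uniformly on $\Q^1$ from compactness and continuity; this is true (for instance, take $\gamma$ strictly between $\max_{\Q^1}\rho(A_{\mK})$ and $1$, and bound $C$ by a resolvent integral over $|z|=\gamma$ using compactness of $\Q^1\times\{|z|=\gamma\}$), but it is exactly the content your sketch leaves unproven, and it is the one step a referee would ask you to fill in — your own write-up flags it as the crux. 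Two minor consequences of your choices: working with the untransformed $A_{\mK}$ makes your $\beta_0$ carry a factor $\|\Sigma_w\|/\lambda_{\min}(\Sigma_w)$ that the paper's state-space transformation avoids (harmless, since $\Sigma_w$ is permitted data for $\beta_0$), and your bookkeeping yields totals of the form $8n\beta_0^2 J^2/T_J+6N^2J^2/\big(T_J^2(1-\rho_W)^2\big)$ rather than the stated coefficients $6$ and $8$; since Lemma \ref{lem: bias and var of mui} only asserts the existence of some $\beta_0$, inflating $\beta_0$ by a fixed factor repairs this, but you should say so explicitly.
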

The proof is in Appendix \ref{append: bias and var of cost estimation}.

\begin{lemma}[Effect of truncation]\label{lemma:capped_J_bias}
Given $\mK \in \Q^1$, when the constant $\bar J$ satisfies
$$
\bar{J}\geq 
J(\mathsf{K}) \max\left\{\frac{5}{2},
\frac{5N}{T_J(1-\rho_W)}
\right\},
$$
the truncation $\hat J_i(\mK)=\min(\tilde J_i(\mK), \bar J)$ will satisfy
$$
0\leq \mathbb{E}\big[
\tilde{J}_i(\mK)
-\hat{J}_i(\mK)
\big]
\leq
\frac{90 J(\mathsf{K})}{T_J^2}
\left[
n^2\beta_0^4
+\frac{N^2}{(1-\rho_W)^2}
\right].
$$
where $\beta_0$ is defined in Lemma \ref{lem: bias and var of mui}, and  the expectation is taken with respect to the process noises when implementing {\tt GlobalCostEst}.
\end{lemma}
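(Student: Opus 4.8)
Write $\Delta_i := \tilde J_i(\mK) - \hat J_i(\mK)$. Since $\hat J_i(\mK) = \min\{\tilde J_i(\mK),\bar J\}$, there is the exact identity $\Delta_i = (\tilde J_i(\mK) - \bar J)_+ = (\tilde J_i(\mK) - \bar J)\,\mathsf 1_{\tilde J_i(\mK) > \bar J}$. This is nonnegative (and $\tilde J_i(\mK)\ge 0$, being a running average of nonnegative stage costs), so $\mathbb{E}[\Delta_i]\ge 0$ is immediate and the entire content is the upper bound. The plan is to show that $\{\tilde J_i(\mK) > \bar J\}$ is a large-deviation event of the estimator around its target $J(\mK)$ whose expected overshoot is $O(T_J^{-2})$.

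Abbreviate $Y := \tilde J_i(\mK) - J(\mK)$ and $a := \bar J - J(\mK)$, so that $\Delta_i = (Y-a)_+$. The hypothesis $\bar J \ge \tfrac52 J(\mK)$ gives a constant relative margin $a \ge \tfrac32 J(\mK) > 0$. Following the error decomposition behind Lemma~\ref{lem: bias and var of mui}, I would split $Y = Y_{\mathrm t} + Y_{\mathrm c}$ into the temporal-averaging fluctuation $Y_{\mathrm t}$ and the spatial/consensus part $Y_{\mathrm c}$, the former carrying the $O(T_J^{-1})$ contribution $\mathbb{E}[Y_{\mathrm t}^2]\lesssim nJ(\mK)^2\beta_0^2/T_J$ and the latter the $O(T_J^{-2})$ contribution $\mathbb{E}[Y_{\mathrm c}^2]\lesssim J(\mK)^2 N^2/(T_J^2(1-\rho_W)^2)$ that together make up the two terms of the variance bound \eqref{eq:square_diff_mu_i_J}. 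Using the elementary sub-additivity $(u+v-a)_+\le (u-\tfrac a2)_+ + (v-\tfrac a2)_+$, I would obtain $\mathbb{E}[\Delta_i] \le \mathbb{E}[(Y_{\mathrm t}-\tfrac a2)_+] + \mathbb{E}[(Y_{\mathrm c}-\tfrac a2)_+]$, which lets me treat the two error sources at the moment order each one needs.

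For the consensus part, which is already second-order small, the crude pointwise bound $(Y_{\mathrm c}-\tfrac a2)_+ \le \tfrac{2}{a}Y_{\mathrm c}^2$ together with $a\ge\tfrac32 J(\mK)$ gives $\mathbb{E}[(Y_{\mathrm c}-\tfrac a2)_+] \lesssim \mathbb{E}[Y_{\mathrm c}^2]/J(\mK) \lesssim J(\mK)N^2/(T_J^2(1-\rho_W)^2)$, which is exactly the second term of the claim. The temporal part is the crux, and the step I expect to be hardest: the analogous second-moment bound would only yield the too-weak rate $\mathbb{E}[Y_{\mathrm t}^2]/J(\mK)=O(T_J^{-1})$. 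To reach $O(T_J^{-2})$ I would instead use the super-quadratic comparison $(Y_{\mathrm t}-\tfrac a2)_+ \le (\tfrac a2)^{-3}Y_{\mathrm t}^4 \le (\tfrac34 J(\mK))^{-3}Y_{\mathrm t}^4$ (valid because on $\{Y_{\mathrm t}>a/2\}$ one has $Y_{\mathrm t}-\tfrac a2\le Y_{\mathrm t}\le (a/2)^{-3}Y_{\mathrm t}^4$, and off the event the left side vanishes), and establish a fourth-moment estimate $\mathbb{E}[Y_{\mathrm t}^4]\lesssim n^2 J(\mK)^4\beta_0^4/T_J^2$. This is where the $n^2\beta_0^4$ factor of the claim is born: $Y_{\mathrm t}$ is the centered time-average of stage costs that are quadratic forms in the geometrically mixing Gaussian state, hence a weakly dependent mean-zero sum whose fourth central moment scales like the square of its variance, $(\mathbb{E}[Y_{\mathrm t}^2])^2 \sim (nJ(\mK)^2\beta_0^2/T_J)^2$. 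Concretely I would re-run, one moment order higher, the mixing/quadratic-form computation used in the appendix for Lemma~\ref{lem: bias and var of mui}: expand $\mathbb{E}[(\sum_\tau \zeta_\tau)^4]$, use geometric decay of the correlations of the $\zeta_\tau$ to collapse the $O(T_J^4)$ cross terms down to a diagonal count of order $O(T_J^2)$, and track the Gaussian-quartic constants as powers of $\beta_0$ and the state dimension as $n^2$.

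Dividing $\mathbb{E}[Y_{\mathrm t}^4]$ by $(\tfrac34 J(\mK))^3$ produces the first term $\lesssim n^2 J(\mK)\beta_0^4/T_J^2$, and adding the consensus bound yields the stated form, with all numerical factors absorbed into the constant $90$. The remaining hypothesis $\bar J \ge 5N J(\mK)/(T_J(1-\rho_W))$ is what keeps the split honest for \emph{all} $T_J$: for small $T_J$ (large $N/(1-\rho_W)$) it guarantees that $a/2$ still dominates the deterministic consensus bias, so that crossing $a/2$ is genuinely a large fluctuation rather than a bias effect, while for large $T_J$ the margin is already controlled by $\bar J\ge\tfrac52 J(\mK)$; the $\max$ in the hypothesis is exactly the union of these two regimes.
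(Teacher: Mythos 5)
Your proposal is correct in substance but follows a genuinely different route from the paper. The paper's proof stays entirely with the quadratic-Gaussian representation of Lemma~\ref{lemma:quad_form_mu_J}: it applies the concentration inequality of Proposition~\ref{prop: tail_gaussian_general} to the two quadratic forms $\varpi^\top\Phi_1\varpi$ and $N\varpi^\top\Phi_{\rho_W}\varpi$ separately to obtain exponential tail bounds on $\mathbb{P}(\tilde J_i(\mK)>\varepsilon J(\mK))$, and then evaluates the expected overshoot as the integral of the tail, $\mathbb{E}[(\tilde J_i(\mK)-\bar J)_+]=\int_0^{\infty}\mathbb{P}(\tilde J_i(\mK)\geq\bar J+x)\,dx$, converting exponential decay into the stated $T_J^{-2}$ rate via elementary inequalities such as $e^{-x}<1/(2x)$; the branch $\bar J\geq 5NJ(\mK)/(T_J(1-\rho_W))$ of the hypothesis is used there to keep the consensus tail integral under control. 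You avoid concentration altogether and use only moment comparisons: order two for the consensus part, order four for the temporal part. The step you flag as hardest --- $\mathbb{E}[Y_{\mathrm t}^4]\lesssim n^2J(\mK)^4\beta_0^4/T_J^2$ --- is indeed true, and is obtained most cleanly not by a generic mixing computation but by staying with the quadratic form: for symmetric positive semidefinite $M$ and $z\sim\mathcal{N}(0,I_p)$ one has $\mathbb{E}[(z^\top Mz-\operatorname{tr}M)^4]=48\operatorname{tr}(M^4)+12\|M\|_F^4\leq 48\|M\|^2\|M\|_F^2+12\|M\|_F^4$, and plugging in Lemma~\ref{lemma:Phi_trace_norm_bound}'s bounds $\|\Phi_1\|\leq J(\mK)\beta_0$ and $\|\Phi_1\|_F^2\leq nT_JJ(\mK)^2\beta_0^2$ gives exactly your claimed scaling; the finite-horizon bias $T_J^{-1}\operatorname{tr}\Phi_1-J(\mK)\leq 0$ only helps the upper deviation. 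Two caveats. First, with your symmetric $a/2$ split the temporal term evaluates to roughly $140\,n^2J(\mK)\beta_0^4/T_J^2$, overshooting the stated constant $90$; an asymmetric split (e.g.\ margin $0.8a$ for the temporal part and $0.2a$ for the consensus part) brings both terms below $90$, so this is cosmetic but necessary if you want the lemma verbatim. Second, your reading of the hypothesis $\bar J\geq 5NJ(\mK)/(T_J(1-\rho_W))$ is off: in your scheme it is never used, since the comparison $\mathbb{E}[(Y_{\mathrm c}-a/2)_+]\leq(2/a)\,\mathbb{E}[Y_{\mathrm c}^2]$ already yields the $N^2/(1-\rho_W)^2$ term with only $\bar J\geq\tfrac{5}{2}J(\mK)$; thus your argument proves the lemma under a strictly weaker hypothesis, which is a genuine gain of the moment method, while the paper's tail-integration buys, as a byproduct, explicit exponential tail bounds on the estimator that would be needed for high-probability (rather than in-expectation) control of the truncation.
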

The proof is in Appendix \ref{append: truncation error}.


\subsection{Bounding the Gradient Estimation Error}
In this part, we bound the bias and the second moment of the gradient estimator $\hat{\mathsf{G}}^r(\mK, \mD)$ for any $\mK \in \Q^0$.
Our $\hat{\mathsf{G}}^r(\mK, \mD)$ is based on the zero-order gradient estimator $\mathsf{G}^r(\mK, \mD)$ defined in \eqref{eq:zero_order_grad_est}, whose bias can be bounded by the following lemma. 
\begin{lemma}[{\hspace{1sp}\cite[Lemma 6]{malik2018derivative}}]\label{lem: grad Jr(k)-J(K) bdd}
Consider any  $\mathsf{K}\in \mathcal{Q}^0$ and $\mathsf{D}\sim\mathrm{Uni}(\mathbb{S}_{n_K})$, then
$
\left\|\mathbb{E}_{\mathsf{D}}[\mG^r(\mK, \mD)]
-\nabla J(\mathsf{K})\right\|
\leq \phi_0 r
$
for  $r\leq\xi_0$.
\end{lemma}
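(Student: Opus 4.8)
The plan is to identify $\E_{\mD}[\mG^r(\mK,\mD)]$ with the gradient of a \emph{ball-smoothed} version of $J$, and then control the resulting smoothing bias using the Lipschitz continuity of $\nabla J$ supplied by Lemma~\ref{lemma:J_smoothness}. Concretely, for $\mK \in \Q^0$ and $r \le \xi_0$, define the smoothed objective
$$
J_r(\mK) \coloneqq \E_{\mathsf{U} \sim \mathrm{Uni}(\B_{n_K})}\big[J(\mK + r\mathsf{U})\big],
$$
where $\B_{n_K} = \{x \in \R^{n_K} : \|x\| \le 1\}$ is the closed unit ball. Since $r \le \xi_0$, every point $\mK + r\mathsf{U}$ with $\|\mathsf{U}\| \le 1$ lies within distance $\xi_0$ of $\mK \in \Q^0$, so by the second part of Lemma~\ref{lemma:J_smoothness} it belongs to $\Q^1 \subseteq \mathcal{K}_{\mathrm{st}}$; hence $J$ is $C^1$ on a neighbourhood of $\mK + r\B_{n_K}$ and $J_r$ is well defined and differentiable there.

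First I would establish the Stokes-type identity $\E_{\mD}[\mG^r(\mK,\mD)] = \nabla J_r(\mK)$, the classical observation underlying the estimator \eqref{eq:zero_order_grad_est}. Writing $J_r(\mK)$ as a normalized volume integral over $\mK + r\B_{n_K}$ and differentiating this integral over the translating domain (equivalently, applying the divergence theorem), the bulk term vanishes and one is left with a surface integral over $\mK + r\Sb_{n_K}$ whose outward unit normal at $\mK + r v$ is $v$. Converting this surface integral into an expectation over $\mathrm{Uni}(\Sb_{n_K})$, and using that the surface area of $\Sb_{n_K}$ equals $n_K$ times the volume of $\B_{n_K}$, produces exactly the factor $n_K/r$ appearing in \eqref{eq:zero_order_grad_est}, so that $\nabla J_r(\mK) = \frac{n_K}{r}\E_{\mD}[J(\mK + r\mD)\mD] = \E_{\mD}[\mG^r(\mK,\mD)]$.

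Next I would bound the smoothing bias. Differentiating under the expectation --- justified because $J$ is $C^1$ and the integrand is supported on the compact ball $\B_{n_K}$ on which $\nabla J(\mK + r\,\cdot\,)$ is continuous and bounded --- yields $\nabla J_r(\mK) = \E_{\mathsf{U}}[\nabla J(\mK + r\mathsf{U})]$. Therefore, by Jensen's inequality,
$$
\big\|\E_{\mD}[\mG^r(\mK,\mD)] - \nabla J(\mK)\big\|
= \big\|\E_{\mathsf{U}}[\nabla J(\mK + r\mathsf{U}) - \nabla J(\mK)]\big\|
\le \E_{\mathsf{U}}\big[\|\nabla J(\mK + r\mathsf{U}) - \nabla J(\mK)\|\big].
$$
For each $\mathsf{U}$ with $\|\mathsf{U}\| \le 1$ we have $\|r\mathsf{U}\| \le r \le \xi_0$, so the Lipschitz-gradient estimate of Lemma~\ref{lemma:J_smoothness} (with constant $\phi_0$ on the $\xi_0$-neighbourhood of $\Q^0$) gives $\|\nabla J(\mK + r\mathsf{U}) - \nabla J(\mK)\| \le \phi_0 \|r\mathsf{U}\| \le \phi_0 r$. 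Substituting this into the display bounds the right-hand side by $\phi_0 r$, which is the claim.

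The main obstacle is the rigorous justification of the divergence-theorem identity in the first step --- in particular, verifying that $J$ is smooth enough on the \emph{entire} ball $\mK + r\B_{n_K}$ (not merely at $\mK$) for Stokes' theorem and the interchange of differentiation and integration to apply. This is precisely where the restriction $r \le \xi_0$ and the uniform inclusion $\mK + r\B_{n_K} \subseteq \Q^1$ from Lemma~\ref{lemma:J_smoothness} are needed; the remaining bias estimate is a routine Jensen-plus-Lipschitz computation. Since this identity and bound are standard in the zero-order optimization literature, one could alternatively invoke \cite[Lemma~6]{malik2018derivative} directly, but the argument above makes the dependence on $\xi_0$ and $\phi_0$ explicit.
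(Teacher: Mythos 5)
Your proposal is correct, and it reconstructs precisely the argument behind the paper's citation: the paper offers no proof of this lemma, deferring entirely to \cite[Lemma 6]{malik2018derivative}, whose proof is exactly your ball-smoothing identity (via Stokes/Flaxman-type argument) combined with differentiation under the integral, Jensen's inequality, and the local Lipschitz-gradient property that Lemma~\ref{lemma:J_smoothness} provides through $(\xi_0,\phi_0)$. Your handling of the domain issue---using $r\le\xi_0$ to ensure $\mK+r\B_{n_K}\subseteq\Q^1\subseteq\mathcal{K}_{\mathrm{st}}$ so that $J$ is $C^1$ on the whole ball---is the right justification and matches how the constants are meant to be used in this paper.
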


Notice that our gradient estimator $\hat{\mathsf{G}}^r(\mK, \mD)$ relies on the estimated objective value $\hat J_i(\mK+r \mD)$ instead of the accurate value $J(\mK+r\mD)$ as in ${\mathsf{G}}^r(\mK, \mD)$; in addition, the distribution of $\mD$ is only an approximation of $\mathrm{Uni}(\mathbb{S}_{n_K})$. Consequently, there will be additional error in the gradient estimation step. By leveraging Lemma \ref{lem: grad Jr(k)-J(K) bdd} and the cost estimation error bounds in Lemmas~\ref{lem: bias and var of mui} and \ref{lemma:capped_J_bias}
in Section~\ref{subsec:proof_part1}, we obtain bounds on the bias and second moment of our gradient estimator $\hat{\mathsf{G}}^r(\mK, \mD)$.

We introduce an auxiliary quantity $\kappa_0\coloneqq\sup_{\mK\in\mathcal{Q}^1}\|\nabla J(\mK)\|
$.
Lemma~\ref{lemma:J_smoothness} guarantees that $\kappa_0<+\infty$.
\begin{lemma}[Properties of gradient estimation]\label{lem:bias_var_grad_est}
Let $\delta\in(0,1/14]$ be arbitrary. Suppose
\begin{align*}
r & \leq
\min\left\{\frac{14}{15}\xi_0,
\frac{20J(\mK_0)}{\kappa_0}\right\},\quad
\bar J \geq  50 J(\mathsf{K}_0),\\
T_J & \geq
120\max\left\{
n\beta_0^2,
\frac{N}{1-\rho_W}\right\},
\quad
T_S \geq
\frac{\log(N/\delta)}{-\log\rho_W},
\end{align*}
and let $\mathsf{D}$ be generated by {\tt SampleUSphere}. Then for any $\mK \in \mathcal{Q}^0$, we have $\mK+r\mathsf{D}\in\mathcal{Q}^1$. Furthermore,
\begin{align}
\left\|\E\!\left[\hat{\mathsf{G}}^r(\mK,\mathsf{D})\right]
\!-\! \nabla J(\mK)\right\|^2 
\leq\ &
5\phi_0^2r^2
+2\left(\frac{50\delta n_K J(\mK_0)}{r}\right)^2
+
5\left(\frac{50n_K J(\mathsf{K}_0)}{r T_J}
\max\left\{n\beta_0^2
,\frac{N}{1-\rho_W}\right\}\right)^2\label{equ: grad error bdd}\\
\mathbb{E}\!
\left[
\big\|\hat{\mathsf{G}}^r(\mathsf{K},\mathsf{D})\big\|^2
\right]
\leq\ &
\!\left(30J(\mathsf{K}_0)\frac{n_K}{r}\right)^2,\label{equ: grad 2 moment bdd}
\end{align}
where the expectation $\E$ is with respect to $\mathsf{D}$ and the system process noises  in the subroutine {\tt GlobalCostEst}.
\end{lemma}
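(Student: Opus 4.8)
The plan is to reduce our estimator to the ``ideal'' single-point zero-order estimator $\mathsf{G}^r(\mK,\mD^0)=\frac{n_K}{r}J(\mK+r\mD^0)\mD^0$ built from the \emph{exact} uniform direction $\mD^0=\vec((D_i^0)_{i=1}^N)$ of Lemma~\ref{lemma:sampling_error}, whose bias is already controlled by Lemma~\ref{lem: grad Jr(k)-J(K) bdd}, and then to charge the extra error to (i) the approximate sampling and (ii) the use of the truncated cost estimates $\hat J_i$ in place of $J$. Before invoking any per-point lemma at the perturbed controller, I first verify the claimed inclusion $\mK+r\mD\in\mathcal{Q}^1$: by \eqref{eq:sampling_error_bound} we have $\|\mD\|\le\|\mD^0\|+\|\mD-\mD^0\|\le 1+N\rho_W^{T_S}$, and the choice $T_S\ge\log(N/\delta)/(-\log\rho_W)$ forces $N\rho_W^{T_S}\le\delta\le\tfrac1{14}$, so $\|r\mD\|\le r(1+\tfrac1{14})\le\tfrac{14}{15}\xi_0\cdot\tfrac{15}{14}=\xi_0$; the $\mathcal{Q}^0\!\to\!\mathcal{Q}^1$ property of Lemma~\ref{lemma:J_smoothness} then gives $\mK+r\mD\in\mathcal{Q}^1$ and $J(\mK+r\mD)\le 20J(\mK_0)$, and the same computation applies to $\mK+r\mD^0$. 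This is exactly where the constraints $r\le\tfrac{14}{15}\xi_0$ and $\delta\le\tfrac1{14}$ are consumed.

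For the bias \eqref{equ: grad error bdd} I split
$$\E[\hat{\mathsf{G}}^r(\mK,\mD)]-\nabla J(\mK)=\underbrace{\E[\hat{\mathsf{G}}^r(\mK,\mD)-\mathsf{G}^r(\mK,\mD)]}_{(\mathrm{I})\ \text{cost estimation}}+\underbrace{\E[\mathsf{G}^r(\mK,\mD)-\mathsf{G}^r(\mK,\mD^0)]}_{(\mathrm{II})\ \text{sampling}}+\underbrace{\E[\mathsf{G}^r(\mK,\mD^0)]-\nabla J(\mK)}_{(\mathrm{III})\ \text{smoothing}},$$
where $\mathsf{G}^r(\mK,\mD)=\tfrac{n_K}{r}J(\mK+r\mD)\mD$. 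Term $(\mathrm{III})$ is at most $\phi_0 r$ by Lemma~\ref{lem: grad Jr(k)-J(K) bdd}, producing the $5\phi_0^2r^2$ piece. For $(\mathrm{I})$ I condition on $\mD$ (the Gaussian sampling noise $V_i$ is independent of the process noise driving {\tt GlobalCostEst}), so the $i$-th block equals $\tfrac{n_K}{r}\,\E_{\text{noise}}[\hat J_i(\mK+r\mD)-J(\mK+r\mD)\mid\mD]\,D_i$; the conditional bias is bounded by $|\E[\tilde J_i-J]|+\E[\tilde J_i-\hat J_i]$, i.e.\ by \eqref{eq:bias_mu_i} plus Lemma~\ref{lemma:capped_J_bias}, and the hypothesis $T_J\ge 120\max\{n\beta_0^2,\tfrac{N}{1-\rho_W}\}$ demotes the truncation term below the leading $O\!\big(\tfrac{J(\mK_0)}{T_J}\max\{n\beta_0^2,\tfrac{N}{1-\rho_W}\}\big)$ bias. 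Summing over blocks with $\sum_i\E\|D_i\|_F^2\le(1+N\rho_W^{T_S})^2$ yields the third term of \eqref{equ: grad error bdd}. For $(\mathrm{II})$ I write $J(\mK+r\mD)\mD-J(\mK+r\mD^0)\mD^0=J(\mK+r\mD)(\mD-\mD^0)+(J(\mK+r\mD)-J(\mK+r\mD^0))\mD^0$, bound the first summand by $20J(\mK_0)\,\|\mD-\mD^0\|$ and the second by $\kappa_0 r\,\|\mD-\mD^0\|$ (since $\|\nabla J\|\le\kappa_0$ on the $\xi_0$-ball around $\mK$, which lies in $\mathcal{Q}^1$); using $r\le 20J(\mK_0)/\kappa_0$ and $\|\mD-\mD^0\|\le N\rho_W^{T_S}\le\delta$ makes both $O(J(\mK_0)\delta)$, giving the $2(50\delta n_K J(\mK_0)/r)^2$ piece. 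The triangle inequality and squaring then assemble \eqref{equ: grad error bdd}.

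For the second moment \eqref{equ: grad 2 moment bdd} the crude estimate $\hat J_i\le\bar J$ is too loose, as it would leave $\bar J\ge 50J(\mK_0)$ in the bound; the key is instead the pointwise inequality $0\le\hat J_i\le\tilde J_i$. Conditioning on $\mD$ and using $\E_{\text{noise}}[\tilde J_i^2]\le 2\E[(\tilde J_i-J)^2]+2J^2$ together with the variance bound \eqref{eq:square_diff_mu_i_J}, the lower bound on $T_J$ forces $\E[(\tilde J_i-J)^2]$ down to a small multiple of $J^2$, so $\E_{\text{noise}}[\hat J_i^2\mid\mD]\le\E_{\text{noise}}[\tilde J_i^2\mid\mD]=O(J(\mK_0)^2)$. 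Hence
$$\E[\|\hat{\mathsf{G}}^r(\mK,\mD)\|^2]=\frac{n_K^2}{r^2}\,\E_{\mD}\!\Big[\sum_{i=1}^N\E_{\text{noise}}[\hat J_i^2\mid\mD]\,\|D_i\|_F^2\Big]\le\frac{n_K^2}{r^2}\,O(J(\mK_0)^2)\,(1+N\rho_W^{T_S})^2,$$
and $(1+N\rho_W^{T_S})^2\le(1+\delta)^2$ is a constant close to $1$, delivering \eqref{equ: grad 2 moment bdd}.

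The main obstacle is term $(\mathrm{I})$: one must cleanly separate the sampling randomness from the process noise so as to legitimately condition on $\mD$ and apply the per-point Lemmas~\ref{lem: bias and var of mui}--\ref{lemma:capped_J_bias}, and then verify that the chosen $T_J$ truly pushes the truncation bias of Lemma~\ref{lemma:capped_J_bias} below the leading consensus/temporal-averaging bias, so that the \emph{agent-specific} estimates $\hat J_i$ aggregate into a single term of the stated order. A secondary subtlety, easy to miss, is that the second moment must be controlled through $\hat J_i\le\tilde J_i$ rather than through the truncation level $\bar J$.
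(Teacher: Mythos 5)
Your proposal follows essentially the same route as the paper's proof: the same verification that $\mK+r\mD\in\mathcal{Q}^1$ via $\|\mD\|\le 1+\delta\le\tfrac{15}{14}$, the same three-way decomposition of the bias (your terms $(\mathrm{I})$, $(\mathrm{II})$, $(\mathrm{III})$ are exactly the paper's \eqref{equ: P1}, \eqref{equ: P2}$+$\eqref{equ: P3}, and the smoothing term handled by Lemma~\ref{lem: grad Jr(k)-J(K) bdd}), the same conditioning on $\mD$ to invoke Lemmas~\ref{lem: bias and var of mui} and~\ref{lemma:capped_J_bias} with the $T_J$ condition demoting the truncation bias, and the same key observation $0\le\hat J_i\le\tilde J_i$ (rather than $\hat J_i\le\bar J$) for the second moment. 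The only difference is constant bookkeeping: to land exactly on the stated numerical constants (e.g.\ the $30$ in \eqref{equ: grad 2 moment bdd}) the paper uses weighted Cauchy--Schwarz splits (coefficients $\tfrac{5}{4},5$ and $4,\tfrac{4}{3}$) where you use equal-weight ones, which would yield marginally larger constants but the same bound structure.
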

\begin{proof}
Firstly, the condition on $T_S$ implies $N\rho_W^{T_S}\leq \delta$ since $\rho_W<1$, and by Lemma~\ref{lemma:sampling_error}, we have
\begin{equation}\label{equ: |D|<1+delta}
\|\mathsf{D}\|^2
=\sum_{i=1}^N \|D_i\|_F^2
\leq (1+\delta)^2
\leq\left(\frac{15}{14}\right)^2,
\end{equation}
and consequently $\|r\mathsf{D}\|\leq \xi_0$. By the definition of $\xi_0$ below Lemma \ref{lemma:J_smoothness}, we have that $\mK+r \mD \in \Q^1$ for any $\mK \in \Q^0$.

We then proceed to prove the two inequalities. We let $D_i^0,\,i=1,\ldots,N$ and $\mathsf{D}^0$ denote the random matrices and the random vector as defined in Lemma~\ref{lemma:sampling_error}, so that $\mathsf{D}^0\sim\mathrm{Uni}(\mathbb{S}_{n_K})$ and the bounds \eqref{eq:sampling_error_bound} hold.

\begin{itemize}[leftmargin=0pt, itemindent=12pt]

\item {\it Proof of \eqref{equ: grad error bdd}:}
Let $\mK\in\mathcal{Q}^0$ be arbitrary. Notice that
\begin{align*}
\left\|\E\!\left[\hat{\mathsf{G}}^r(\mK,\mathsf{D})\right]- \nabla J(\mK)
\right\|^2
\leq\,&
\frac{5}{4}
\!\left\|\E\!\left[\hat{\mathsf{G}}^r\!(\mK,\!\mathsf{D})
\!-\!
\mathsf{G}^r\!(\mK,\!\mathsf{D}^0)
\right]\right\|^2
\!\!+\! 5\!\left\|
\mathbb{E}\!\left[\mathsf{G}^r\!(\mK,\!\mathsf{D}^0)\right] \!-\! \nabla J(\mK)\right\|^2\\
\leq\,&
\frac{5}{4}\frac{n_K^2}{r^2}\sum_{i=1}^N \left\|\E\!\left[ \hat{J}_i(\mK \!+\! r\mD)D_i
-J(\mathsf{K} \!+\! r\mathsf{D}^0)D_i^0\right]\right\|_F^2
+ 5\phi_0^2 r^2,
\end{align*}
where we use Lemma \ref{lem: grad Jr(k)-J(K) bdd} and $2 xy \leq (x/r)^2 +(ry)^2$ for any $x,y,r$. By leveraging the same trick, we  bound  the first term:
\begin{align}
& \sum_{i=1}^N\left\|
\E\!\left[ \hat{J}_i(\mK + r\mD)D_i
-J(\mathsf{K} + r\mathsf{D}^0)D_i^0\right]
\right\|_F^2 \nonumber\\
\leq\ &
\frac{5}{3}\sum_{i=1}^N\left\|
\E\!\left[\left(\hat{J_i}(\mK+r\mD)D_i
-J(\mK+r\mD)\right)D_i\right]\right\|^2_F \label{equ: P1}\tag{P1}\\
& +
5\sum_{i=1}^N\left\|\E\!\left[
J(\mK+r\mD)\left(D_i-D_i^0\right)\right]\right\|^2_F \label{equ: P2}\tag{P2}\\
& + 
5\sum_{i=1}^N\left\|\E\!\left[
\left(J(\mK+r\mD)
-J(\mathsf{K}+r\mathsf{D}^0)\right)D_i^0\right]
\right\|_F^2.\label{equ: P3}\tag{P3}
\end{align}
Next, we bounds \eqref{equ: P1}, \eqref{equ: P2}, \eqref{equ: P3}. Remember that $K+r D\in \Q^1$.
For \eqref{equ: P1}, by \eqref{equ: |D|<1+delta}, we have
\begin{align*}
\sum_{i=1}^N\left\|
\E\!\left[\left(\hat{J_i}(\mK+r\mD)D_i
-J(\mK+r\mD)\right)D_i\right]\right\|^2_F
=\,&
\sum_{i=1}^N \left\|\E_\mD
\!\left[
\E\!\left[
\left.\hat J_i(\mK \!+\! r\mD)-J(\mK\!+\!r\mD)
\right|\mD\right] D_i \right]\right\|_F^2\\
\leq\,&
\sum_{i=1}^N
\sup_{\mK' \in \mathcal{Q}^1}
\!\!
\left\{\left|\E_w\!\left[ \hat J_i(\mK')\right] \!-\! J(\mK')\right|^2\right\}
\!\cdot\! \E\!\left[\|D_i\|_F^2\right]\\
\leq\,&
(1+\delta)^2 \max_{1\leq i \leq N}
\sup_{\mK' \in \mathcal{Q}^1}
\!\!\left\{\left|\E_w\!\left[ \hat J_i(\mK')\right]-J(\mK')\right|^2\right\},
\end{align*}
where $\mathbb{E}_w$ denotes expectation with respect to the noise process of the dynamical system in the subroutine {\tt GlobalCostEst}. For any $i$ and $\mK'\in \Q^1$, we have
$ \left|\E_w\!\left[\hat J_i(\mK')\right]-J(\mK')\right| 
\leq
\left|\E_w\!\left[\hat J_i(\mK')-\E \tilde J_i(\mK')\right] \right|
+ \left|\E_w\!\left[\tilde J_i(\mK')\right]-J(\mK')\right|
\leq
\frac{90 J(\mathsf{K}')}{T_J^2}
\left(
n^2
\beta_0^4
\!+\!\frac{N^2}{(1-\rho_W)^2}
\right)
+
\frac{J(\mathsf{K}')}{T_J}\left(
\beta_0
\!+\!
\frac{N}{1-\rho_W}\right),
$
where we used Lemma \ref{lemma:capped_J_bias} and the bound \eqref{eq:bias_mu_i} in Lemma \ref{lem: bias and var of mui} in the second inequality. Notice that the condition on $T_J$ implies
$$
\frac{1}{T_J^2}
\left(n^2\beta_0^4+\frac{N^2}{(1-\rho_W)^2}\right)
\leq
\frac{1}{120T_J}\left(n\beta_0^2
+\frac{N}{1-\rho_W}\right),
$$
and since $J(\mathsf{K}')\leq 20 J(\mathsf{K}_0)$ for $\mathsf{K}'\in\mathcal{Q}^1$, we obtain
$$
\begin{aligned}
\left|\E_w\!\left[\hat J_i(\mK')\right]-J(\mK')\right|
\leq\,&
\frac{3 J(\mathsf{K}')}{4T_J}
\left(
n
\beta_0^2
\!+\!\frac{N}{1-\rho_W}
\right)
+
\frac{J(\mathsf{K}')}{T_J}\left(
\beta_0
\!+\!
\frac{N}{1-\rho_W}\right) \\
\leq\,&
\frac{20J(\mathsf{K}_0)}{T_J}
\left(
\left(\frac{3}{4}n+1\right)\beta_0^2
+\left(\frac{3}{4}+1\right)
\frac{N}{1-\rho_W}
\right) \\
\leq\,&
\frac{35J(\mathsf{K}_0)}{T_J}
\!\left(\!n\beta_0^2
\!+\!\frac{N}{1\!-\!\rho_W}\!\right)
\!\leq\!
\frac{70J(\mathsf{K}_0)}{T_J}
\max\!\left\{\!n\beta_0^2
,\frac{N}{1\!-\!\rho_W}\!\right\}.
\end{aligned}
$$
Therefore, by $\delta\leq 1/14$, we have
$$
\begin{aligned}
\sum_{i=1}^N\left\|
\E\!\left[\left(\hat{J_i}(\mK+r\mD)D_i
-J(\mK+r\mD)\right)D_i\right]\right\|^2
\leq\ &
(1+\delta)^2
\left(\frac{70J(\mathsf{K}_0)}{T_J}
\max\left\{n\beta_0^2
,\frac{N}{1-\rho_W}\right\}\right)^2 \\
\leq\ &
\left(\frac{75J(\mathsf{K}_0)}{T_J}
\max\left\{n\beta_0^2
,\frac{N}{1-\rho_W}\right\}\right)^2.
\end{aligned}
$$

Next, by Lemma~\ref{lemma:sampling_error} and $\mD^0\sim \text{Uni}(\mathbb S_{n_K})$, we  bound \eqref{equ: P2} by
\begin{align*}
\sum_{i=1}^N\left\|\E\!\left[
J(\mK+r\mD)\left(D_i-D_i^0\right)\right]\right\|^2
\leq\ &
\sup_{\mK\in\mathcal{Q}^1}J(\mK')^2
\cdot \sum_{i=1}^N
\mathbb{E}\left[\|D_i-D_i^0\|^2_F\right] \\
\leq\ &
(20J(\mK_0))^2\cdot \delta^2
\mathbb{E}\left[\|D_i^0\|^2_F\right]
=\delta^2(20J(\mK_0))^2.
\end{align*}
Further, by $\kappa_0=\sup_{\mK\in\mathcal{Q}^1}\|\nabla J(\mK)\|
$, we can bound \eqref{equ: P3} by
\begin{align*}
\sum_{i=1}^N\left\|\E\!\left[
\left(J(\mK+r\mD)
-J(\mathsf{K}+r\mathsf{D}^0)\right)D_i^0\right]
\right\|_F^2
\leq\ &
\sum_{i=1}^N
\E\!\left[
r^2\kappa_0^2\|\mD-\mD^0\|^2\|D_i^0\|_F^2\right] \\
=\ &
r^2\kappa_0^2\E\!\left[
\|\mD-\mD^0\|^2\right]
\leq
\delta^2 r^2\kappa_0^2
\leq \delta^2 (20J(\mK_0))^2,
\end{align*}
where we used the assumption that $r\leq 20J(\mK_0)/\kappa_0$.

Now we summarize all the previous results and obtain
\begin{align*}
& \left\|\E\!\left[\hat{\mathsf{G}}^r(\mK,\mathsf{D})\right]- \nabla J(\mK)
\right\|^2 \\
\leq\ &
5\phi_0^2r^2
+\frac{5}{4}\frac{n_K^2}{r^2}
\Bigg[5\delta^2(20J(\mK_0))^2
+5\delta^2 (20J(\mK_0))^2
+\frac{5}{3}
\left(\frac{75J(\mathsf{K}_0)}{T_J}
\max\left\{n\beta_0^2
,\frac{N}{1-\rho_W}\right\}\right)^2\Bigg] \\
\leq\ &
5\phi_0^2r^2
+2\left(\frac{50\delta n_K J(\mK_0)}{r}\right)^2
+
5\left(\frac{50n_K J(\mathsf{K}_0)}{r T_J}
\max\left\{n\beta_0^2
,\frac{N}{1-\rho_W}\right\}\right)^2.
\end{align*}

\item {\it Proof of \eqref{equ: grad 2 moment bdd}:}
Let $\mK\in\mathcal{Q}^0$ be arbitrary. It can be seen that
\begin{align*}
\mathbb{E}
\!\left[
\big\|\hat{\mathsf{G}}^r(\mathsf{K},\mathsf{D})\big\|^2\right]
=\ &
\sum_{i=1}^N
\mathbb{E}
\!\left[
\big\|\hat{G}^r_i(\mathsf{K},\mathsf{D})\big\|^2_F\right]
=
\sum_{i=1}^N
\E\!\left[\frac{n_K^2}{r^2}\,
\E\!\left[
\left.\hat J_i(\mK \!+\! r\mD)^2\right| \mD\right]\cdot \|D_i \|_F^2 \right]\\
\leq\,&
\frac{n_K^2}{r^2} \max_{1\leq i \leq N}
\sup_{\mK' \in \mathcal{Q}^1}
\!\!
\left\{\E_w\!\big[\hat{J}_i(\mK')^2\big]
\right\}
\cdot
\mathbb{E}\!\left[\sum_{i=1}^N
\|D_i\|_F^2
\right]\\
\leq \,&
\frac{n_K^2}{r^2}(1+\delta)^2 \max_{1\leq i \leq N}
\sup_{\mK' \in \mathcal{Q}^1}
\!\!\left\{\E_w\!\big[\tilde{J}_i(\mK')^2\big]\right\},
\end{align*}
where the last inequality follows from $0\leq \hat J_i(\mathsf{K}')
\leq \tilde J_i(\mathsf{K}')$. On the other hand, for any $\mathsf{K}'\in\mathcal{Q}^1$, we have
$$
\begin{aligned}
\mathbb{E}_w\!\big[\tilde J_i(\mathsf{K}')^2\big]
\leq\,&
4\,
\E_w\!\left[ \big(\tilde J_i(\mK')-J(\mK')\big)^2
\right]+  \frac{4}{3} J(\mK') ^2\\
\leq\,&
4\left[
\frac{6nJ(\mathsf{K}')^2}{T_J}\beta_0^2
+
\frac{8J(\mathsf{K}')^2}{T_J^2}\left(\frac{N}{1 \!-\! \rho_W}\right)^2
\right] 
+ \frac{4}{3}J(\mK')^2\\
\leq\,&
4(20J(\mK_0))^2\Bigg[\frac{6n\beta_0^2}{T_J}
+
\frac{8}{T_J^2}\left(\frac{N}{1 \!-\! \rho_W}\right)^2\Bigg]
+ \frac{4}{3}(20 J(\mK_0))^2 \\
\leq\,&
(20J(\mathsf{K}_0))^2
\left[
4\left(\frac{6}{120}
+\frac{8}{120^2}\right)+\frac{4}{3}
\right]
< (28J(\mathsf{K}_0))^2,
\end{aligned}
$$
where the second inequality uses \eqref{eq:square_diff_mu_i_J} in Lemma \ref{lem: bias and var of mui}, the third inequality follows from $J(\mathsf{K}')\leq 20J(\mathsf{K}_0)$ for $\mathsf{K}'\in\mathcal{Q}^1$, and the last two inequalities follow from the condition on $T_J$. By combining this bound with previous results and noting that $1\!+\!\delta\leq 15/14$, we obtain the bound on $\mathbb{E}\!\left[\big\|\hat{\mathsf{G}}^r(\mathsf{K},\mathsf{D})\big\|^2\right]$.
\end{itemize}
\end{proof}

\subsection{Analysis of One-Step Stochastic Gradient Update}
Step~4 of Algorithm~\ref{alg:main} can be viewed as a stochastic gradient descent update with biased gradient estimation. In this part, we characterize the change in the objective value of this step.

We shall use $\mathcal{F}_s$ to denote the filtration $\sigma(K_i(s'):s'\leq s)$ for each $s=1,\ldots,T_G$.


\begin{lemma}\label{lem: key lemma}
 Suppose $\bar J\geq 50J(\mK_0)$ and
\begin{align*}
&r\leq
\min\left\{\frac{14}{15}\xi_0,
\frac{20J(\mK_0)}{\kappa_0}\right\},
\ 
\eta \leq
\min\left\{\frac{14\xi_0 r}{15n_K\bar J},
\frac{1}{25\phi_0}
\right\},\\
&T_J \geq
120\max\left\{
n\beta_0^2,
\frac{N}{1-\rho_W}\right\},
\quad
T_S \geq\frac{\log(8N^2/(\phi_0\eta))}{-2\log\rho_W}.
\end{align*}
Then, as long as $\mK(s)\in \mathcal{Q}^0$, we will have $\mK(s+1)\in \mathcal Q^1$ and
\begin{equation}\label{eq:one_step_policy_grad_bound}
\E[J(\mK(s+1))\mid \mathcal{F}_s]
\leq J(\mK(s))- \frac{\eta}{2}\|\nabla J(\mK(s))\|^2 +
\frac{\eta}{2}Z
\end{equation}
where
\begin{align}
Z
\coloneqq\ &
5
\!\left[\phi_0^2 r^2 + \left(\frac{50J(\mathsf{K}_0)n_K}{rT_J}
\max\left\{
n\beta_0^2,
\frac{N}{1 \!-\! \rho_W}\right\}\right)^{\!2}
\right]
+
\phi_0 \eta
\left(
40J(\mathsf{K}_0)\frac{n_K}{r}
\right)^2.\label{equ: Z's formula}
\end{align}

\end{lemma}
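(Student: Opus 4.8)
The plan is to treat Step~4 as one iteration of biased stochastic gradient descent, $\mK(s+1)=\mK(s)-\eta\hat{\mathsf G}^r(\mK(s),\mD(s))$, and to apply the standard descent inequality for a function whose gradient is $\phi_0$-Lipschitz. All the problem-specific work then reduces to two things: controlling the step length \emph{deterministically} so that the smoothness inequality is legitimate, and plugging in the conditional bias and second-moment bounds already established in Lemma~\ref{lem:bias_var_grad_est}.

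First I would establish $\mK(s+1)\in\mathcal{Q}^1$ almost surely. Because the estimate $\hat G^r_i(s)=\frac{n_K}{r}\hat J_i(s)D_i(s)$ is built from the \emph{truncated} value $\hat J_i(s)\le\bar J$, we get the deterministic bound $\|\hat{\mathsf G}^r(\mK(s),\mD(s))\|\le\frac{n_K}{r}\bar J\,\|\mD(s)\|$, and Lemma~\ref{lemma:sampling_error} together with the hypothesis on $T_S$ (which forces $N\rho_W^{T_S}\le\frac{1}{14}$) gives $\|\mD(s)\|\le\frac{15}{14}$. Hence $\|\mK(s+1)-\mK(s)\|=\eta\|\hat{\mathsf G}^r\|\le\frac{15}{14}\cdot\frac{\eta n_K\bar J}{r}\le\xi_0$ by the first bound in the hypothesis on $\eta$. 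Since $\mK(s)\in\mathcal{Q}^0$, the definition of $\xi_0,\phi_0$ below Lemma~\ref{lemma:J_smoothness} then yields $\mK(s+1)\in\mathcal{Q}^1$ and makes $\nabla J$ Lipschitz along the segment joining $\mK(s)$ and $\mK(s+1)$. This truncation-driven deterministic step bound is exactly the ingredient that the noiseless/bounded-noise arguments of \cite{fazel2018global,malik2018derivative} do not require.

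Next I would apply the descent inequality $J(\mK(s+1))\le J(\mK(s))-\eta\langle\nabla J(\mK(s)),\hat{\mathsf G}^r\rangle+\frac{\phi_0\eta^2}{2}\|\hat{\mathsf G}^r\|^2$ and take $\E[\cdot\mid\mathcal F_s]$. Since $\mK(s)$ is $\mathcal F_s$-measurable while the randomness of $\hat{\mathsf G}^r(\mK(s),\mD(s))$ (the perturbation $\mD(s)$ and the process noise inside {\tt GlobalCostEst}) is independent of $\mathcal F_s$, the conditional mean $\E[\hat{\mathsf G}^r\mid\mathcal F_s]$ is precisely the quantity bounded in Lemma~\ref{lem:bias_var_grad_est}. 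Writing $b\coloneqq\E[\hat{\mathsf G}^r\mid\mathcal F_s]-\nabla J(\mK(s))$ and using Young's inequality $\langle\nabla J,b\rangle\le\frac12\|\nabla J\|^2+\frac12\|b\|^2$ gives $\E[J(\mK(s+1))\mid\mathcal F_s]\le J(\mK(s))-\frac{\eta}{2}\|\nabla J(\mK(s))\|^2+\frac{\eta}{2}\|b\|^2+\frac{\phi_0\eta^2}{2}\E[\|\hat{\mathsf G}^r\|^2]$, so it remains to show $\|b\|^2+\phi_0\eta\,\E[\|\hat{\mathsf G}^r\|^2]\le Z$.

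The final step, where the constant bookkeeping is most delicate, is invoking Lemma~\ref{lem:bias_var_grad_est} with the right choice of its free parameter $\delta$ (all other hypotheses of that lemma follow directly from the hypotheses here, including $r\le 20J(\mK_0)/\kappa_0$). I would set $\delta=\sqrt{\phi_0\eta/8}$; a direct computation shows this makes $\log(N/\delta)/(-\log\rho_W)$ coincide with the stated threshold $\log(8N^2/(\phi_0\eta))/(-2\log\rho_W)$, and $\eta\le\frac{1}{25\phi_0}$ guarantees $\delta\le\frac{1}{14}$ as the lemma requires. Then \eqref{equ: grad error bdd} bounds $\|b\|^2$ by $5\phi_0^2r^2$, plus the averaging term $5\big(\frac{50 n_K J(\mK_0)}{rT_J}\max\{n\beta_0^2,\frac{N}{1-\rho_W}\}\big)^2$, plus the sampling term $2(\frac{50\delta n_K J(\mK_0)}{r})^2$; substituting $\delta^2=\phi_0\eta/8$ turns the last into $\frac{\phi_0\eta}{4}(\frac{50 n_K J(\mK_0)}{r})^2$, and combining with $\phi_0\eta\,\E[\|\hat{\mathsf G}^r\|^2]\le\phi_0\eta(30J(\mK_0)\frac{n_K}{r})^2$ from \eqref{equ: grad 2 moment bdd} yields $\phi_0\eta(625+900)(\frac{n_K J(\mK_0)}{r})^2\le\phi_0\eta(40J(\mK_0)\frac{n_K}{r})^2$, since $625+900<1600$. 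The two surviving terms reproduce the bracketed $5[\cdots]$ in \eqref{equ: Z's formula}, identifying $Z$. I expect the main obstacle to be exactly this coordination of numerical constants — tying $\delta$ to the $T_S$ threshold and checking that the sampling-error piece plus the second-moment term fit inside the single $\phi_0\eta(40\cdots)^2$ slot of $Z$ — rather than any conceptual difficulty, as the descent step itself is routine.
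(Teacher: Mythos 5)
Your proposal is correct and follows essentially the same route as the paper's proof: the same deterministic step-length bound via truncation and the sampling lemma to get $\mK(s+1)\in\mathcal{Q}^1$, the same descent inequality with Cauchy/Young's inequality after conditioning on $\mathcal F_s$, the same choice $\delta=\sqrt{\phi_0\eta/8}$ matching the $T_S$ threshold, and the same constant bookkeeping (your $625+900\le 1600$ is exactly the paper's $\tfrac{\phi_0\eta^2}{2}(30\cdots)^2+\eta(\tfrac{50\delta\cdots}{r})^2\le\tfrac{\phi_0\eta^2}{2}(40\cdots)^2$ after factoring out $\eta/2$). No gaps.
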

\begin{proof}
By denoting $\delta = \sqrt{\phi_0\eta/8}$, we see that $N\rho_W^{T_S}\leq\delta$, and by the upper bound on $\eta$ we have $\delta\leq 1/14$, so $\delta$ satisfies the condition in Lemma \ref{lem:bias_var_grad_est}. Therefore, by \eqref{equ: |D|<1+delta}, we have
\begin{align*}
\|\mK(s\!+\!1)\!-\!\mK(s)\|^2
=\ &
\eta^2\sum_{i=1}^N \|\hat{G}^r_i(s)\|_F^2
\leq
\eta^2 \sum_{i=1}^N 
\frac{n_K^2\bar J^2 }{r^2}\|  D_i(s)\|_F^2 \\
\leq\ &
\left(\frac{(1+\delta)\eta\,n_K \bar J}{r}\right)^2\leq \xi_0^2,
\end{align*}
which implies $\mK(s+1)\in \mathcal Q^1$ as long as $\mK(s) \in \Q^0$. Secondly, since $\mK(s)\in \Q^0$, by  Lemma \ref{lemma:J_smoothness}, we have
\begin{align*}
J(\mK(s\!+\!1)) 
\leq\!
J(\mK(s)) \!-\! \eta\langle \nabla\! J(\mK(s)), \hat{\mathsf{G}}^r\!(s)\rangle \!+\!
\frac{\phi_{0}\eta^2\!}{2}\big\| \hat \mG^r\!(s)\big\|^2.
\end{align*}
Taking expectation conditioned on the filtration $\mathcal{F}_s$ yields
$$
\begin{aligned}
\E [J(\mK(s+1)) \,|\, \mathcal{F}_s ]
\leq \ &
J(\mK(s)) -\eta \left\langle \nabla J(\mK(s)), \E\!\left[\!\left.\hat \mG^r(s) \right| \mathcal{F}_s\right] \right\rangle
+ \frac{\phi_0}{2} \eta^2 
\E\!\left[
\!\left.\big\| \hat \mG^r(s)\big\|^2\right| \mathcal{F}_s\right] \\
=\ &
J(\mK(s)) -\eta \|\nabla J(\mK(s))\|^2
+ \frac{\phi_0}{2} \eta^2 \E\!\left[\!\left.\big\| \hat \mG^r(s)\big\|^2\right| \mathcal{F}_s\right] \\
&
+\eta  \left\langle \nabla J(\mK(s)), \nabla J(\mK(s))-\E\!\left[\!\left.\hat \mG^r(s)\right| \mathcal{F}_s\right] \right\rangle \\
\leq\ & J(\mK(s)) -  \frac{\eta}{2}  \|\nabla J(\mK(s))\|^2
+ \frac{\phi_0}{2} \eta^2\E\!\left[\!\left.\big\| \hat \mG^r(s)\big\|^2\right| \mathcal{F}_s\right] \\
&
+ \frac{\eta}{2}
\left\| \nabla J(\mK(s))-\E\!\left[\!\left.\hat \mG^r(s)\right| \mathcal{F}_s\right]\right\|^2,
\end{aligned}
$$
where we used Cauchy's inequality in the last step. 

By applying the 
results of Lemma \ref{lem:bias_var_grad_est} to  above, we obtain
$$
\begin{aligned}
\E [J(\mK(s+1)) \,|\, \mathcal{F}_s ]
\leq\ &
J(\mK(s))
-\frac{\eta}{2}  \|\nabla J(\mK(s))\|^2 \\
&
+\frac{\phi_0\eta^2}{2}
\left(30J(\mK_0)\frac{n_K}{r}\right)^2
+
\frac{\eta}{2}
\cdot 2\left(\frac{50\delta n_K J(\mK_0)}{r}\right)^2
\\
&
+\frac{\eta}{2}
\cdot 5\left[\phi_0^2 r^2+\left(\frac{50n_K J(\mathsf{K}_0)}{r T_J}
\max\left\{n\beta_0^2
,\frac{N}{1-\rho_W}\right\}\right)^2\right] \\
\leq\ &
J(\mK(s))
-\frac{\eta}{2}  \|\nabla J(\mK(s))\|^2
+\frac{\phi_0\eta^2}{2}
\left(40J(\mK_0)\frac{n_K}{r}\right)^2
\\
&
+\frac{\eta}{2}
\cdot 5\left[\phi_0^2r^2+\left(\frac{50n_K J(\mathsf{K}_0)}{r T_J}
\max\left\{n\beta_0^2
,\frac{N}{1-\rho_W}\right\}\right)^2\right],
\end{aligned}
$$
which concludes the proof.
\end{proof}


\subsection{Proving Stability of the Output Controllers}
\label{subsec:proof_stability}
Next, we show that all the output controllers $\{\mK(s)\}_{s=1}^{T_G}$ are in $\Q^0$ with high probability, which then implies that all the output controllers are stabilizing with high probability.

We assume that the algorithmic parameters satisfy the conditions in Theorem~\ref{theorem:main}. It's not hard to see that for sufficiently small $\epsilon>0$, the conditions of Lemma~\ref{lem: key lemma} are satisfied.

We define a stopping time $\tau$ to be the first time step when $K(s)$ escapes $\Q^0$:
\begin{equation}\label{equ: stopping time def tau}
\tau\!
\coloneqq
\min\left\{s\in\{1,\ldots,T_G \!+\! 1\}: J(\mK(s))>10 J(\mK_0)\right\}.
\end{equation}
Our goal is then to bound the probability $\Pb(\tau \leq T_G)$.
We first note that, under the conditions of Theorem~\ref{theorem:main},
\begin{align}
Z\leq\,&
5 \!\left[\phi_0^2 r^2
	\!+\!
	\left(\!
	\frac{J(\mathsf{K}_0)n_K}{r}\frac{r\sqrt{\epsilon}}{20 J(\mathsf{K}_0)n_K}
	\!\right)^{\!2}\right]
	+
\phi_0
	\!\left(40J(\mathsf{K}_0)\frac{n_K}{r}\right)^2\!\!
	\frac{3\epsilon r^2}{320\phi_0
		\!\left(40J(\mathsf{K}_0)\right)^2\! n_K^2}\nonumber
\\
	\leq\,&
	5\left(
	\frac{\epsilon}{1600}
	+\frac{\epsilon}{400}\right)
	+\frac{3\epsilon}{320}
	=\frac{\epsilon}{40}.\label{equ: Z's bdd}
\end{align}
Now,
we define a nonnegative supermartingale $Y(s)$ by
$$
	Y(s) \coloneqq
	J(\mK(\min\{s,\tau\})) +(T_G \!-\! s)\cdot\frac{\eta}{2} Z,
	\quad 1 \leq s \leq T_G.
$$
It is straightforward  that $Y(s) \geq 0$ for $1\leq s \leq T_G$. To verify that it is a supermartingale, we notice that when $\tau>s$,
$$
\begin{aligned}
	\E [Y(s+1)| \mathcal{F}_s]
	=\ &
	\E[J(\mathsf{K}(s+1)) | \mathcal{F}_s ]+ (T_G \!-\! s \!-\! 1)\cdot\frac{\eta}{2} Z \\
	\leq\ &
	J(\mathsf{K}(s))- \frac{\eta}{2}\|\nabla J(\mathsf{K}(s))\|^2 + \frac{\eta}{2}Z
	+ (T_G\!-\!1\!-\!s) \cdot\frac{\eta}{2} Z \leq Y(s),
\end{aligned}
$$
and when $\tau\leq s$, 
$
	\E [Y(s+1)| \mathcal{F}_s]
	=J(\mathsf{K}(\tau) + (T_G\!-\!1\!-\!s) \frac{\eta}{2} Z 
	\leq Y(s).$
Now, by the monotonicity and Doob's maximal inequality for supermartingales, we obtain the following bound:
	\begin{align}
	\mathbb{P}(\tau \leq T_G)
	& \leq
	\mathbb{P}
	\left(\max_{s=1, \dots, T_G} Y(s)> 10 J(\mK_0)\right)
	\leq \frac{\E[Y(1)]}{10 J(\mK_0)} \nonumber\\
	& = \frac{J(\mK_0)+ (T_G-1)\eta Z/2}{10 J(\mK_0)}\leq \frac{1}{10}+ \frac{c}{20},
	\label{eq:prob_tau_good}
	\end{align}
where the last inequality used $T_G=c\cdot 40 J(\mK_0)/(\eta \epsilon)$ and $Z\leq \epsilon/40$. This implies that all the output controllers are stabilizing with probability at least $1-(1/10+ c/20)=9/10-c/20$.

\subsection{Proving the Performance Bound in Theorem \ref{theorem:main}.}
\label{subsec:proof_final_part}
To prove the performance bound \eqref{equ: main_thm_convergence}, we first extend the results in Lemma~\ref{lem: key lemma} and show that
\begin{equation}\label{equ: recursive for Delta_s}
\begin{aligned}
\E\!\left[ J(\mathsf{K}(s+1))\mathsf{1}_{\{\tau >s+1\}} | \mathcal{F}_s\right]
\leq\ &
J(\mathsf{K}(s))\mathsf{1}_{\{\tau >s\}} - \frac{\eta}{2} \| \nabla J(\mathsf{K}(s))\|^2 \mathsf{1}_{\{\tau >s\}} + Z,
\end{aligned}
\end{equation}
If  $\{\tau >s\}$ occurs, then $\mK(s)\in Q^0$, and it can be verified that the other conditions of Lemma \ref{lem: key lemma} also hold. Then, we have
\begin{align*}
\E[ J(\mathsf{K}(s+1))\mathsf{1}_{\{\tau >s+1\}} | \mathcal{F}_s]
\leq\ &
\E[ J(\mathsf{K}(s+1)) | \mathcal{F}_s] \\
\leq\ &
J(\mK(s)) - \frac{\eta}{2} \|\nabla J(\mK(s))\|^2 + \frac{\eta}{2}Z\\
=\ &
J(\mK(s)) \mathsf{1}_{\{\tau>s\}} - \frac{\eta}{2} \|\nabla J(\mK(s))\|^2 \mathsf{1}_{\{\tau>s\}}+ \frac{\eta}{2}Z.
\end{align*}
Otherwise, if $\{\tau \leq s\}$ occurs, by $Z\geq 0$, we trivially have
$ \E[ J(\mathsf{K}(s+1))\mathsf{1}_{\{\tau >s+1\}} | \mathcal{F}_s]
=0 
\leq
J(\mathsf{K}(s)) \mathsf{1}_{\{\tau>s\}}
- \frac{\eta}{2} \|\nabla J(\mK(s))\|^2 \mathsf{1}_{\{\tau>s\}}+ \frac{\eta}{2}Z$.
Summarizing the two cases, we have proved \eqref{equ: recursive for Delta_s}.

We then establish the following bound:
\begin{equation}\label{eq:exp_avg_grad_tau_good}
\E
\left[
\left(\frac{1}{T_G} \sum_{s=1}^{T_G} \| \nabla J(\mK(s))\|^2\right)\mathsf{1}_{\{ \tau > T_G  \}}\right]\leq \frac{\epsilon}{40}+ \frac{\epsilon}{20c}.
\end{equation}
\begin{proof}[{Proof of \eqref{eq:exp_avg_grad_tau_good}}]
By taking the total expectation of \eqref{equ: recursive for Delta_s}, we have 
\begin{align*}
\E\!
\left[J(\mathsf{K}(s+1)) \mathsf{1}_{\{ \tau >s+1\}}\right]
\leq
\E\!\left[J(\mathsf{K}(s)) \mathsf{1}_{\{ \tau >s\}}
\right]
- \frac{\eta}{2}\E\!\left[\|\nabla J(\mK(s))\|^2  \mathsf{1}_{\{ \tau >s\}}\right] +
\frac{\eta}{2}Z.
\end{align*}
By reorganizing terms and taking the telescoping sum, we have
\begin{align*}
\frac{1}{T_G}\sum_{s=1}^{T_G}\E
\!\left[
\|\nabla J(\mK(s))\|^2 \mathsf{1}_{\{ \tau >T_G\}}\right]
\leq\,&
\frac{1}{T_G}\sum_{s=1}^{T_G}
\E\!\left[
\|\nabla J(\mK(s))\|^2 \mathsf{1}_{\{ \tau >s\}}\right] \\
\leq\,
&\frac{2}{\eta T_G}
\mathbb{E}\!\left[J(\mathsf{K}(1))
-J(\mathsf{K}(T_G+1))\one_{(\tau>T_G+1)}
\right]+Z \\
\leq\ &
\frac{2}{\eta}\cdot\frac{\eta\epsilon}{ 40cJ(\mathsf{K}_0)}J(\mathsf{K}_0) + \frac{\epsilon}{40}
\leq \frac{\epsilon}{20c}+  \frac{\epsilon}{40},
\end{align*}
where we used the fact that $J(\mK)\geq 0$ over $\mK\in\mathcal{K}_{\mathrm{st}}$ and \eqref{equ: Z's bdd}.
\end{proof}

By \eqref{eq:exp_avg_grad_tau_good} above and the bound \eqref{eq:prob_tau_good}, the performance bound \eqref{equ: main_thm_convergence} of Theorem~\ref{theorem:main} can now be proved as follows:
\begin{align*}
\Pb\left(\frac{1}{T_G} \sum_{s=1}^{T_G}\|\nabla J(\mK(s))\|^2 \geq \epsilon\right)
=\, &
\Pb\left(\frac{1}{T_G} \sum_{s=1}^{T_G}\|\nabla J(\mK(s))\|^2 \geq \epsilon, \tau > T_G\right) \\
& +
\Pb
\left(\frac{1}{T_G} \sum_{s=1}^{T_G}\|\nabla J(\mK(s))\|^2 \geq \epsilon, \tau \leq T_G\right)\\
\leq\,&
\Pb\left(\frac{1}{T_G} \sum_{s=1}^{T_G}\|\nabla J(\mK(s))\|^2 \mathsf{1}_{\{ \tau > T_G\}} \geq \epsilon\right)
+
\Pb\left(\tau \leq T_G\right)\\
\leq\,&
\frac{1}{\epsilon}\,\E\!
\left[\frac{1}{T_G} \sum_{s=1}^{T_G}\|\nabla J(\mK(s))\|^2 \mathsf{1}_{\{ \tau > T_G\}}\right] +\Pb\left(\tau \leq T_G\right) \\
\leq\ &
\frac{1}{8}+ \frac{1}{20c}+ \frac{c}{20},
\end{align*}
where we used Markov's inequality. We also have
\begin{align*}
\Pb\left(\|\nabla J(\hat\mK)\|^2 \geq \epsilon\right)
\leq\,&
\Pb\left(\|\nabla J(\hat\mK)\|^2 \mathsf{1}_{\{ \tau > T_G\}} \geq \epsilon\right)
+
\Pb\left(\tau \leq T_G\right)\\
\leq\,&
\frac{1}{\epsilon}\,\E\!
\left[ \|\nabla J(\hat\mK)\|^2 \mathsf{1}_{\{ \tau > T_G\}}\right] +\Pb\left(\tau \leq T_G\right) \\
= \ & \frac{1}{\epsilon}\,\E \left[\E\!
\left[ \|\nabla J(\hat\mK)\|^2 \mathsf{1}_{\{ \tau > T_G\}}\mid \{\mK(s)\}_{s=1}^{T_G}\right]\right] +\Pb\left(\tau \leq T_G\right)\\
=\ &
\frac{1}{\epsilon}\,\E\!
\left[\frac{1}{T_G} \sum_{s=1}^{T_G}\|\nabla J(\mK(s))\|^2 \mathsf{1}_{\{ \tau > T_G\}}\right] +\Pb\left(\tau \leq T_G\right) \\
\leq\ &
\frac{1}{8}+ \frac{1}{20c}+ \frac{c}{20}.
\end{align*}
where the second equality follows by noticing that, conditioning on $ \{\mK(s)\}_{s=1}^{T_G}$, $\mathsf{1}_{\{ \tau > T_G\}}$ is a constant and $\hat\mK$ is uniformly randomly selected from $\{\mK(s)\}_{s=1}^{T_G}$.
\section{{Numerical Studies}}\label{sec:simulation}



In this section, we numerically test our ZODPO on  Heating Ventilation and Air
Conditioning (HVAC) systems for  multi-zone buildings. We consider both time-invariant cases as theoretically analyzed above and the time-varying cases for more realistic implementation. We will focus on ZODPO  since we are not aware of  other learning algorithms in literature that can be directly applied to our problem.

\begin{figure}[ht]
\centering

\begin{subfigure}{0.40\textwidth}
\centering
\includegraphics[width =1 \textwidth]{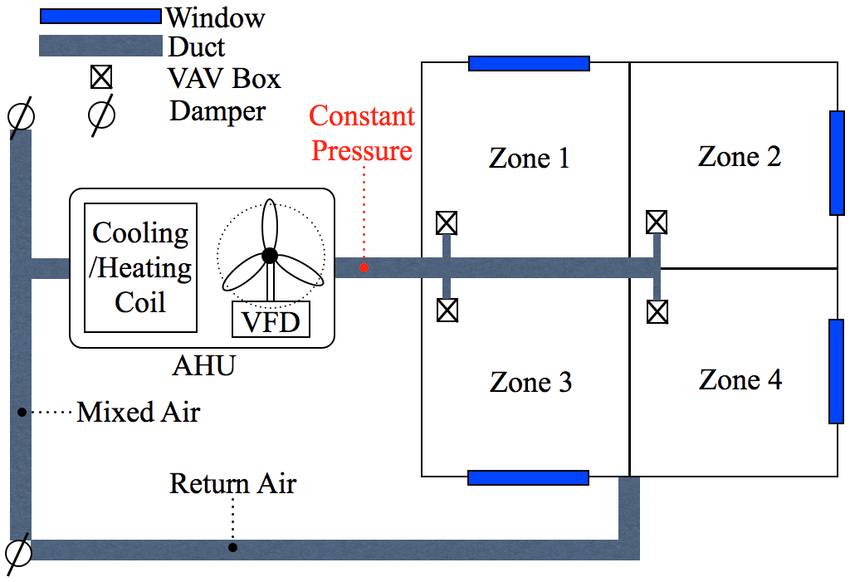}
\caption{A 4-zone HVAC system}
\end{subfigure}
\\
\begin{subfigure}{0.32\textwidth}
\centering
\includegraphics[width =  \textwidth]{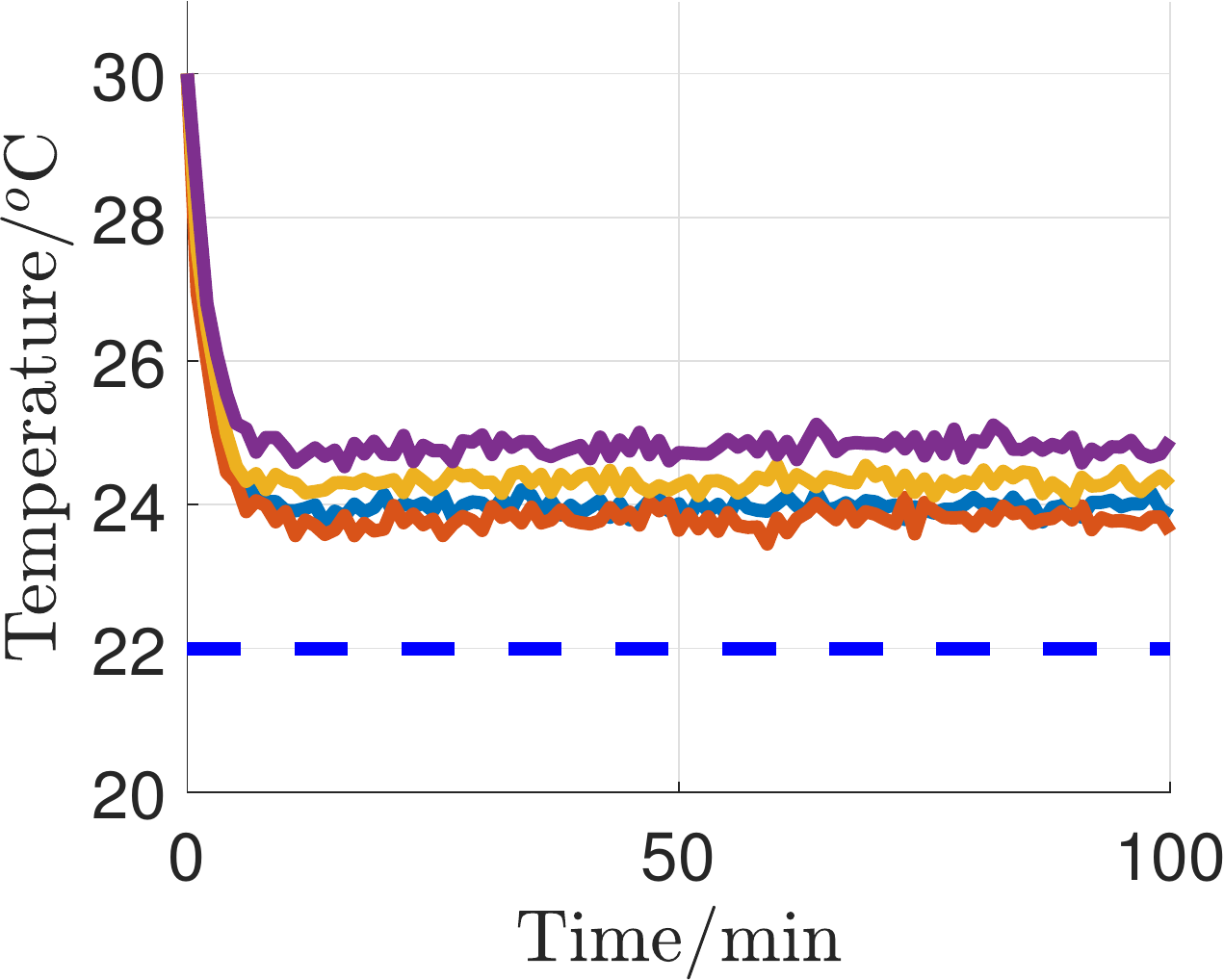}
\caption{$T_G\!=\!50$}
\end{subfigure}
~
\begin{subfigure}{0.32\textwidth}
\centering
\includegraphics[width = \textwidth]{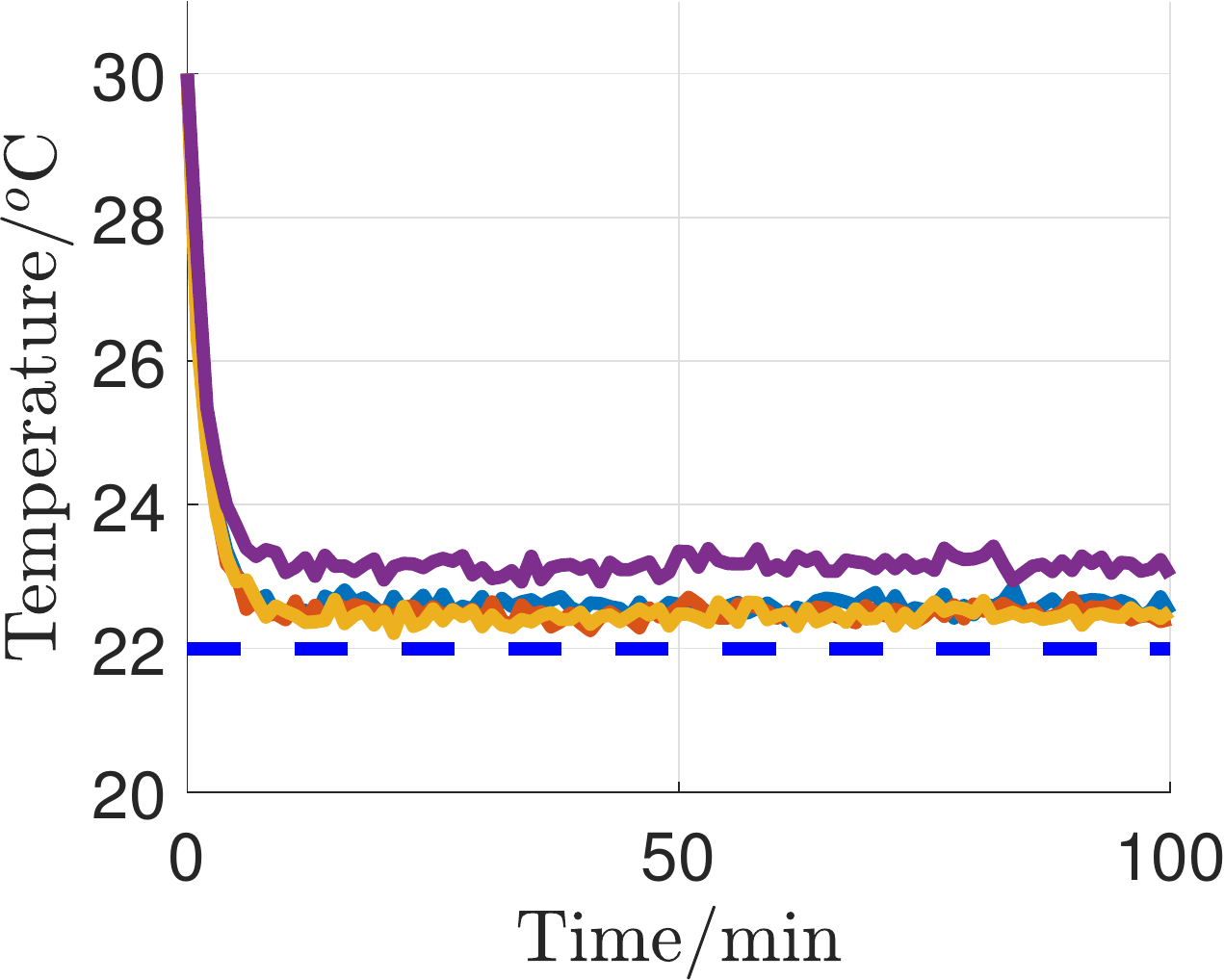}
\caption{$T_G\!=\!150$}
\end{subfigure}
~
\begin{subfigure}{0.32\textwidth}
\centering
\includegraphics[width = \textwidth]{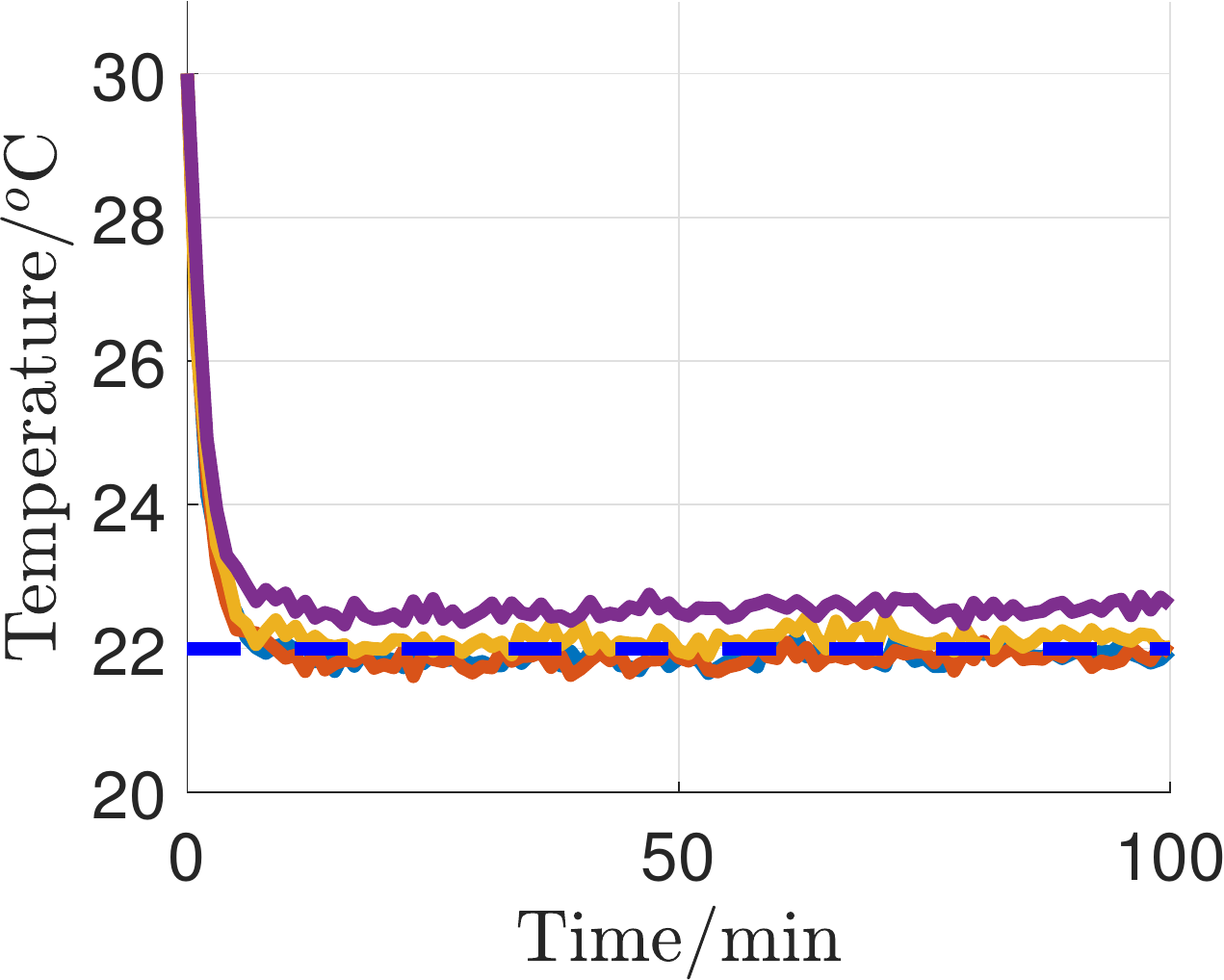}
\caption{$T_G\!=\!250$}
\end{subfigure}
\caption{(a) is a diagram of the 4-zone HVAC system considered in Section \ref{subsec:time invariant}. The figure is from \cite{zhang2016decentralized}. (b)-(d) shows the dynamics of indoor temperatures of the 4 zones under the controllers generated by ZODPO after $T_G=50, 150, 250$ iterations.}
\label{fig:simulation_results}
\end{figure}

\begin{figure}[ht]
    \centering
    \includegraphics[width =0.50 \textwidth]{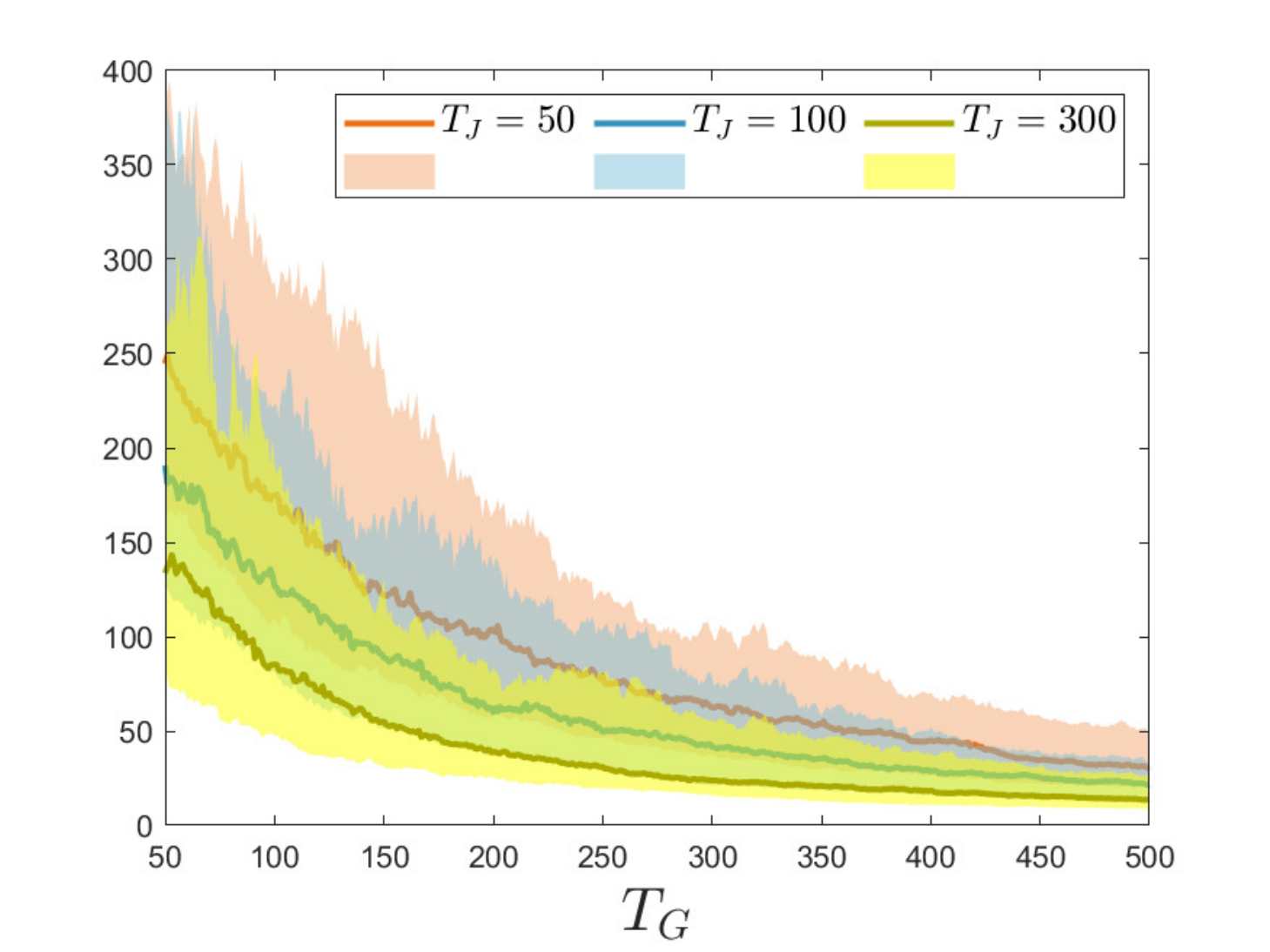}
    \caption{A comparison of ZODPO with $T_J=50, 150, 300$. The solid lines represent the mean values and the shade represents 70\% confidence intervals of the actual costs by implementing the controllers generated by ZODPO. }
    \label{fig:TI conf interval}
\end{figure}

\subsection{Thermal Dynamics Model}
This paper considers   multi-zone buildings with HVAC systems. Each zone is equipped with a sensor that can  measure the local temperatures, and can adjust the supply air flow rate of its associated HVAC system.

We adopt the linear thermal dynamics model studied in \cite{zhang2016decentralized} with additional process noises in the discrete time setting, i.e. 
\begin{align*}
    x_i(t\!+\!1)\!-\!x_i(t)\!=& \frac{\Delta }{\upsilon_i\zeta_i}(\theta^{\mathrm{o}}(t)\!-\!x_i(t))\! +\! \sum_{j=1}^N\! \frac{\Delta}{\upsilon_i \zeta_{ij}}\!(x_j(t)\!-\!x_i(t))\\
    &+ \!\frac{\Delta}{\upsilon_i} u_i(t)\! +\! \frac{\Delta}{\upsilon_i} \pi_i \!+\! \frac{\sqrt{\Delta}}{\upsilon_i} w_i(t),  \ \ \ 1 \!\leq\! i \!\leq\! N,
\end{align*}
where $x_i(t)$ denotes the temperature of zone $i$ at time $t$, $u_i(t)$ denotes the control input of zone $i$ that is related with the air flow rate of the HVAC system, $\theta^{\mathrm{o}}(t)$ denotes the outdoor temperature, $\pi_i$ represents a constant heat from external sources to zone $i$,  $w_i(t)$ represents random disturbances, $\Delta$ is the time resolution, $\upsilon_i$ is the thermal capacitance of zone $i$, $\zeta_i$ represents the thermal resistance of the windows and walls between the zone $i$ and outside environment, and $\zeta_{ij}$ represents the thermal resistance of the walls between zone $i$ and $j$. 

At each zone $i$, there is a desired temperature $\theta_i^*$ set by the users. The local cost function  is composed by the deviation from the desired temperature and the control cost, i.e.
$c_i(t)=\big(x_{i}(t) - \theta_i^*\big)^2 + \alpha_i u_{i}(t)^2
$, where $\alpha_i>0$ is a trade-off parameter.




\subsection{Time-Invariant Cases}\label{subsec:time invariant}





In this subsection, we consider a system with $N=4$ zones (see Figure \ref{fig:simulation_results}(a)) and a time-invariant outdoor temperature $\theta^{\mathrm{o}}=30\,{}^\circ\mathrm{C}$. The system parameters are listed below.  We set $\theta_i^*=22\,{}^\circ\mathrm{C}$, $\alpha_i=0.01$, $\pi_i=1\mathrm{kW}$, $\Delta=60$s, $\upsilon_i = 200 \mathrm{kJ}/{}^\circ\mathrm{C}, \zeta_i=1 {}^\circ\mathrm{C}/\mathrm{kW}$ for all $i$, $\zeta_{ij}=1 {}^\circ\mathrm{C}/\mathrm{kW}$ if zone $i$ and $j$ have common walls and $\zeta_{ij}=0 {}^\circ\mathrm{C}/\mathrm{kW}$ otherwise.  Besides, we consider i.i.d. $w_i(t)$  following $N(0, 2.5^2)$.

We consider the following  decentralized control policies:
$$u_i(t)=K_i x_i(t)+b_i, \quad \forall 1\leq i \leq N$$
where a constant term $b_i$ is adopted to deal with nonzero desired temperature $\theta_i^*$ and the constant drifting term in the system dynamics $\pi_i$. We apply ZODPO to learn both $K_i$ and $b_i$.\footnote{This requires a  straightforward modification of Algorithm \ref{alg:main}: in Step 1, add perturbations onto both $K_i$ and $b_i$, in Step 2,   estimate the partial gradients with respect to $K_i$ and $b_i$, in Step 3, update $K_i$ and $b_i$ by the estimated partial gradient.} The algorithm parameters are listed below. We consider the  communication network $W$ where $W_{ii}=1/2$, and $W_{ij}=W_{ji}=1/4$ if $i\not=j$ and $i$ are $j$ share common walls. We set $r=0.5, \eta=0.0001, \bar J =10^6$. Since the thermal dynamical system is open-loop stable, we select the initial controller as zero, i.e. $\mK_i=0, b_i=0$ for all $i$.

Figures \ref{fig:simulation_results}(b)--(d) plot the temperature dynamics of the four zones  by implementing the controllers generated by ZODPO at policy gradient iterations $T_G=50,150,250$ respectively with $T_J=300$. It can be observed that with more iterations, the controllers generated by ZODPO stabilize the system faster and steer the room temperature closer to the desired temperature.


Figure \ref{fig:TI conf interval} plots the infinite-horizon averaged costs of  controllers  generated by ZODPO for different $50 \leq T_G \leq 500$ when $T_J=50,100, 300$ by 500 repeated simulations. 
As $T_G$ increases, the averaged costs keep decreasing, which is consistent with Figures \ref{fig:simulation_results}(b)--(d). Since $T_G\leq 300$ is not extremely large, we do not observe the increase of the probabilities of generating unstable controllers. Notice that with a larger $T_J$, the confidence intervals shrink and the averaged costs decrease, indicating less fluctuations and better performance. This is intuitive since a larger $T_J$ indicates a better gradient estimation. 



\subsection{Larger Scale Systems}
Here, we consider an $N=20$ system  to demonstrate that our ZODPO can handle systems with higher dimensions. We consider a 2-floor building with $5\times 2$ rooms on each floor.  Other system parameters are provided in Section \ref{subsec:time invariant}. 
Figure \ref{fig:TV}(a) plots the dynamics of the indoor temperatures of 20 rooms when implementing a controller generated by ZODPO after $T_G=4800$ iterations with $T_J=500$. Notice that all the room temperatures stabilize around the desired temperature 22${}^\circ \mathsf{C}$, indicating that our ZODPO can output an effective controller for reasonably large $T_G$ and $T_J$ even for a larger-scale system.







\begin{figure}
\centering
\begin{subfigure}{0.32\textwidth}
\centering
\includegraphics[width = \textwidth]{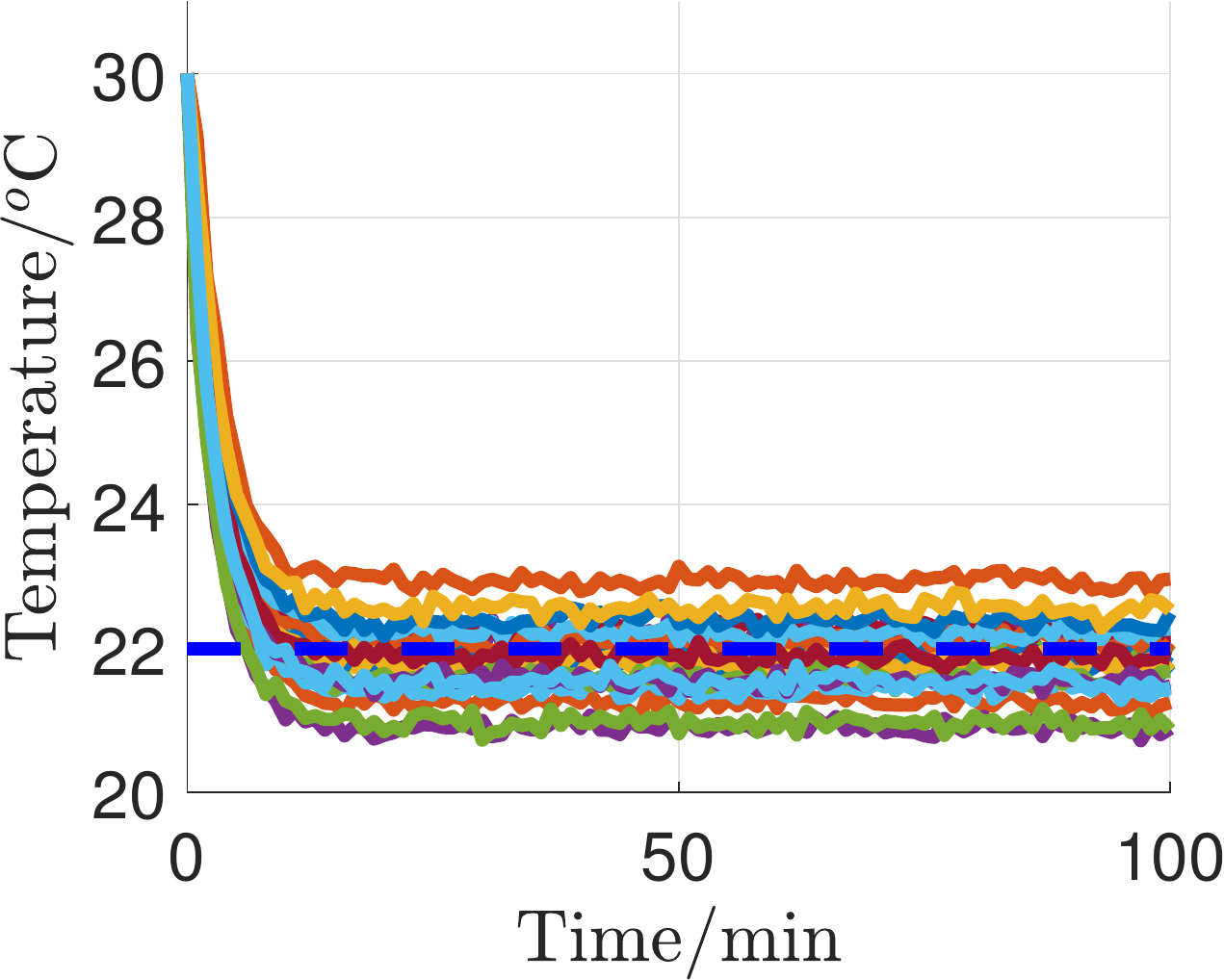}
\caption{An $N=20$ system}
\end{subfigure}
\hfil
\begin{subfigure}{0.32\textwidth}
\centering
\includegraphics[width = \textwidth]{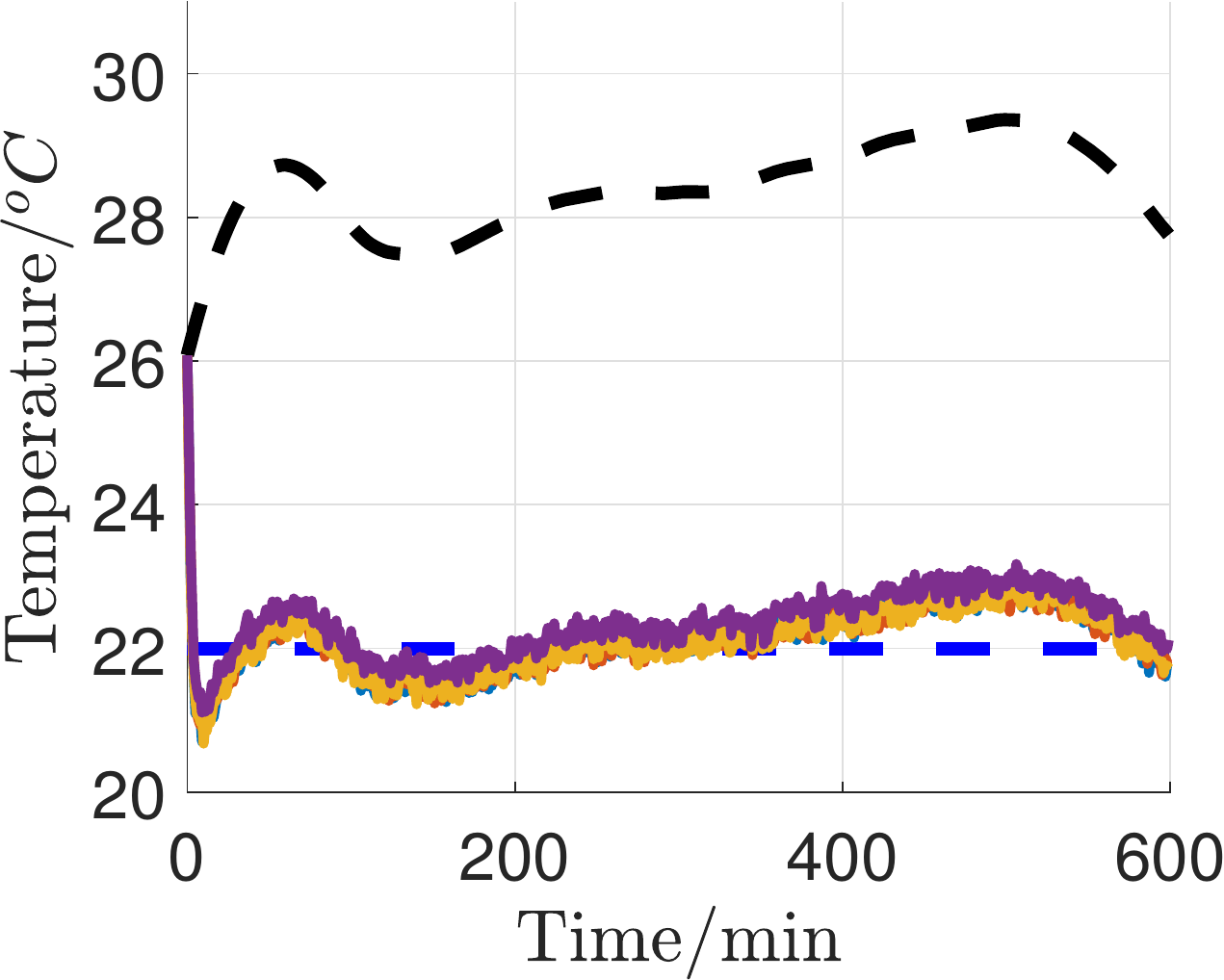}
\caption{Varying outdoor temperature}
\end{subfigure}
\setlength{\belowcaptionskip}{-10pt}
\caption{(a) plots the dynamics of indoor temperatures of an $N=20$ system with a constant outdoor temperature $30{}^\circ\mathrm{C}$. (b) plots the dynamics of indoor temperatures of the 4-zone system with time-varying outdoor temperature, with the black line representing the outdoor temperature. }
\label{fig:TV}
\end{figure}

\subsection{Varying Outdoor Temperature}

This subsection considers a more realistic scenario where the outdoor temperature is changing with time. The data is collected by Harvard HouseZero Program.\footnote{\url{http://harvardcgbc.org/research/housezero/}}
To adapt to varying outdoor temperature, we consider the following form of controllers:
$$u_i(t)=K_ix_i(t) +K_i^{\mathrm{o}}\theta^{\mathrm{o}}(t)+b_i,$$
and apply our ZODPO to learn $K_i, K_i^{\mathrm{o}}, b_i$. During the learning process, we consider different outdoor temperatures in different policy gradient iterations, but fix the outdoor temperature within one iteration for better training. We consider the system in Section \ref{subsec:time invariant} and set $T_J=300$, $\eta=2\times 10^{-6}, r=0.5, \bar J=10^8$. 
Figure \ref{fig:TV}(b) plots the temperature dynamics by implementing the controller generated by ZODPO at policy iteration $T_G=250$. The figure shows that even with varying outdoor temperatures, ZODPO is still able to find a controller that roughly maintains the room temperature at the desired level.

\section{{Conclusions and Future Work}}
This paper considers distributed learning of decentralized linear quadratic control systems with limited communication, partial observability, and local costs. We propose a ZODPO algorithm that allows agents to learn decentralized controllers in a distributed fashion by leveraging the ideas of policy gradient, consensus algorithms and zero-order optimization. We prove the stability of the output controllers with high probability. We also provide sample complexity guarantees. Finally, we numerically test ZODPO on HVAC systems.

There are various directions for future work. For example, effective initialization and exploration of other stabilizing components to approach the global optima are important topics. It is also worth exploring the optimization landscape of decentralized LQR systems with additional structures. Besides, we are also interested in employing other gradient estimators, e.g. two-point zero-order gradient estimator, to improve sample complexity. Utilizing more general controllers that take advantage of history information is also an interesting direction. Finally, it is worth designing actor-critic-type algorithms to reduce the running time per iteration of policy gradient.

\appendix

\section{Additional Notations and Auxiliary Results}

\nbf{Notations:}
Recall that $\mathcal{M}(\mathsf{K})\in\mathbb{R}^{n\times m}$ denotes the global control gain given $\mathsf{K}\in\mathcal{K}_{\mathrm{st}}$, and notice that $\mathcal{M}$ is an injective linear map from $\mathbb{R}^{n_K}$ to $\mathbb{R}^{n\times m}$.
For simplicity, we denote
$$
\begin{aligned}
&A_{\mathsf K} \coloneqq A+B\mathcal{M}(\mK),
\quad
Q_{i,\mathsf{K}} \coloneqq\  Q_i+\mathcal{M}(\mK)^\top  R_i \mathcal{M}(\mK), \\
&Q\coloneqq\frac{1}{N}\sum_{i=1}^N Q_i, \quad R\coloneqq\frac{1}{N}\sum_{i=1}^N R_i,\quad
Q_{\mathsf{K}}\coloneqq
\frac{1}{N}\sum_{i=1}^N Q_{i,\mathsf{K}}.
\end{aligned}
$$

Besides, we define  
\begin{align}\label{equ: Sigma(K,t) def}
\Sigma_{\mK,t}=\E\!\left[x(t)x(t)^\top\right], \quad \Sigma_{\mK,\infty}=\lim_{t\to \infty}\E\!\left[x(t)x(t)^\top\right],
\end{align}
where $x(t)$ is the state generated by controller $\mK$. Notice that
\begin{align}\label{equ: Sigma(K,t)}
\Sigma_{\mathsf{K},t}=  \sum_{\tau=0}^{t-1}A_{\mathsf{K}}^\tau\Sigma_w \big(A_{\mathsf{K}}^\top\big)^\tau
\preceq
\Sigma_{\mathsf{K},\infty}\!=\!  \sum_{\tau=0}^{\infty}A_{\mathsf{K}}^\tau\Sigma_w \big(A_{\mathsf{K}}^\top\big)^\tau.
\end{align} 
The objective function $J(\mathsf{K})$ can be represented as
\begin{equation}\label{equ: formula of J(K)}
J(\mathsf{K})=\lim_{t\to \infty}
\E\!\left[x(t)^\top Q_{\mK} x(t)\right]
=\operatorname{tr}(Q_{\mathsf{K}}\Sigma_{\mathsf{K},\infty}),
\end{equation}

\nbf{Auxiliary results:} We provide an auxiliary lemma showing that $\|\big(\Sigma_{\mathsf{K},\infty}^{-\frac{1}{2}} A_{\mathsf{K}}
\Sigma_{\mathsf{K},\infty}^{\frac{1}{2}}\big)^{\!t}\|$ decays exponentially as $t$ increases.
\begin{lemma}\label{lemma:bound_iteratedA}
There exists a continuous function $\varphi:\mathcal{K}_{\mathrm{st}}\rightarrow[1,+\infty)$ such that
$$
\left\|\left(\Sigma_{\mathsf{K},\infty}^{-\frac{1}{2}} A_{\mathsf{K}}
\Sigma_{\mathsf{K},\infty}^{\frac{1}{2}}\right)^t\right\|
\leq \varphi(\mathsf{K})\left(
\frac{1+\rho(A_{\mathsf{K}})}{2}\right)^t
$$
for any $t\in\mathbb{N}$ and any $\mathsf{K}\in\mathcal{K}_{\mathrm{st}}$.
\end{lemma}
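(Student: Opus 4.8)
First I would reduce the claim to a statement about powers of a matrix similar to $A_{\mathsf{K}}$. Writing $\widehat A_{\mathsf{K}}:=\Sigma_{\mathsf{K},\infty}^{-1/2}A_{\mathsf{K}}\Sigma_{\mathsf{K},\infty}^{1/2}$, this is a similarity transform of $A_{\mathsf{K}}$, so $\widehat A_{\mathsf{K}}^{\,t}=\Sigma_{\mathsf{K},\infty}^{-1/2}A_{\mathsf{K}}^{t}\Sigma_{\mathsf{K},\infty}^{1/2}$ and $\rho(\widehat A_{\mathsf{K}})=\rho(A_{\mathsf{K}})<1$ for $\mathsf{K}\in\mathcal{K}_{\mathrm{st}}$. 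Setting $\gamma(\mathsf{K}):=(1+\rho(A_{\mathsf{K}}))/2$, we have $\rho(\widehat A_{\mathsf{K}})<\gamma(\mathsf{K})<1$. The natural candidate is
$$
\varphi(\mathsf{K}):=\sup_{t\in\mathbb N}\frac{\|\widehat A_{\mathsf{K}}^{\,t}\|}{\gamma(\mathsf{K})^{t}}.
$$
For each fixed $\mathsf{K}$ this supremum is finite: by Gelfand's formula $\|\widehat A_{\mathsf{K}}^{\,t}\|^{1/t}\to\rho(\widehat A_{\mathsf{K}})<\gamma(\mathsf{K})$, so the summand tends to $0$; since the $t=0$ term equals $1$, we get $\varphi(\mathsf{K})\in[1,+\infty)$ and the desired bound holds by construction. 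The entire difficulty is therefore to show that $\varphi$ is \emph{continuous}.

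Because a supremum of continuous functions is in general only lower semicontinuous, I would obtain continuity by showing that, locally, the supremum is attained over a \emph{finite} index range. Fix $\mathsf{K}_0\in\mathcal{K}_{\mathrm{st}}$ and choose reals $\lambda_0<\lambda_1$ with $\rho(A_{\mathsf{K}_0})<\lambda_0<\lambda_1<\gamma(\mathsf{K}_0)$, which is possible since $\rho(A_{\mathsf{K}_0})<\gamma(\mathsf{K}_0)$. By Gelfand's formula pick $m$ with $\|\widehat A_{\mathsf{K}_0}^{\,m}\|<\lambda_0^{m}$. The maps $\mathsf{K}\mapsto A_{\mathsf{K}}$, $\mathsf{K}\mapsto\Sigma_{\mathsf{K},\infty}$ and $\mathsf{K}\mapsto\rho(A_{\mathsf{K}})$ are all continuous, and $\Sigma_{\mathsf{K},\infty}\succeq\Sigma_w\succ0$ makes $\Sigma_{\mathsf{K},\infty}^{\pm1/2}$ continuous, so $\widehat A_{\mathsf{K}}$ depends continuously on $\mathsf{K}$. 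Hence there is a neighborhood $U\ni\mathsf{K}_0$, with compact closure $\overline U$ inside the open set $\mathcal{K}_{\mathrm{st}}$, on which simultaneously $\|\widehat A_{\mathsf{K}}^{\,m}\|<\lambda_0^{m}$ and $\gamma(\mathsf{K})>\lambda_1$.

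Then I would bootstrap to all large $t$ by Euclidean division and submultiplicativity. Writing $t=qm+s$ with $0\le s<m$, and letting $C:=\sup_{\mathsf{K}\in\overline U,\,0\le s<m}\|\widehat A_{\mathsf{K}}^{\,s}\|<\infty$, one gets $\|\widehat A_{\mathsf{K}}^{\,t}\|\le C\lambda_0^{qm}\le C\lambda_0^{-(m-1)}\lambda_0^{t}$, so that on $U$
$$
\frac{\|\widehat A_{\mathsf{K}}^{\,t}\|}{\gamma(\mathsf{K})^{t}}\le C\lambda_0^{-(m-1)}\Big(\frac{\lambda_0}{\lambda_1}\Big)^{t}\xrightarrow[t\to\infty]{}0,
$$
uniformly in $\mathsf{K}\in U$ because $\lambda_0<\lambda_1$. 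Thus there is an integer $T^\star$, depending only on $U$, such that for all $\mathsf{K}\in U$ and all $t>T^\star$ the ratio is $\le 1$. Since the ratio equals $1$ at $t=0$, the supremum defining $\varphi$ is attained within $\{0,1,\dots,T^\star\}$ on $U$, i.e.\ $\varphi(\mathsf{K})=\max_{0\le t\le T^\star}\|\widehat A_{\mathsf{K}}^{\,t}\|/\gamma(\mathsf{K})^{t}$ there. A maximum of finitely many continuous functions is continuous, so $\varphi$ is continuous at $\mathsf{K}_0$; as $\mathsf{K}_0$ was arbitrary, $\varphi$ is continuous on $\mathcal{K}_{\mathrm{st}}$.

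The main obstacle is precisely this continuity argument: pointwise finiteness of $\varphi$ is immediate from Gelfand, but upgrading it to a \emph{uniform-on-neighborhoods} tail estimate — so that the supremum collapses to a finite maximum — is where the work lies. This step relies on the joint continuity of $\widehat A_{\mathsf{K}}$ (through continuity of $\Sigma_{\mathsf{K},\infty}$ and of its square root, using $\Sigma_{\mathsf{K},\infty}\succeq\Sigma_w\succ0$) together with the strict spectral gap $\rho(A_{\mathsf{K}})<\gamma(\mathsf{K})$.
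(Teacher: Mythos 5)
Your proof is correct, and it takes a genuinely different route from the paper's. Writing $\widehat A_{\mathsf{K}}=\Sigma_{\mathsf{K},\infty}^{-1/2}A_{\mathsf{K}}\Sigma_{\mathsf{K},\infty}^{1/2}$ and $\tilde\rho(A_{\mathsf{K}})=(1+\rho(A_{\mathsf{K}}))/2$, the paper constructs the weighted Gramian $\tilde P_{\mathsf{K}}=\sum_{t\ge 0}\tilde\rho(A_{\mathsf{K}})^{-2t}\widehat A_{\mathsf{K}}^{\,t}\bigl(\widehat A_{\mathsf{K}}^{\top}\bigr)^{t}$, observes that it solves a discrete Lyapunov equation, and shows that in the coordinates defined by $\tilde P_{\mathsf{K}}^{1/2}$ the matrix $\widehat A_{\mathsf{K}}$ has operator norm at most $\tilde\rho(A_{\mathsf{K}})$; this yields the bound with the explicit choice $\varphi(\mathsf{K})=\|\tilde P_{\mathsf{K}}^{1/2}\|\,\|\tilde P_{\mathsf{K}}^{-1/2}\|$, whose continuity is then obtained by citing perturbation theory for Lyapunov equations. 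You instead define $\varphi$ as the pointwise-smallest admissible constant, $\sup_{t}\|\widehat A_{\mathsf{K}}^{\,t}\|/\gamma(\mathsf{K})^{t}$, get finiteness from Gelfand's formula, and prove continuity by hand: a locally uniform geometric tail bound (via submultiplicativity, Euclidean division, and compactness of a small closed neighborhood inside the open set $\mathcal{K}_{\mathrm{st}}$) collapses the supremum to a maximum over finitely many indices, and a finite maximum of continuous functions is continuous. What each approach buys: the paper's Lyapunov route gives an algebraically explicit $\varphi$ and reduces continuity to a one-line citation, at the price of introducing an auxiliary equation and an external reference; your route is self-contained and elementary and produces the optimal (pointwise minimal) $\varphi$, at the price of the more delicate finite-max localization argument, which you execute correctly. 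One shared caveat: both arguments rest on the continuity of $\mathsf{K}\mapsto\Sigma_{\mathsf{K},\infty}$ (itself a Lyapunov perturbation fact), which you assert rather than prove; this is the same level of rigor as the paper, so it is not a gap, but if you wanted full self-containment you could derive it from the locally uniform convergence of the series $\sum_{\tau\ge 0}A_{\mathsf{K}}^{\tau}\Sigma_w\bigl(A_{\mathsf{K}}^{\top}\bigr)^{\tau}$ using the same tail estimate you already establish.
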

\begin{proof}
Denote $\tilde{\rho}(A_\mathsf{K})\coloneqq(1+\rho(A_\mathsf{K}))/2$. Since $
\rho\left(\Sigma_{\mathsf{K},\infty}^{-\frac{1}{2}}A_\mathsf{K}\Sigma_{\mathsf{K},\infty}^{\frac{1}{2}}\right)=\rho(A_\mathsf{K})<\tilde{\rho}(A_\mathsf{K})
$, 
the matrix series
$$
\tilde P_\mathsf{K}=\sum_{t=0}^\infty \tilde{\rho}(A_\mathsf{K})^{-2t}
\!\left(\Sigma_{\mathsf{K},\infty}^{-\frac{1}{2}}A_\mathsf{K} \Sigma_{\mathsf{K},\infty}^{\frac{1}{2}}\right)^t\!\left(\Sigma_{\mathsf{K},\infty}^{\frac{1}{2}}A_\mathsf{K}^\top \Sigma_{\mathsf{K},\infty}^{-\frac{1}{2}}\right)^t
$$
converges, and satisfies the Lyapunov equation
$$
\tilde{\rho}(A_\mathsf{K})^{-2}\left(\Sigma_{\mathsf{K},\infty}^{-\frac{1}{2}}A_\mathsf{K}\Sigma_{\mathsf{K},\infty}^{\frac{1}{2}}\right)\tilde P_\mathsf{K}\left(\Sigma_{\mathsf{K},\infty}^{\frac{1}{2}}A_\mathsf{K}^\top \Sigma_{\mathsf{K},\infty}^{-\frac{1}{2}}\right)
+ I_n
= \tilde P_\mathsf{K}.
$$
By denoting $
\tilde A_\mathsf{K}=
\tilde P_\mathsf{K}^{-\frac{1}{2}}\Sigma_{\mathsf{K},\infty}^{-\frac{1}{2}}A_\mathsf{K}\Sigma_{\mathsf{K},\infty}^{\frac{1}{2}}\tilde P_\mathsf{K}^{\frac{1}{2}}
$, we obtain
$$
\tilde A_\mathsf{K} \tilde A_\mathsf{K}^\top
+\tilde{\rho}(A_{\mK})^2\tilde{P}_{\mK}^{-1}
=\tilde{\rho}(A_{\mK})^2 I_n,
$$
and thus $\|\tilde{A}_\mathsf{K}\|\leq \tilde{\rho}(A_\mathsf{K})$. Consequently,
$$
\left\|\!
\left(\Sigma_{\mathsf{K},\infty}^{-\frac{1}{2}}A_\mathsf{K} \Sigma_{\mathsf{K},\infty}^{\frac{1}{2}} \!\right)^{\!t}
\right\|
=
\left\|
\left(\tilde P_\mathsf{K}^{\frac{1}{2}}\tilde A_K\tilde P_\mathsf{K}^{-\frac{1}{2}}\right)^t
\right\| 
\leq
\left\|\tilde P_\mathsf{K}^{-\frac{1}{2}}\right\|
\left\|\tilde P_\mathsf{K}^{\frac{1}{2}}\right\|
\left\|\tilde A_\mathsf{K}\right\|^t
\!\leq
\varphi(\mathsf{K})\tilde{\rho}(A_\mathsf{K})^{t},
$$
where we define
$
\varphi(\mathsf{K})\coloneqq \left\|\tilde P_\mathsf{K}^{-\frac{1}{2}}\right\|\left\|\tilde P_\mathsf{K}^{\frac{1}{2}}\right\|
$.
It's easy to see that $\varphi(\mathsf{K})\geq \left\|P_\mathsf{K}^{-\frac{1}{2}}P_\mathsf{K}^{\frac{1}{2}}\right\|=1$, and by the results of perturbation analysis of Lyapunov equations \cite{gahinet1990sensitivity}, we can see that $\varphi(\mathsf{K})$ is a continuous function over $\mathsf{K}\in\mathcal{K}_{\mathrm{st}}$.
\end{proof}
\section{Proof of Lemma \ref{lemma:sampling_error}}\label{append: sampling error}

Firstly, notice that $(D_i^0)_{i=1}^N\sim\mathrm{Uni}(\mathbb{S}_{n_K})$ is a direct consequence of the isotropy of the standard Gaussian distribution. The rest of the section will focus on proving \eqref{eq:sampling_error_bound}.

Let $q(t)=[q_1(t) , \cdots, q_N(t)]^\top$, where $q_i(t)$ is defined in {\tt SampleUSphere}. Notice that $q(t)=W q(t-1)$ and $\one^\top q(t)=\one^\top q(t-1)$ for $t\geq 1$. Consequently, 
$$
\frac{Nq(t)}{\one^\top q(0)}-\one
=
\left(W-N^{-1}\one\one^\top\right)
\left(\frac{Nq(t\!-\!1)}{\one^\top q(0)}-\one\right),
$$
where we use the fact that $W$ is doubly stochastic. Thus,
$$
\left|\frac{Nq_i(t)}{\one^{\!\top} q(0)}\!-\!1\right|
\leq
\left\|\frac{Nq(t)}{\one^{\!\top} q(0)}\!-\!\one\right\|
\leq
\rho_W^t \!
\left\|\frac{Nq(0)}{\one^{\!\top} q(0)}\!-\!\one\right\|
\leq
N\rho_W^t,
$$
where the last inequality uses $ \| N v - \one\|\leq N$ for any $v \in \R^N$ such that $v_i\geq 0$ and $\one^\top v=1$.
We then have
\begin{align*}
\left|\frac{1}{\sqrt{Nq_i(T_S)}}
-\frac{1}{\sqrt{\one^\top q(0)}}\right|
=\ &
\frac{1}{\sqrt{\one^\top q(0)}}
\cdot
\frac{\left|Nq_i(T_S)/\one^\top q(0)-1\right|}{
\sqrt{Nq_i(T_S)/\one^\top q(0)}
(1+\sqrt{Nq_i(T_S)/\one^\top q(0)})
} \\
\leq\ &
\frac{1}{\sqrt{\one^\top q(0)}}
\cdot \frac{N\rho_W^{T_S}}{
\sqrt{1 \!-\! N\rho_W^{T_S}}\left(\! 1+\sqrt{1 \!-\!N\rho_W^{T_S}}\right)} \leq \frac{N\rho_W^{T_S}}{\sqrt{\one^\top q(0)}},
\end{align*}
where the last inequality uses that  $N\rho_W^{T_S}\leq 1/2$ and  $\inf_{x\in[0,1/2]}\sqrt{1\!-\!x}\big(1+\sqrt{1\!-\!x}\big)\geq 1$.
Finally, we obtain
\begin{align*}
\|D_i-D_i^0\|_F
=\ &
\|V_i\|_F\left|\frac{1}{\sqrt{Nq_i(T_J)}}
-\frac{1}{\sqrt{1^\top q_0}}\right| \\
\leq\ &
\frac{\|V_i\|_F}{\sqrt{1^\top q_0}}
\cdot N\rho_W^{T_S}
=\|D_i^0\|_F\cdot N\rho_W^{T_S}, \\
\sum_{i=1}^N \|D_i\|_F^2
\leq\ &
\sum_{i=1}^N \left(\|D_i^0\|_F+
\|D_i-D_i^0\|_F\right)^2 \\
\leq\ &
\sum_{i=1}^N \left(1+N\rho_W^{T_S}\right)^2\|D_i^0\|_F^2
=\left(1+N\rho_W^{T_S}\right)^2.
\end{align*}

\section{Proof of Lemma \ref{lem: bias and var of mui}}\label{append: bias and var of cost estimation}

This section analyzes the error of the estimated cost $\tilde J_i(\mK)$, also denoted as $\mu_i(T_J)$,  generated by the subroutine {\tt GlobalCostEst}.

The main insight behind the proof is that $\mu_i(T_J)$ can be represented by quadratic forms of a Gaussian vector (see Lemma \ref{lemma:quad_form_mu_J}). The proof follows by utilizing the properties of the quadratic forms of Gaussian vectors (Proposition \ref{proposition:Gaussian_quadratic_stat}).

\vspace{3pt}
\noindent\textit{(a) Representing $\mu_i(T_J)$ by Quadratic  Gaussian.} We define
\begin{align*}
\varpi\coloneqq &
\begin{bmatrix}
\Sigma_{w}^{-\frac{1}{2}}w(0) \\
\vdots \\
\Sigma_{w}^{-\frac{1}{2}}w(T_J\!-\!1)
\end{bmatrix},
\qquad
\Psi
\!\coloneqq
\!
\begin{bmatrix}
\Sigma_{\mathsf{K},\infty}^{-\frac{1}{2}}
\Sigma_{w}^{\frac{1}{2}}
\\
\Sigma_{\mathsf{K},\infty}^{-\frac{1}{2}}A_{\mathsf{K}}\Sigma_{w}^{\frac{1}{2}} &
\Sigma_{\mathsf{K},\infty}^{-\frac{1}{2}}
\Sigma_{w}^{\frac{1}{2}}
\\
\vdots & \vdots & \ddots
\\
\Sigma_{\mathsf{K},\infty}^{-\frac{1}{2}}A_{\mathsf{K}}^{T_J\!-\!1}
\Sigma_w^{\frac{1}{2}} &
\Sigma_{\mathsf{K},\infty}^{-\frac{1}{2}}A_{\mathsf{K}}^{T_J\!-\!2}\Sigma_{w}^{\frac{1}{2}} &
\cdots &
\Sigma_{\mathsf{K},\infty}^{-\frac{1}{2}}
\Sigma_{w}^{\frac{1}{2}}
\end{bmatrix},\\
\Phi_{\gamma}
\!\coloneqq &\ 
{\Psi}^{\!\top}
\!\!\cdot\!\operatorname{blkdiag}\!
\left[\!
\left(\!\gamma^{T_J-t}\Sigma_{\mathsf{K},\infty}^{\frac{1}{2}}Q_{\mathsf{K}}\Sigma_{\mathsf{K},\infty}^{\frac{1}{2}}\right)_{t=1}^{T_J}
\!\right]
\!\!\cdot\! \Psi,\ \ \gamma\!\in\![0,1],
\end{align*}
where $\operatorname{blkdiag}[(M_l)_{l=1}^p]$ denotes the block diagonal matrix formed by $M_1,\ldots,M_p$. Notice that $\varpi \sim \mathcal{N}(0,I_{nT_J})$.

The following lemma shows that $\mu_i(T_J)$ can be written as a quadratic form in terms of the above auxiliary quantities.

\begin{lemma}[Quadratic Gaussian representation]
\label{lemma:quad_form_mu_J}
For  $\gamma \in[0,1]$, 
 \begin{align}\label{eq:sum_global_cost_quad_form_Phi}
   \sum_{t=1}^{T_J}\gamma^{T_J-t}x(t)^\top Q_{\mathsf{K}}x(t)
=\varpi^\top
\Phi_\gamma\varpi.
\end{align}
Moreover, for any $1\leq i \leq N$, the global objective estimation $\mu_i(T_J)$ (a.k.a. $\tilde J_i(\mK)$) satisfies
\begin{equation}\label{eq:mui_quad_form_Phi}
\left|\mu_i(T_J)\!-\!
\frac{1}{T_J}\varpi^\top
\Phi_1\varpi\right|
\leq
\frac{N}{T_J}\varpi^\top
\Phi_{\rho_W} \varpi. 
\end{equation}
\end{lemma}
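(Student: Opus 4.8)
The plan is to prove the two displayed identities in order, deriving \eqref{eq:sum_global_cost_quad_form_Phi} first and then leveraging it twice (with $\gamma=1$ and $\gamma=\rho_W$) to obtain \eqref{eq:mui_quad_form_Phi}.

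For \eqref{eq:sum_global_cost_quad_form_Phi} I would begin by unrolling the closed-loop dynamics. Since {\tt GlobalCostEst} resets $x(0)=0$ and the closed loop is $x(t+1)=A_{\mK}x(t)+w(t)$, one has $x(t)=\sum_{\tau=0}^{t-1}A_{\mK}^{t-1-\tau}w(\tau)$. Reading off the block-lower-triangular structure of $\Psi$, whose $(t,\tau{+}1)$ block is $\Sigma_{\mK,\infty}^{-1/2}A_{\mK}^{t-1-\tau}\Sigma_w^{1/2}$ while the $(\tau{+}1)$-th block of $\varpi$ is $\Sigma_w^{-1/2}w(\tau)$, I would verify that the $t$-th block of $\Psi\varpi$ is exactly $\Sigma_{\mK,\infty}^{-1/2}x(t)$. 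Then $\varpi^\top\Phi_\gamma\varpi=(\Psi\varpi)^\top\,\mathrm{blkdiag}[\cdots]\,(\Psi\varpi)$ splits blockwise, the factors $\Sigma_{\mK,\infty}^{\pm1/2}$ cancel, and what remains is precisely $\sum_{t=1}^{T_J}\gamma^{T_J-t}x(t)^\top Q_{\mK}x(t)$. This step is essentially bookkeeping on the block structure.

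For \eqref{eq:mui_quad_form_Phi} the pivotal idea is to linearize the recursion \eqref{eq:consensus_cost}. Writing $\mu(t)=(\mu_1(t),\dots,\mu_N(t))^\top$ and $c(t)=(c_1(t),\dots,c_N(t))^\top$, the update becomes $t\,\mu(t)=(t-1)W\mu(t-1)+c(t)$, so the substitution $\nu(t):=t\,\mu(t)$ yields the clean linear recursion $\nu(t)=W\nu(t-1)+c(t)$ with $\nu(0)=0$, hence $\mu_i(T_J)=\frac{1}{T_J}\sum_{\tau=1}^{T_J}\sum_{j=1}^N (W^{T_J-\tau})_{ij}\,c_j(\tau)$. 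Using $u(t)=\mathcal{M}(\mK)x(t)$ gives $c_j(\tau)=x(\tau)^\top Q_{j,\mK}x(\tau)\ge0$, and the $\gamma=1$ case of \eqref{eq:sum_global_cost_quad_form_Phi} together with $Q_{\mK}=\frac1N\sum_j Q_{j,\mK}$ gives $\frac{1}{T_J}\varpi^\top\Phi_1\varpi=\frac{1}{T_J}\sum_{\tau=1}^{T_J}\frac1N\sum_{j=1}^N c_j(\tau)$. Subtracting, the difference equals $\frac{1}{T_J}\sum_{\tau=1}^{T_J}\sum_{j=1}^N\big[(W^{T_J-\tau})_{ij}-\tfrac1N\big]c_j(\tau)$.

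The final step is the bound. Because $W$ is doubly stochastic, $W$ and $\frac1N\one\one^\top$ commute and $(W-\frac1N\one\one^\top)^k=W^k-\frac1N\one\one^\top$ for $k\ge1$, so $|(W^{T_J-\tau})_{ij}-\tfrac1N|\le\|W-\frac1N\one\one^\top\|^{T_J-\tau}=\rho_W^{T_J-\tau}$ (the exponent-zero term being bounded by $1=\rho_W^0$ directly). Then I would use $c_j(\tau)\ge0$ to move the absolute value through the double sum, apply $\sum_{j}c_j(\tau)=N\,x(\tau)^\top Q_{\mK}x(\tau)$, and arrive at the upper bound $\frac{N}{T_J}\sum_{\tau=1}^{T_J}\rho_W^{T_J-\tau}x(\tau)^\top Q_{\mK}x(\tau)$, which is exactly $\frac{N}{T_J}\varpi^\top\Phi_{\rho_W}\varpi$ by the $\gamma=\rho_W$ case of \eqref{eq:sum_global_cost_quad_form_Phi}. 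The main obstacle — and really the only substantive point — is this last maneuver: the nonnegativity of the stage costs $c_j(\tau)$ is what licenses pulling the absolute value inside the summation, and recognizing the resulting nonnegative $\rho_W$-weighted sum as the quadratic form $\varpi^\top\Phi_{\rho_W}\varpi$ is what closes the argument cleanly.
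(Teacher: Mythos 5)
Your proposal is correct and follows essentially the same route as the paper: unroll the closed-loop dynamics to identify the blocks of $\Psi\varpi$ with $\Sigma_{\mK,\infty}^{-1/2}x(t)$ for \eqref{eq:sum_global_cost_quad_form_Phi}, then unroll the consensus recursion \eqref{eq:consensus_cost} (the paper treats the identity $T_J\mu_i(T_J)=\sum_{t}\sum_{j}[W^{T_J-t}]_{ij}c_j(t)$ as immediate, while you derive it via $\nu(t)=t\mu(t)$), and finally exploit double stochasticity, the spectral bound $\rho_W$, and nonnegativity of the stage costs to get \eqref{eq:mui_quad_form_Phi}. Your entrywise bound $|(W^{k})_{ij}-\tfrac1N|\leq\rho_W^{k}$ with the exponent-zero case treated separately is in fact slightly more careful than the paper's intermediate step, which invokes $(W-\tfrac1N\one\one^\top)^{k}=W^{k}-\tfrac1N\one\one^\top$ uniformly even though that identity fails at $k=0$ (the final bound is unaffected, exactly for the reason you note).
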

\begin{proof} We first prove \eqref{eq:sum_global_cost_quad_form_Phi}.
For a closed-loop system  ${x(t+ 1)}=A_{\mK} x(t)+w(t)$ started with  $x(0)=0$, we have
$$
x(t)=\sum_{\tau=1}^{t} \!A_{\mathsf{K}}^{t-\tau}w(\tau \!-\! 1).
$$
Then, by the definitions of $\varpi$ and $\Phi_{\gamma}$, we have
$$
\begin{aligned}
\sum_{t=1}^{T_J} \gamma^{T_J-t}x(t)^\top Q_{\mK} x(t) 
=\ &
\sum_{t=1}^{T_J}\!\gamma^{T_J-t}
\!\left(\sum_{\tau=1}^{t}w(\tau \!-\! 1)^{\!\top} \!\big(A_{\mK}^\top\big)^{t\!-\tau}\!\right)
\! Q_{\mathsf{K}} \!
\left(\sum_{\tau=1}^{t}A_{\mK}^{t\!-\tau\!} w(\tau \!-\! 1)\!\right) \\
=\ &
\sum_{t=1}^{T_J}(\Psi \varpi)_t^\top \left(\gamma^{T_J-t}\Sigma_{\mathsf{K},\infty}^{\frac{1}{2}}Q_{\mathsf{K}}\Sigma_{\mathsf{K},\infty}^{\frac{1}{2}}\right)(\Psi \varpi)_t=
\varpi^\top
\Phi_{\gamma}\varpi,
\end{aligned}
$$
where $\!(\Psi \varpi)_t\!=\!\sum_{\tau=1}^t\!\Sigma_{\mK,\infty}^{-\frac{1}{2}}A_{\mK}^{t-\tau}w(\tau\!-\!1)$ is the $t$'th block of $\Psi \varpi$.

Next, we prove the inequality \eqref{eq:mui_quad_form_Phi}. Notice that
\begin{align*}
\left|T_J\mu_i(T_J)\!-\!
\varpi^\top
\Phi_1\varpi\right|
=\ & \left|\sum_{t=1}^{T_J}\sum_{j=1}^N \big[W^{T_J-t}\big]_{ij} x(t)^\top Q_{j, \mK} x(t)- \sum_{t=1}^{T_J} x(t)^\top Q_{\mK} x(t)\right|\\
= \ & \left| \sum_{t=1}^{T_J}\sum_{j=1}^N \left(\big[W^{T_J-t}\big]_{ij}-\frac{1}{N}\right) x(t)^\top Q_{j, \mK} x(t)\right|\\
= \ & \left| \sum_{t=1}^{T_J}\sum_{j=1}^N \left[\big(W-\frac{1}{N}\one \one^\top \big)^{T_J-t}\right]_{ij} x(t)^\top Q_{j, \mK} x(t)\right|\\
\leq \ & \sum_{t=1}^{T_J} \left| \sum_{j=1}^N \left[\big(W-\frac{1}{N}\one \one^\top \big)^{T_J-t}\right]_{ij} x(t)^\top Q_{j, \mK} x(t)\right|\\
\leq\ &
\sum_{t=1}^{T_J}
\rho_W^{T_J-t}\sqrt{\sum_{j=1}^N \big(x(t)^\top Q_{j,\mathsf{K}} x(t)\big)^2 }
\leq
\sum_{t=1}^{T_J}
\rho_W^{T_J-t}\sum_{j=1}^N x(t)^\top Q_{j,\mathsf{K}} x(t) \\
=\,&
N \sum_{t=1}^{T_J}\rho_W^{T_J-t} x(t)^\top Q_{\mathsf{K}} x(t)=N\varpi^\top \Phi_{\rho_W} \varpi,
\end{align*}
where the first step uses the definition of $\mu_i(T_J)$ and \eqref{eq:sum_global_cost_quad_form_Phi}; the second step uses the definition of $Q_{\mK}$; the third step uses a property of a doubly stochastic matrix $W$ that $\big(W-\frac{1}{N}\one \one^\top\big)^t= W^t-\frac{1}{N}\one \one^\top$; the fifth step uses the fact that for any vector $v\in\mathbb{R}^N$, we have
$$
\begin{aligned}
  \left|\sum_{j=1}^N \left[\big(W-N^{-1}\one \one^\top \big)^{T_J-t}\right]_{ij}
v_j\right|
\leq\ &
\left\|\big(W-N^{-1}\one\one^\top\big)^{T_J-t}v\right\|_\infty  \\
\leq\ &
\left\|\big(W-N^{-1}\one\one^\top\big)^{T_J-t}v\right\|
\leq \rho_W^{T_J-t}\|v\|;
\end{aligned}
$$
the sixth step follows from $\|v\|\leq \one^\top v$ for any vector $v$ with nonnegative entries; the last step uses \eqref{eq:sum_global_cost_quad_form_Phi}.
\end{proof}

\noindent\textit{(b) Properties of the Parameter Matrices $\Phi_{\rho_W}$ and $\Phi_1$.}
\vspace{-2pt}
\begin{lemma}[Properties of $\Phi_{\rho_W}$ and $\Phi_1$] \label{lemma:Phi_trace_norm_bound}
The parameter matrices $\Phi_{\rho_W}$ and $\Phi_1$ in the quadratic Gaussian representation enjoy the following properties.
\begin{itemize}[leftmargin=10pt]
\item  $\Phi_{\rho_W}$ satisfies
$$
\operatorname{tr}\!\big(
\Phi_{\rho_W}
\big)
\leq \frac{J(\mathsf{K})}{1-\rho_W}.
$$
\item $\Phi_1$ satisfies
$$
\big\|\Phi_1\big\|
\leq
J(\mathsf{K})
\left(
\frac{2\varphi(\mathsf{K})}{1-\rho(A_{\mathsf{K}})}
\right)^2
\qquad\text{and}\qquad
\big\|\Phi_1\big\|_F^2\!
\leq\!
J(\mathsf{K})^2 n T_J\!
\left(
\frac{2\varphi(\mathsf{K})}{1-\rho(A_{\mathsf{K}})}
\right)^4,\!$$
where $\varphi(\mK)$ is defined in Lemma~\ref{lemma:bound_iteratedA}.
\end{itemize}

\end{lemma}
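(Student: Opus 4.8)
The plan is to treat both matrices through the quadratic-form identity \eqref{eq:sum_global_cost_quad_form_Phi} and the block-lower-triangular structure of $\Psi$. For the trace bound on $\Phi_{\rho_W}$, I would exploit that $\varpi\sim\mathcal{N}(0,I_{nT_J})$, so $\tr(\Phi_{\rho_W})=\E[\varpi^\top\Phi_{\rho_W}\varpi]$. By \eqref{eq:sum_global_cost_quad_form_Phi} with $\gamma=\rho_W$, this equals $\sum_{t=1}^{T_J}\rho_W^{T_J-t}\E[x(t)^\top Q_{\mathsf{K}}x(t)]$. Since $\Sigma_{\mathsf{K},t}\preceq\Sigma_{\mathsf{K},\infty}$ by \eqref{equ: Sigma(K,t)} and $Q_{\mathsf{K}}\succeq 0$, each term satisfies $\E[x(t)^\top Q_{\mathsf{K}}x(t)]=\tr(Q_{\mathsf{K}}\Sigma_{\mathsf{K},t})\leq\tr(Q_{\mathsf{K}}\Sigma_{\mathsf{K},\infty})=J(\mathsf{K})$. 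Summing the geometric series $\sum_{t=1}^{T_J}\rho_W^{T_J-t}\leq(1-\rho_W)^{-1}$ then yields the claim.

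For $\Phi_1$, I would first write $\Phi_1=\Psi^\top(I_{T_J}\otimes M)\Psi$ with $M\coloneqq\Sigma_{\mathsf{K},\infty}^{1/2}Q_{\mathsf{K}}\Sigma_{\mathsf{K},\infty}^{1/2}\succeq 0$, so that $\|\Phi_1\|\leq\|M\|\,\|\Psi\|^2$. Since $M\succeq 0$, its operator norm is bounded by its trace, $\|M\|\leq\tr(M)=\tr(Q_{\mathsf{K}}\Sigma_{\mathsf{K},\infty})=J(\mathsf{K})$ by \eqref{equ: formula of J(K)}. The main work is bounding $\|\Psi\|$, and here I would align the structure with Lemma~\ref{lemma:bound_iteratedA} via a similarity transform: the $(t,\tau)$ block of $\Psi$ (for $\tau\leq t$) equals $(\Sigma_{\mathsf{K},\infty}^{-1/2}A_{\mathsf{K}}\Sigma_{\mathsf{K},\infty}^{1/2})^{t-\tau}\Sigma_{\mathsf{K},\infty}^{-1/2}\Sigma_w^{1/2}$. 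Decomposing $\Psi=\sum_{k=0}^{T_J-1}S_k\otimes\big((\Sigma_{\mathsf{K},\infty}^{-1/2}A_{\mathsf{K}}\Sigma_{\mathsf{K},\infty}^{1/2})^{k}\Sigma_{\mathsf{K},\infty}^{-1/2}\Sigma_w^{1/2}\big)$, where $S_k$ is the $T_J\times T_J$ shift matrix with $\|S_k\|=1$, the triangle inequality and multiplicativity of the norm under Kronecker products give $\|\Psi\|\leq\sum_{k=0}^{T_J-1}\|(\Sigma_{\mathsf{K},\infty}^{-1/2}A_{\mathsf{K}}\Sigma_{\mathsf{K},\infty}^{1/2})^{k}\|\,\|\Sigma_{\mathsf{K},\infty}^{-1/2}\Sigma_w^{1/2}\|$. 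Two ingredients then close the bound: Lemma~\ref{lemma:bound_iteratedA} controls the first factor by $\varphi(\mathsf{K})((1+\rho(A_{\mathsf{K}}))/2)^k$, and the relation $\Sigma_{\mathsf{K},\infty}\succeq\Sigma_w$ (the $\tau=0$ term in \eqref{equ: Sigma(K,t)}) gives $\|\Sigma_{\mathsf{K},\infty}^{-1/2}\Sigma_w^{1/2}\|\leq 1$. Summing the geometric series over $k$ produces $\|\Psi\|\leq 2\varphi(\mathsf{K})/(1-\rho(A_{\mathsf{K}}))$, and combining with $\|M\|\leq J(\mathsf{K})$ gives the operator-norm bound.

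Finally, for the Frobenius bound I would use that $\Phi_1$ is positive semidefinite of dimension $nT_J$, so $\|\Phi_1\|_F^2=\sum_i\lambda_i(\Phi_1)^2\leq nT_J\,\|\Phi_1\|^2$, and substitute the operator-norm estimate just obtained. The main obstacle is the estimate of $\|\Psi\|$: the delicate points are recognizing the similarity transform that turns $\Sigma_{\mathsf{K},\infty}^{-1/2}A_{\mathsf{K}}^{k}\Sigma_{\mathsf{K},\infty}^{1/2}$ into powers of the contraction studied in Lemma~\ref{lemma:bound_iteratedA}, and verifying $\Sigma_{\mathsf{K},\infty}\succeq\Sigma_w$ so that the noise-injection factor is a contraction; everything else reduces to summing geometric series.
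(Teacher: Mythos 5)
Your proposal is correct, and its overall skeleton matches the paper's proof: the trace bound via $\tr(\Phi_{\rho_W})=\E[\varpi^\top\Phi_{\rho_W}\varpi]$, the quadratic-form identity with $\gamma=\rho_W$, and $\tr(Q_{\mathsf{K}}\Sigma_{\mathsf{K},t})\leq J(\mathsf{K})$ plus a geometric series; then $\|\Phi_1\|\leq\|M\|\,\|\Psi\|^2\leq J(\mathsf{K})\|\Psi\|^2$; and finally $\|\Phi_1\|_F^2\leq nT_J\|\Phi_1\|^2$. The one place where you genuinely diverge is the bound on $\|\Psi\|$, which both you and the paper identify as the main work. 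The paper expands $\|\Psi v\|^2$ as a double sum for a unit vector $v$, reduces it to the operator norm of the scalar Toeplitz matrix $\mathbf H^{(T_J)}$ with entries $\tilde\rho(A_{\mathsf K})^{t-\tau}$, and bounds that norm by the $\mathcal{H}_\infty$ norm of the transfer function $1/(1-\tilde\rho(A_{\mathsf K})z^{-1})$, giving $2/(1-\rho(A_{\mathsf K}))$. You instead write $\Psi=\sum_{k=0}^{T_J-1}S_k\otimes\bigl(\hat A_{\mathsf K}^{k}\Sigma_{\mathsf{K},\infty}^{-1/2}\Sigma_w^{1/2}\bigr)$ with shift matrices $S_k$ of unit norm, apply the triangle inequality, and sum the geometric series $\sum_k\varphi(\mathsf K)\tilde\rho(A_{\mathsf K})^k$. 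The two routes produce exactly the same constant, since the $\mathcal{H}_\infty$ bound is itself $\sum_{k\geq 0}\tilde\rho(A_{\mathsf K})^k=1/(1-\tilde\rho(A_{\mathsf K}))$; your argument is somewhat more elementary (no appeal to transfer functions or Toeplitz operator theory, only norm multiplicativity of Kronecker products), while the paper's argument would generalize more gracefully if the block entries did not factor as a common contraction times powers of a single matrix. Your supporting steps — the telescoping identity $\Sigma_{\mathsf{K},\infty}^{-1/2}A_{\mathsf K}^{k}\Sigma_w^{1/2}=\hat A_{\mathsf K}^{k}\Sigma_{\mathsf{K},\infty}^{-1/2}\Sigma_w^{1/2}$ and the contraction property $\|\Sigma_{\mathsf{K},\infty}^{-1/2}\Sigma_w^{1/2}\|\leq 1$ from $\Sigma_w\preceq\Sigma_{\mathsf{K},\infty}$ — are exactly the ones the paper also uses, so the proposal is complete as stated.
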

\begin{proof}
{\scriptsize\textbullet}   For $\Phi_{\rho_W}$, by $\varpi\!\sim\! \mathcal N(0, I_{nT_J})$, Lemma~\ref{lemma:quad_form_mu_J}, and $\mathbb{E}\!\left[x(t)x(t)^\top\right] \preceq \Sigma_{\mK, \infty}$, we have
$$
\begin{aligned}
\operatorname{tr}(\Phi_{\rho_W})
=\ &
\mathbb{E}[\varpi^\top
\Phi_{\rho_W}\varpi] =
\mathbb{E}[
\sum_{t=1}^{T_J}
\rho_W^{T_J-t} x(t)^\top Q_{\mK} x(t)
] \\
\leq\ &
\sum_{t=1}^{T_J}
\rho_W^{T_J-t} \operatorname{tr}(Q_{\mK}
\Sigma_{\mK,\infty}
)
=
J(\mathsf{K}) \sum_{t=1}^{T_J}
\rho_W^{T_J-t}
\leq \frac{J(\mathsf{K})}{1-\rho_W}.
\end{aligned}
$$

\noindent{ \scriptsize\textbullet}   For $\Phi_1$, since the matrix $\Sigma_{\mathsf{K},\infty}^{\frac{1}{2}} Q_{\mathsf{K}} \Sigma_{\mathsf{K},\infty}^{\frac{1}{2}}$ is positive definite, we have that
$$
\begin{aligned}
\big\|\Phi_1\big\|
\leq
\left\|\Sigma_{\mathsf{K},\infty}^{\frac{1}{2}} Q_{\mathsf{K}} \Sigma_{\mathsf{K},\infty}^{\frac{1}{2}}\right\|\big\|\Psi\big\|^2
\leq\ &
\operatorname{tr}(\Sigma_{\mathsf{K},\infty}^{\frac{1}{2}} Q_{\mathsf{K}} \Sigma_{\mathsf{K},\infty}^{\frac{1}{2}})
\big\|\Psi\big\|^2 \\
=\ &
\operatorname{tr}(Q_{\mK}\Sigma_{\mK,\infty})\big\|\Psi\big\|^2
= 
J(\mathsf{K})
\big\|\Psi\big\|^2.
\end{aligned}
$$

Now it suffices to bound $\big\|\Psi\big\|^2.$ Consider any $v=\begin{bmatrix}v_0^\top & v_1^\top & \cdots & v_{T_J-1}^\top\end{bmatrix}^\top\in\mathbb{S}_{nT_J}$, and then we have
\begin{align*}
\big\|\Psi v\big\|^2
 =\, &\sum_{t=1}^{T_J}
\sum_{\tau,\tau'=1}^t
\!\! v_{\tau -\! 1}^\top\Sigma_w^{\frac{1}{2}} \!
\left(
A_{\mathsf{K}}^\top
\right)^{t - \tau} 
\Sigma_{\mathsf{K},\infty}^{-1}
A_{\mathsf{K}}^{t-\tau'}
\Sigma_w^{\frac{1}{2}}
v_{\tau' \!-\! 1} \\
 =\, &
\sum_{t=1}^{T_J}
\sum_{\tau,\tau'=1}^t
\!\!v_{\tau -\! 1}^\top\Sigma_w^{\frac{1}{2}}\Sigma_{\mathsf{K},\infty}^{-\frac{1}{2}}
\big(
\hat A_{\mathsf{K}}^\top
\big)^{\!t\!-\!\tau} 
\hat A_{\mathsf{K}}^{t-\tau'}
\!\!\Sigma^{-\frac{1}{2}}_{\mK,\infty}\Sigma_w^{\frac{1}{2}}
v_{\tau' \!-\! 1} \\
 \leq \, &
\sum_{t=1}^{T_J}
\sum_{\tau,\tau'=1}^t
\!\|v_{\tau-1}\|
\left\|\big(\hat A_{\mathsf{K}}^\top
\big)^{t-\tau}\right\| 
\left\|\big(\hat A_{\mathsf{K}}\big)^{t-\tau'}\right\|\|v_{\tau'-1}\| \\
\leq \, &
\varphi(\mathsf{K})^2
\sum_{t=1}^{T_J}
\sum_{\tau,\tau'=1}^t
\!\!\|v_{\tau \!-\! 1}\|
\!\cdot\tilde{\rho}(A_{\mathsf{K}})^{t-\tau}
\!\!\cdot
\tilde{\rho}(A_{\mathsf{K}})^{t-\tau'}
\!\cdot\! \|v_{\tau' \!-\! 1}\|\\
= \ & \! \varphi(\mathsf{K})^2 \! \left\|
\underbrace{\begin{bmatrix}
	1 \\
	\tilde{\rho}(A_{\mathsf{K}}) & 1 \\
	\vdots & \vdots & \ddots \\
	\tilde{\rho}(A_{\mathsf{K}})^{T_J-1} &
	\tilde{\rho}(A_{\mathsf{K}})^{T_J-2} &
	\cdots & 1
	\end{bmatrix}}_{\mathbf H^{(T_J)}} 
	\underbrace{
	\begin{bmatrix}
	\|v_0\| \\ \|v_1\| \\ \vdots \\ \|v_{T-1}\|
	\end{bmatrix}}_{\mathbf v}
\right\|^2\\
\leq  \ &\varphi(\mathsf{K})^2 \big\|\mathbf H^{(T_J)}\big\|^2\|\mathbf v\|^2= \varphi(\mathsf{K})^2\|\mathbf H^{(T_J)}\|^2
\end{align*}
where in the second step we denote $\hat A_{\mK}=\Sigma_{\mathsf{K},\infty}^{-\frac{1}{2}}A_{\mathsf{K}}
\Sigma_{\mathsf{K},\infty}^{\frac{1}{2}}$; the third step uses $\Sigma_w\preceq\Sigma_{\mathsf{K},\infty}$ and thus $\big\|\Sigma_{\mathsf{K},\infty}^{-1/2}\Sigma_{w}^{1/2}\big\|\leq 1$; the fourth step uses  Lemma~\ref{lemma:bound_iteratedA} and $\tilde\rho(A_{\mathsf{K}})=(1+\rho(A_{\mathsf{K}}))/2$; the last step is because $v$ is on the unit sphere.

Notice that $\mathbf H^{(T_J)}$ can be viewed as a finite-horizon truncation of the block-Toeplitz representation of  the linear system with transfer function 
$$
\mathbf{H}(z)=
\sum_{t=0}^\infty
\tilde{\rho}(A_{\mK})^t z^{-t}
=\frac{1}{1\!-\!\tilde{\rho}(A_{\mK})z^{-1}}.$$ Therefore, 
$$
\big\|\mathbf H^{(T_J)}\big\|
\!\leq 
\|\mathbf{H}\|_{\mathcal{H}_\infty}\!=\!
\sup_{\|z\|=1}\!|\mathbf{H}(z)|
\!=\!\frac{1}{1 \!-\! \tilde{\rho}(A_{\mK})}\!=\!\frac{2}{1 \!-\! {\rho}(A_{\mK})}.
$$
This completes the proof of the bound on $\|\Phi_1\|$. The  bound on $\big\|\Phi_1\big\|_F^2$
follows from $\big\|\Phi_1\big\|_F^2
\leq nT_J \big\|\Phi_1\big\|^2$.
\end{proof}

\noindent\textit{(c) Bounding the Bias and Second Moment.} The proof relies on the following properties of quadratic Gaussian variables.
\vspace{-2pt}
\begin{proposition}[{\hspace{1sp}\cite[Theorems 1.5 \& 1.6]{seber2003linear}}]\label{proposition:Gaussian_quadratic_stat}
Let $z\sim\mathcal{N}(0, I_p)$, and let $M\in\mathbb{R}^{p\times p}$ be any symmetric matrix. Then we have $\mathbb{E}\!\left[z^\top M z\right]=\operatorname{tr}(M)$ and $\operatorname{Var}\!\left(z^\top M z\right)=2 \big\|M\|_F^2$.
\end{proposition}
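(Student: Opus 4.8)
The plan is to prove the two identities separately: the mean follows from a direct expansion and the first two moments of $z$, while the variance I would obtain by reducing $z^\top M z$ to a diagonal quadratic form via the spectral theorem. Throughout I write $z^\top M z=\sum_{i,j=1}^p M_{ij}z_iz_j$ and use that the components $z_1,\dots,z_p$ are independent standard normals, so in particular $\E[z_iz_j]=\delta_{ij}$.

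For the mean, I would apply linearity of expectation to the expansion above, giving $\E[z^\top M z]=\sum_{i,j}M_{ij}\E[z_iz_j]=\sum_{i,j}M_{ij}\delta_{ij}=\sum_i M_{ii}=\tr(M)$. This step uses only the first two moments of $z$ and needs no symmetry of $M$.

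For the variance, I would first diagonalize. Since $M$ is symmetric, write $M=U\Lambda U^\top$ with $U$ orthogonal and $\Lambda=\diag(\lambda_1,\dots,\lambda_p)$. Setting $y\coloneqq U^\top z$, orthogonal invariance of the standard Gaussian gives $y\sim\mathcal N(0,I_p)$, and $z^\top M z=y^\top\Lambda y=\sum_{i=1}^p\lambda_i y_i^2$. Because the $y_i$ are independent and $\var(y_i^2)=\E[y_i^4]-(\E[y_i^2])^2=3-1=2$, independence yields $\var(z^\top M z)=\sum_{i=1}^p\lambda_i^2\,\var(y_i^2)=2\sum_{i=1}^p\lambda_i^2$. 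Finally $\sum_i\lambda_i^2=\tr(M^2)=\|M\|_F^2$ for symmetric $M$, so $\var(z^\top M z)=2\|M\|_F^2$.

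There is no genuine obstacle here, as the result is classical; the only substantive inputs are the orthogonal invariance of $\mathcal N(0,I_p)$, which justifies the reduction to a diagonal form, and the normalized fourth moment $\E[y_i^4]=3$ of a standard normal (equivalently, that a $\chi^2_1$ variable has variance $2$). An alternative that avoids diagonalization is to expand $(z^\top M z)^2=\sum_{i,j,k,l}M_{ij}M_{kl}z_iz_jz_kz_l$ and apply Isserlis' (Wick's) theorem, $\E[z_iz_jz_kz_l]=\delta_{ij}\delta_{kl}+\delta_{ik}\delta_{jl}+\delta_{il}\delta_{jk}$, which collapses to $\E[(z^\top M z)^2]=\tr(M)^2+2\tr(M^2)$ and hence the same variance; the index bookkeeping in that route is the only mildly delicate part.
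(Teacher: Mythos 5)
Your proof is correct in both parts: the mean computation is immediate, and the variance argument via the spectral decomposition $M=U\Lambda U^\top$, orthogonal invariance of $\mathcal N(0,I_p)$, and $\var(y_i^2)=2$ is airtight, as is your alternative route through Isserlis' theorem. There is, however, nothing in the paper to compare against: the paper does not prove this proposition at all, but imports it as a known result, citing Theorems 1.5 and 1.6 of Seber and Lee's \emph{Linear Regression Analysis}. So what you have produced is a self-contained proof of a classical fact that the authors treat as a black box; your argument is the standard one (reduction to a weighted sum of independent $\chi^2_1$ variables, whose variance is $2$, followed by $\sum_i\lambda_i^2=\tr(M^2)=\|M\|_F^2$ for symmetric $M$), and it is exactly the kind of derivation one would find in the cited reference. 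One minor remark: your observation that the mean identity needs no symmetry of $M$ is accurate, and the same is true of the Wick-expansion route for the second moment, which for general $M$ yields $\var(z^\top Mz)=\|M\|_F^2+\tr(M^2)$, collapsing to $2\|M\|_F^2$ precisely when $M$ is symmetric --- a slightly more general statement than the one quoted, though the paper only ever applies it to symmetric matrices ($\Phi_1$ and $\Phi_{\rho_W}$).
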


Now we are ready for the proof of Lemma \ref{lem: bias and var of mui}.

Firstly, we consider the bias. 
\begin{align*}
\left|\E[\mu_i(T_J)]- J(\mK)\right|
\leq &\left| \E\!\left[\mu_i(T_J)- \frac{1}{T_J}\varpi^\top
\Phi_1\varpi\right]\right|+\left|\E\!\left[\frac{1}{T_J}\varpi^\top
\Phi_1\varpi\right]- J(\mK)\right|\\
 \leq &\  \E\!\left[\frac{N}{T_J}\varpi^\top
\Phi_{\rho_W} \varpi\right] + \left| \frac{1}{T_J}\E\!\left[\sum_{t=1}^{T_J}x(t)^\top Q_{\mK} x(t)\right]-J(\mK)\right|\\
 =& \ \frac{N}{T_J}\operatorname{tr}(\Phi_{\rho_W})+\frac{1}{T_J}\left|\sum_{t=1}^{T_J}\operatorname{tr}(Q_{\mK}(\Sigma_{\mK, t}-\Sigma_{\mK,\infty}))\right|\\
 \leq &\ \frac{N}{T_J}\frac{J(\mK)}{1-\rho_W}
 +\frac{1}{T_J}\sum_{t=1}^{T_J}\left| \operatorname{tr}\!\bigg(Q_{\mK} \sum_{\tau=t}^{\infty} A_{\mK}^\tau \Sigma_w \big(A_{\mK}^\top\big)^\tau\bigg)\right|\\
 \leq &\ \frac{N}{T_J}\frac{J(\mK)}{1-\rho_W}+\frac{J(K)}{T_J}\left(\frac{2\varphi(\mK)}{1-\rho(A_{\mK})}\right)^2
\end{align*}
where the second step uses Lemma~\ref{lemma:quad_form_mu_J}; the third step follows from $\varpi\sim \mathcal N(0, I_{nT_J})$, \eqref{equ: Sigma(K,t) def} and \eqref{equ: formula of J(K)}; the fourth step uses Lemma~\ref{lemma:Phi_trace_norm_bound} and \eqref{equ: Sigma(K,t)}; the last step uses the following fact:
\begin{align*}
\sum_{t=1}^{T_J}\left| \operatorname{tr}\!
\bigg(Q_{\mK} \sum_{\tau=t}^{\infty} A_{\mK}^\tau \Sigma_w \big(A_{\mK}^\top\big)^\tau\bigg)\right|
= \,&\sum_{t=1}^{T_J}\left| \operatorname{tr}\!\bigg(\Sigma_{\mK, \infty}^{\frac{1}{2}}Q_{\mK} \Sigma_{\mK, \infty}^{\frac{1}{2}}\sum_{\tau=t}^{\infty} \hat A_{\mK}^\tau \hat \Sigma_w \big(\hat A_{\mK}^\top\big)^\tau\bigg)\right|\\
\leq \,& \sum_{t=1}^{T_J} \operatorname{tr}\!\big(\Sigma_{\mK, \infty}^{\frac{1}{2}}Q_{\mK} \Sigma_{\mK, \infty}^{\frac{1}{2}}\big) \left\| \sum_{\tau=t}^{\infty} \hat A_{\mK}^\tau \hat \Sigma_w \big(\hat A_{\mK}^\top\big)^\tau\right\|\\
\leq \,& \sum_{t=1}^{T_J}J(\mK) \sum_{\tau=t}^{\infty} \big\|  \hat A_{\mK}^\tau\big\|^2 \big\|\hat \Sigma_w\big\|\\
\leq \,& \sum_{t=1}^{T_J}\! J(\mK) \sum_{\tau=t}^{\infty} \!\varphi(\mK)^2\!
\left(\frac{1 \!+\! \rho(A_{\mK})}{2}\right)^{\!\!2\tau}
\!\!\leq J(\mK)  \varphi(\mK)^2
\!\left(\frac{2}{1\!-\!\rho(A_{\mK})}\!\right)^{\!2} \!\!,
\end{align*}
where we denote $\hat A_{\mK}=\Sigma_{\mK, \infty}^{-\frac{1}{2}}A_{\mK}\Sigma_{\mK, \infty}^{\frac{1}{2}}$ and $\hat \Sigma_w=\Sigma_{\mK, \infty}^{-\frac{1}{2}}\Sigma_w \Sigma_{\mK, \infty}^{-\frac{1}{2}}$, the fourth step uses Lemma~\ref{lemma:bound_iteratedA} and $\big\|\hat \Sigma_w\big\|\leq 1$,  the last step uses $\sum_{t=1}^\infty\sum_{\tau=t}^\infty\left(\mfrac{1 \!+\! \rho(A_{\mK})}{2}\right)^{2\tau}
\leq \left(\mfrac{2}{1 \!-\! \rho(A_{\mK})}\right)^2$.

Define the constant $\beta_0$ as the following:
\begin{equation}\label{equ: beta0 def}
\beta_0\coloneqq \sup_{\mathsf{K}\in \mathcal Q^1}
\left(\frac{2\varphi(\mathsf{K})}{1-\rho(A_{\mathsf{K}})}\right)^2.
\end{equation}
Lemmas \ref{lemma:J_smoothness}, \ref{lemma:bound_iteratedA} and the continuity of the map $\mK\mapsto\rho(A_{\mK})$ ensure that $\beta_0$ is finite and only depends on the system parameters $A$, $B$, $\Sigma_w$,$Q_i$, $R_i$ as well as the initial cost $J(\mathsf{K}_0)$. By substituting $\beta_0$ into the inequality above, we prove  \eqref{eq:bias_mu_i} for any $\mK\in \Q^1$. 

Next, we bound $\mathbb{E}
\!\left[(\mu_i(T_J)-J(\mathsf{K}))^2\right]$. By \eqref{eq:bias_mu_i}, we have
$$
\begin{aligned}
\E\!\left[(\mu_i(T_J)\!-\!J(\mathsf{K}))^2\right] 
\leq\ &
(\mathbb{E}
\!\left[\mu_i(T_J)-J(\mathsf{K})\right])^2+ \operatorname{Var}(\mu_i(T_J)) \\
\leq\ &
\frac{J(\mathsf{K})^2\!}{T_J^2}\!
\left[
2\left(\!\frac{N}{1\!-\!\rho_W}\!\right)^2\!
\!+\!2\beta_0^2
\right]\!+\! \operatorname{Var}(\mu_i(T_J)).
\end{aligned}
$$
Then, we can  
bound then $\operatorname{Var}(\mu_i(T_J))$ below:
\begin{align*}
\operatorname{Var}(\mu_i(T_J))
\leq\ &
2\operatorname{Var}\!\left(\mu_i(T_J)\!-\!
\frac{1}{T_J}\varpi^\top\!
\Phi_1\varpi\right)  +
2\operatorname{Var}\!\left(
\frac{1}{T_J}\varpi^\top\!
\Phi_1\varpi\right) \\
\leq\ &
2\,\mathbb{E}\!\left[
\left|\mu_i(T_J)-
\frac{1}{T_J}\varpi^\top
\Phi_1\varpi\right|^2
\right] +
\frac{4}{T_J^2}\left\|\Phi_1\right\|_F^2\\
\leq\ &
2\,\frac{N^2}{T_J^2}\,\mathbb{E}\!\left[
\left(\varpi^\top
\Phi_{\rho_W}\varpi\right)^2
\right]+\frac{4nJ(\mathsf{K})^2}{T_J}
\beta_0^2\\
=\ &
\frac{2N^2}{T_J^2}\!\left(\!\left(\mathbb{E}
[\varpi^\top\!
\Phi_{\rho_W}\!\varpi] \right)^2 \!+\! \operatorname{Var}\!\left(\varpi^\top\!
\Phi_{\rho_W}\!\varpi\right)\!\right)
+\frac{4nJ(\mathsf{K})^2\!}{T_J}\!
\beta_0^2\\
\leq\ &
\frac{6J(\mathsf{K})^2}{T_J^2}\left(\frac{N}{1-\rho_W}\right)^2
+
\frac{4nJ(\mathsf{K})^2}{T_J}
\beta_0^2.
\end{align*}
where we use Proposition~\ref{proposition:Gaussian_quadratic_stat}, Lemmas~\ref{lemma:quad_form_mu_J} and \ref{lemma:Phi_trace_norm_bound},  $\|M\|_F \leq \operatorname{tr}(M)$ for any positive semidefinite $M$, $\mK \in \Q^1$ and \eqref{equ: beta0 def}. Finally, we obtain \eqref{eq:square_diff_mu_i_J} by $1/T_J^2 \leq n/T_J$.

\section{Proof of Lemma \ref{lemma:capped_J_bias}}\label{append: truncation error}
The proof is based on the following concentration inequality.

\begin{proposition}[\hspace{1sp}\cite{hsu2012tail}]\label{prop: tail_gaussian_general}
Let $z\sim \mathcal{N}(0, I_p)$, and let $M\in\mathbb{R}^{p\times p}$ be any symmetric positive definite matrix. Then for any $\delta\geq 0$,
$$
\mathbb{P}\!\left(
z^\top M z
>\operatorname{tr} M
+2\|M\|_F\sqrt{\delta}
+2\|M\|\delta
\right)
\leq e^{-\delta}.
$$

\end{proposition}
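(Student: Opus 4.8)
The plan is to prove this as a standard Chernoff bound for a weighted sum of independent chi-square variables. First I would diagonalize: since $M$ is symmetric positive definite, write $M = U^\top \Lambda U$ with $U$ orthogonal and $\Lambda = \diag(\lambda_1,\ldots,\lambda_p)$, $\lambda_j>0$. Setting $y = Uz$ and using the orthogonal invariance of the standard Gaussian gives $y\sim\mathcal N(0,I_p)$, so that $z^\top M z = \sum_{j=1}^p \lambda_j y_j^2$ is a $\lambda$-weighted sum of independent $\chi^2_1$ variables. Under this reduction, $\tr M = \sum_j\lambda_j$, $\|M\| = \max_j\lambda_j$, and $\|M\|_F^2 = \sum_j\lambda_j^2$, so the three quantities appearing in the tail threshold all have transparent meaning.

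Next, for $0 < s < 1/(2\|M\|)$ I would compute the moment generating function in closed form, coordinate by coordinate via $\E[e^{s\lambda_j y_j^2}] = (1-2s\lambda_j)^{-1/2}$, obtaining
\[
\E\big[e^{s(z^\top M z - \tr M)}\big]
= \prod_{j=1}^p (1-2s\lambda_j)^{-1/2}\, e^{-s\lambda_j}.
\]
Taking logarithms and applying the elementary inequality $-\tfrac12\log(1-x) - \tfrac{x}{2} \le \tfrac{x^2}{4(1-x)}$ for $x\in[0,1)$ with $x = 2s\lambda_j$ yields the log-MGF bound
\[
\log \E\big[e^{s(z^\top M z - \tr M)}\big]
\le \sum_{j=1}^p \frac{s^2\lambda_j^2}{1-2s\lambda_j}
\le \frac{s^2\,\|M\|_F^2}{1-2s\|M\|},
\]
where the last step replaces each $\lambda_j$ in the denominator by $\|M\|$. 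The auxiliary inequality itself I would verify by checking that the difference vanishes at $x=0$ and has nonnegative derivative $x^2/(4(1-x)^2)$ on $[0,1)$.

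A Chernoff argument then gives, for any threshold $\tau>0$,
\[
\Pb\big(z^\top M z > \tr M + \tau\big)
\le \exp\!\Big(-s\tau + \frac{s^2\|M\|_F^2}{1-2s\|M\|}\Big).
\]
The final and only delicate step is to choose $s$ matching the claimed threshold $\tau = 2\|M\|_F\sqrt\delta + 2\|M\|\delta$. I expect that optimizing over $s$ symbolically is awkward because of the $(1-2s\|M\|)^{-1}$ factor, so instead I would guess the closed form $s = \sqrt\delta/(\|M\|_F + 2\|M\|\sqrt\delta)$, which lies in $(0,1/(2\|M\|))$ for every $\delta\ge 0$, and verify algebraically that $1 - 2s\|M\| = \|M\|_F/(\|M\|_F + 2\|M\|\sqrt\delta)$, whence $s\tau = 2\delta(\|M\|_F + \|M\|\sqrt\delta)/(\|M\|_F + 2\|M\|\sqrt\delta)$ and $s^2\|M\|_F^2/(1-2s\|M\|) = \delta\|M\|_F/(\|M\|_F + 2\|M\|\sqrt\delta)$, so the exponent collapses exactly to $-\delta$. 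The main obstacle is thus not any hard analysis but pinning down this exact minimizing $s$; guessing it and confirming by substitution sidesteps the messy variational computation and makes the bound tight.
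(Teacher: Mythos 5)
Your proof is correct, and it fills in something the paper leaves blank: the paper gives no proof of Proposition~\ref{prop: tail_gaussian_general} at all, importing it verbatim from \cite{hsu2012tail}, so the only comparison available is with that reference. Hsu, Kakade and Zhang prove a more general statement --- a tail bound for quadratic forms of sub-Gaussian random vectors with positive semidefinite $M$ --- via an MGF comparison argument; in the Gaussian case their inequality reduces to the classical Laurent--Massart bound, which is exactly what you rederive from scratch. I checked the two steps you flagged as delicate and both are right: the difference $\frac{x^2}{4(1-x)}+\frac{x}{2}+\frac{1}{2}\log(1-x)$ vanishes at $x=0$ and has derivative $\frac{x^2}{4(1-x)^2}\geq 0$ on $[0,1)$, which gives the log-MGF bound after summing over coordinates with $x=2s\lambda_j$ and bounding each denominator via $\lambda_j\leq\|M\|$; and with the threshold $\tau=2\|M\|_F\sqrt{\delta}+2\|M\|\delta$ and your guess $s=\sqrt{\delta}/\bigl(\|M\|_F+2\|M\|\sqrt{\delta}\bigr)$, substitution indeed yields $1-2s\|M\|=\|M\|_F/\bigl(\|M\|_F+2\|M\|\sqrt{\delta}\bigr)$, $s\tau=2\delta\bigl(\|M\|_F+\|M\|\sqrt{\delta}\bigr)/\bigl(\|M\|_F+2\|M\|\sqrt{\delta}\bigr)$ and $s^2\|M\|_F^2/(1-2s\|M\|)=\delta\|M\|_F/\bigl(\|M\|_F+2\|M\|\sqrt{\delta}\bigr)$, so the exponent collapses exactly to $-\delta$ as you claim. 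Two minor remarks: your argument uses positive definiteness only to ensure $\|M\|_F>0$ in the choice of $s$, so it goes through verbatim for any nonzero symmetric positive semidefinite $M$ (the case $M=0$ being trivial), matching the generality of \cite{hsu2012tail}; and at $\delta=0$ your $s$ degenerates to $0$, where the Chernoff bound is the trivial $\leq 1=e^{0}$, so the boundary case is covered. What your elementary route buys over the citation is a short, self-contained Gaussian proof; what it gives up is the sub-Gaussian generality, which the paper does not need since the relevant vector $\varpi\sim\mathcal{N}(0,I_{nT_J})$ is exactly Gaussian.
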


By Lemma \ref{lemma:quad_form_mu_J}, we have
$\mu_i(T_J)
\leq T_J^{-1}\varpi^\top
\!\left(\Phi_1 \!+\! N\Phi_{\rho_W}\right)\varpi$.
Therefore for any $\varepsilon_1\geq  1$ and $\varepsilon_2\geq 0$, we have 
$$
\begin{aligned}
\mathbb{P}\left(
\mu_i(T_J)
> (\varepsilon_1+\varepsilon_2)J(\mathsf{K})
\right)
\leq\ &
\mathbb{P}\!\left(\!
\frac{1}{T_J}\varpi^{\!\top}
\!\!\left(
\Phi_1
\!+\!
N \Phi_{\rho_W}
\right)\!
\varpi
> (\varepsilon_1 \!+\! \varepsilon_2)J(\mathsf{K})
\!\right) \\
\leq\ &
\mathbb{P}\!\left(\!
\frac{1}{T_J}\varpi^{\!\top}
\Phi_1\varpi
>
\varepsilon_1J(\mathsf{K})
+
\frac{\operatorname{tr}\!\big(\Phi_1\big)}{T_J}
-J(\mathsf{K})
\!\right) \\
&+
\mathbb{P}\!\left(\!
\frac{N}{T_J}\varpi^{\!\top}
\!\Phi_{\rho_W}\varpi
>
\varepsilon_2 J(\mathsf{K})
+
J(\mathsf{K})
-
\frac{\operatorname{tr}\!\big(\Phi_1\big)}{T_J}
\!\right) \\
\leq\ &
\mathbb{P}\!\left(\!
\frac{1}{T_J}\varpi^{\!\top}
\Phi_1\varpi
>
\varepsilon_1J(\mathsf{K})
+
\frac{\operatorname{tr}\!\big(\Phi_1\big)}{T_J}
-J(\mathsf{K})
\!\right) \\
& +
\mathbb{P} \!\left(\!
\frac{N}{T_J}\varpi^\top
\Phi_{\rho_W}\varpi
>\varepsilon_2J(\mathsf{K})\right),
\end{aligned}
$$
where we used
$$
J(\mathsf{K})\geq
T_J^{-1}\sum_{t=1}^{T_J}\mathbb{E}\!\left[x(t)^\top Q_{\mK}x(t)\right]
=T_J^{-1}\operatorname{tr}\!\big(\Phi_1\big)
$$ by \eqref{equ: Sigma(K,t)}, \eqref{eq:sum_global_cost_quad_form_Phi} and Proposition~\ref{proposition:Gaussian_quadratic_stat}.
For the first term, by Proposition \ref{prop: tail_gaussian_general} and the bound $\big\|\Phi_1\big\|_F\leq \sqrt{nT_J}\big\|\Phi_1\big\|$, we get
$$
\begin{aligned}
\mathbb{P}
\Big(\varpi^{\!\top}
\Phi_1\varpi
>\operatorname{tr}\!\big(\Phi_1\big)
+2\big\|\Phi_1\big\|\sqrt{nT_J\delta}
+ 2\big\|\Phi_1\big\|\delta
\Big)
\leq
e^{-\delta},
\end{aligned}
$$
for any $\delta\geq 0$, and by letting $\delta$ satisfy
$$
2\big\|\Phi_1\big\|\sqrt{nT_J\delta}
+ 2\big\|\Phi_1\big\|\delta
=(\varepsilon_1-1)T_J J(\mathsf{K})$$
with $\varepsilon_1\geq 1$, we can get
\begin{align*}
\mathbb{P}
\!\left(\!
\frac{1}{T_J}\varpi^{\!\top}
\Phi_1\varpi
\!>\!
\varepsilon_1 J(\mathsf{K})
\!+\!
\frac{\operatorname{tr}\!\big(\Phi_1\big)}{T_J}
\!-\!
J(\mathsf{K})
\!\right)
\leq\,&
\exp \!\left[
-\frac{1}{4}
\left(\sqrt{2 \frac{(\varepsilon_1-1)T_J J(\mathsf{K})}{\big\|\Phi_1\big\|}+nT_J}-\sqrt{nT_J}\right)^2\right] \\
\leq\ &
\exp \!\left[
-\frac{1}{4}
\min\!\left\{
\frac{(\varepsilon_1-1)T_J J(\mathsf{K})}{\big\|\Phi_1\big\|},
\frac{(\varepsilon_1-1)^2T_J J(\mathsf{K})^2}{4n\big\|\Phi_1\big\|^2}
\right\}\right] \\
\leq\,&
\exp \! \left(
\!-
\frac{(\varepsilon_1-1)T_J J(\mathsf{K})}{4\big\|\Phi_1\big\|}\right)
+
\exp\left(
\!-
\frac{(\varepsilon_1-1)^2T_J J(\mathsf{K})^2}{16n\big\|\Phi_1\big\|^2}\right)
\!,
\end{align*}
where we used
$
\left(\sqrt{2x \!+\! nT_J} \!-\! \sqrt{nT_J}\right)^{\!2}
\geq\min\!\left\{x,x^2/(4nT_J)\right\}
$ for all $x\geq 0$
in the second inequality. For the second term, by Proposition \ref{prop: tail_gaussian_general} and the bound $\big\|\Phi_{\rho_W}\big\|\leq \big\|\Phi_{\rho_W}\big\|_F\leq \operatorname{tr}\big(\Phi_{\rho_W}\big)$, we obtain
$$
\mathbb{P}
\!\left(\varpi^{\!\top}
\Phi_{\rho_W}\varpi
>\operatorname{tr}\!\big(\Phi_{\rho_W}\big)
\!\left(1
\!+\!
2\sqrt{\delta}
\!+\!
2\delta\right)
\right)
\leq e^{-\delta}
$$
for any $\delta\geq 0$, and by letting
$$
\delta=\frac{1}{4}\!\left(\sqrt{\frac{2T_J\varepsilon_2J(\mathsf{K})}{N\operatorname{tr}\big(\Phi_{\rho_W}\big)}-1}-1\right)^2$$
for $\varepsilon_2\geq N\operatorname{tr}\big(\Phi_{\rho_W}\big)/(T_J J(\mathsf{K}))$, we obtain
$$
\begin{aligned}
\mathbb{P}
\!\left(\!\frac{N\varpi^{\!\top}
\!\Phi_{\rho_W}\varpi}{T_J}
\!>\!
\varepsilon_2 J(\mathsf{K})
\!\right)
\!\leq&
\exp \! \left[\!
-\frac{1}{4}\!\left(\!\!\sqrt{2\frac{\varepsilon_2T_JJ(\mathsf{K})}{N\operatorname{tr}\!\big(\Phi_{\rho_W}\!\big)} \!-\! 1} \!-\!1 \!\right)^{\!\!2}
\right] \\
\!\leq \!& 
\exp \! \left[\!
-\frac{1}{3}
\!\left(\frac{\varepsilon_2T_JJ(\mathsf{K})}{N\operatorname{tr}\!\big(\Phi_{\rho_W} \!\big)}\!-\!2\right)
\!\right],
\end{aligned}
$$
where we used
$\left(\sqrt{2x \!-\! 1} \!-\! 1\right)^{\!2}
\geq \frac{4}{3}(x-2)$ for any $x > 1$.

Thus, by letting $\varepsilon_1=4\varepsilon/5$ and $\varepsilon_2=\varepsilon/5$, we obtain
$$
\begin{aligned}
\mathbb{P}\left(
\mu_i(T_J)
> \varepsilon J(\mathsf{K})
\right)
\leq\,&
\exp\left(
-
\frac{(4\varepsilon/5-1)T_J J(\mathsf{K})}{4\big\|\Phi_1\big\|}\right)
+
\exp\left(
-
\frac{(4\varepsilon/5-1)^2T_J J(\mathsf{K})^2}{16n\big\|\Phi_1\big\|^2}\right)
\\
&+
\exp\left[
-\frac{1}{3}
\!\left(\!\frac{\varepsilon T_JJ(\mathsf{K})}{5 N\operatorname{tr}\big(\Phi_{\rho_W}\big)}-2\right)
\right]
\end{aligned}
$$
for $\varepsilon\geq 5N\operatorname{tr}\big(\Phi_{\rho_W}\big)/(T_JJ(\mathsf{K}))$. Now we have
$$
\begin{aligned}
\mathbb{E}\left[
\mu_i(T_J)-\min\!\left\{\mu_i(T_J),\bar J\right\}\right]
=\,&
\int_0^{+\infty}
\mathbb{P}\!\left(
\mu_i(T_J)-\min\!\left\{\mu_i(T_J),\bar J\right\}
\geq x
\right)\,dx \\
=\,&
\int_0^{+\infty}
\mathbb{P}\!\left(
\mu_i(T_J)
\geq \bar{J}+x
\right)\,dx \\
=\,&
J(\mathsf{K})\int_{\bar{J}/J(\mathsf{K})}^{+\infty}
\mathbb{P}\!\left(
\mu_i(T_J)
\geq \varepsilon J(\mathsf{K})
\right)\,d\varepsilon.
\end{aligned}
$$
By using
$e^{-x}<1/(2x)$ and 
$\int_x^{+\infty}\! e^{-u^2}du<{e^{-x^2}}/{(2x)}$ for any $x>0$, we can see that
\begin{align*}
\int_{\bar{J}/J(\mathsf{K})}^{+\infty}
\exp\left(
-
\frac{(4\varepsilon/5-1)T_J J(\mathsf{K})}{4\big\|\Phi_1\big\|}\right)
\,d\varepsilon
=\ &
\frac{5\big\|\Phi_1\big\|}{T_J J(\mathsf{K})}
\exp\left[
-\frac{T_J J(\mathsf{K})}{4\big\|\Phi_1\big\|}
\left(\frac{4 \bar J}{5J(\mathsf{K})}-1\right)
\right] \\
<\ &
\frac{10\big\|\Phi_1\big\|^2}{T_J^2 J(\mathsf{K})^2}
\cdot
\frac{1}{4\bar{J}/(5J(\mathsf{K}))-1},\\
\int_{\bar{J}/J(\mathsf{K})}^{+\infty}
\exp\left(
-
\frac{(4\varepsilon/5-1)^2T_J J(\mathsf{K})^2}{16n\big\|\Phi_1\big\|^2}\right)
\,d\varepsilon
<\ &
\frac{10 n \big\|\Phi_1\big\|^2}{T_J J(\mathsf{K})^2
}
\frac{\exp\left[
-
\frac{T_J J(\mathsf{K})^2}{16n\big\|\Phi_1\big\|^2}
\left(\frac{4\bar J}{5J(\mathsf{K})}-1\right)^2\right]}
{\frac{4\bar J}{5J(\mathsf{K})}-1}\\
<\ &
\frac{80 n^2 \big\|\Phi_1\big\|^4}{
T_J^2 J(\mathsf{K})^4}
\frac{1}{\left(4\bar J/(5J(\mathsf{K}))-1\right)^3}, \\
\int_{\bar{J}/J(\mathsf{K})}^{+\infty}
\exp\!\left[
-\frac{1}{3}
\!\left(\!
\frac{\varepsilon T_JJ(\mathsf{K})}{5N\operatorname{tr}\!\big(\Phi_{\rho_W}\big)}\!-\!2\right)
\!\right]
\,d\varepsilon
=\ &
\frac{15e^{\frac{2}{3}} N\operatorname{tr}\!\big(\Phi_{\rho_W}\big)}{T_J J(\mathsf{K})}
\exp \!\left(\!
-
\frac{\bar{J} T_J}{15N\operatorname{tr}\!\big(\Phi_{\rho_W}\big)}
\right) \\
<\ &
\frac{225 N^2\operatorname{tr}\!\big(\Phi_{\rho_W}\big)^2}{T_J^2 J(\mathsf{K})^2}
\cdot\frac{J(\mathsf{K})}{\bar{J}}.
\end{align*}
Finally, by Lemma \ref{lemma:Phi_trace_norm_bound} and the 
condition on $\bar{J}$, we see that
\begin{align*}
\mathbb{E}\left[
\mu_i(T_J)-\min\!\left\{\mu_i(T_J),\bar J\right\}\right]
\leq\,&
\frac{J(\mathsf{K})}{T_J^2}
\bigg[10
\!\left(\!\frac{2\varphi(\mathsf{K})}{1 \!-\!\rho(A_{\mathsf{K}})}\!\right)^{\!\!4}
\!+\!
80n^2
\!\left(\!\frac{2\varphi(\mathsf{K})}{1\!-\!\rho(A_{\mathsf{K}})}\!\right)^{\!\!8}
\!+\!
\frac{90 N^2}{(1\!-\!\rho_W)^2}
\bigg] \\
\leq\ &
\frac{90 J(\mathsf{K})}{T_J^2}
\left[
n^2\left(\frac{2\varphi(\mathsf{K})}{1-\rho(A_{\mathsf{K}})}\right)^8
+\frac{N^2}{(1-\rho_W)^2}
\right]\\
\leq\ &
\frac{90 J(\mathsf{K})}{T_J^2}
\left[
n^2\beta_0^4
+\frac{N^2}{(1-\rho_W)^2}
\right].
\end{align*}
The inequality $\mathbb{E}\left[
\mu_i(T_J)-\min\!\left\{\mu_i(T_J),\bar J\right\}\right]\geq 0$ is obvious.

\bibliographystyle{IEEEtran.bst}
\bibliography{references.bib}
\end{document}